\theoremstyle{plain}
\newtheorem{theorem}{Theorem}
\newtheorem{lemma}[theorem]{Lemma}
\newtheorem{fact}[theorem]{Fact}
\newtheorem{claim}[theorem]{Claim}
\newtheorem{definition}[theorem]{Definition}
\newtheorem{remark}[theorem]{Remark}
\newenvironment{proofof}[1]{\begin{trivlist} \item {\bf Proof
#1:~~}}
  {\qed\end{trivlist}}
\newcommand{\eps}{\varepsilon}
\renewcommand{\Pr}[1]{\ensuremath{\mathbf{Pr}[#1]}}
\newcommand{\ind}[1]{\ensuremath{\mathds{1}\left[#1\right]}}
\newcommand{\Ex}[1]{\ensuremath{\mathbb{E}\left[#1\right]}}
\newcommand{\Exu}[2]{\ensuremath{\mathbb{E}_{#1}\left[#2\right]}}
\DeclareMathOperator{\argmax}{arg\,max}
\newcommand{\InsertF}{{\textsc{Insert}}}
\newcommand{\DeleteF}{{\textsc{Delete}}}
\newcommand{\LevelF}{{\textsc{ConstructLevel}}}
\newcommand{\CalcSampleCountF}{{\textsc{CalcSampleCount}}}
\newcommand{\ReduceMeanF}{{\textsc{ReduceMean}}}
\newcommand{\FilterF}{{\textsc{Filter}}}
\newcommand{\mO}{O}
\newcommand{\algorithmicbreak}{\textbf{break}}
\newcommand{\Break}{\algorithmicbreak}
\newcommand{\bR}{\ensuremath{\mathbf{R}}}
\newcommand{\bQ}{\ensuremath{\mathbf{Q}}}
\newcommand{\Rp}{\ensuremath{\overline{R}}}
\newcommand{\bS}{\ensuremath{\mathbf{S}}}
\newcommand{\bM}{\ensuremath{\mathbf{m}}}
\newcommand{\bP}{\ensuremath{\mathbf{P}}}
\newcommand{\bH}{\ensuremath{\mathbf{H}}}
\newcommand{\bX}{\ensuremath{\mathbf{X}}}
\newcommand{\bRp}{\ensuremath{\mathbf{\Rp}}}
\newcommand{\bT}{\ensuremath{\mathbf{T}}}
\newcommand{\bL}{\ensuremath{\mathbf{L}}}
\newcommand{\Rc}{\ensuremath{\widehat{R}}}
\newcommand{\bRc}{\ensuremath{\mathbf{\Rc}}}
\newcommand{\epsDel}{{\epsilon_{\textnormal{del}}}}
\newcommand{\epsOpt}{{\epsilon_{\textnormal{opt}}}}
\newcommand{\epsSamp}{{\epsilon_{\textnormal{sam}}}}
\newcommand{\optSet}{{G_{\textnormal{opt}}}}
\newcommand{\opt}{{\text{OPT}}}
\newcommand{\sol}{{\text{Sol}}}
\newcommand{\pred}{{\text{pred}}}
\newcommand{\epsBuck}{{\epsilon_{\textnormal{buck}}}}
\newcommand{\rbr}[1]{\left(\,#1\,\right)}
\newcommand{\cbr}[1]{\left\{\,#1\,\right\}}
\newcommand{\ceil}[1]{\left\lceil\,#1\,\right\rceil}
\renewcommand{\Pr}[1]{\ensuremath{\mathbb{P}\left[#1\right]}}
\newcommand{\Pru}[2]{\ensuremath{\mathbb{P}_{#1}\left[#2\right]}}
\newcommand{\ReconstructF}{{\textsc{Reconstruct}}}
\newcommand{\InitF}{{\textsc{Init}}}
\newcommand{\poly}{\text{poly}}
\newcommand{\CasesIf}{{\quad \text{If} \quad }}
\newcommand{\paranth}[1]{{\left(#1 \right)}}
\newcommand{\abs}[1]{{\left|#1 \right|}}
\title{Dynamic Constrained Submodular Optimization with Polylogarithmic Update Time
\footnote{
Appears in ICML'23. 
An earlier version of this paper was submitted to the SODA'23 conference
in July 2022. In February 2023, the authors of \cite{DBLP:conf/nips/LattanziMNTZ20} contacted us to mention that one of the authors was a referee for our SODA'23 submission, they agree with a bug and they will have a fix for it in their revised arxiv paper~\cite{lattanzi2020fully}.
}
}
\author{ 
Kiarash Banihashem\thanks{Computer Science Department, University of Maryland, College Park, MD, USA. {\tt kiarash@umd.edu}.}
\and
Leyla Biabani\thanks{Department of Mathematics and Computer Science, TU Eindhoven, the Netherlands. {\tt l.biabani@tue.nl}.}
\and
Samira Goudarzi \thanks{Computer Science Department, University of Maryland, College Park, MD, USA. {\tt samirag@umd.edu}. }
\and
MohammadTaghi Hajiaghayi\thanks{Computer Science Department, University of Maryland, College Park, MD, USA. {\tt hajiagha@cs.umd.edu}.}
\and
Peyman Jabbarzade\thanks{Computer Science Department, University of Maryland, College Park, MD, USA. {\tt peymanj@cs.umd.edu}.}
\and
Morteza Monemizadeh\thanks{Department of Mathematics and Computer Science, TU Eindhoven, the Netherlands. {\tt m.monemizadeh@tue.nl}.}
\and
}
\begin{document}
\maketitle
\begin{abstract}
Maximizing a monotone submodular function under cardinality constraint $k$ is 
a core problem in machine learning and database with many basic applications, including video and data summarization, recommendation systems, feature extraction, exemplar clustering, 
and coverage problems. We study this classic problem in the fully dynamic model where 
a stream of insertions and deletions of elements of an underlying ground set is given and 
the goal is to maintain  an approximate solution using a fast update time.

A recent paper at NeurIPS'20 by Lattanzi, Mitrovic, Norouzi{-}Fard, Tarnawski, Zadimoghaddam~\cite{DBLP:conf/nips/LattanziMNTZ20} 
claims to obtain a dynamic algorithm for this problem
with a $(\frac{1}{2} -\epsilon)$ approximation ratio and
a
query complexity
bounded by
$\mathrm{poly}(\log(n),\log(k),\epsilon^{-1})$.
However, as we explain in this paper, the analysis has some important gaps.
Having a dynamic algorithm for the problem with polylogarithmic update time is even more important in light of a recent result by Chen and Peng~\cite{DBLP:conf/stoc/ChenP22} at STOC'22 
who show a matching lower bound for the problem
--
any randomized algorithm 
with a
$\frac{1}{2}+\epsilon$ 
approximation ratio
must have
an amortized query complexity 
that is
polynomial in $n$.

In this paper, we develop a simpler algorithm
for the problem that maintains a
$(\frac{1}{2}-\epsilon)$-approximate solution 
for submodular maximization under cardinality constraint $k$ using a polylogarithmic amortized update time. 
\end{abstract}

\section{Introduction}  
The problem of maximizing a monotone submodular function under cardinality constraint $k$ is defined as follows: 
Let $f: 2^V \rightarrow \mathbb{R}^{+}$ be a non-negative monotone 
submodular function defined on subsets of a ground set $V$. 
Let $k\in \mathbb{N}$ be a parameter. 
We are asked to return a set $X$ of at most $k$ elements such that $f(X)$ is maximum among all $k$ subsets of $V$. 
This problem is at the core of machine learning~\cite{DBLP:conf/nips/ElenbergDFK17,DBLP:conf/nips/Mitrovic0F0K19,DBLP:journals/spm/TohidiACGLK20}, 
data mining~\cite{DBLP:journals/arobots/WuT22,DBLP:conf/ijcai/AshkanKBW15,DBLP:journals/corr/abs-1906-11285}, and database~\cite{DBLP:conf/icml/BateniCEFMR19,DBLP:journals/neuroimage/SalehiKSSC18} with many basic applications including 
video and data summarization~\cite{DBLP:conf/nips/FeldmanK018}, recommendation systems~\cite{DBLP:conf/aaai/ParambathVC18,DBLP:conf/ijcai/AshkanKBW15,DBLP:journals/corr/abs-1906-11285,DBLP:conf/dexa/BenouaretAR19}, feature extraction~\cite{DBLP:conf/icml/BateniCEFMR19}, 
spatial search  and map exploration~\cite{DBLP:journals/arobots/WuT22}
exemplar clustering~\cite{DBLP:journals/neuroimage/SalehiKSSC18,DBLP:conf/kdd/BadanidiyuruMKK14}, sparse regression~\cite{DBLP:journals/jmlr/DasK18,DBLP:journals/sensors/TsaiT20,DBLP:journals/arobots/TsengM17} 
and coverage problems~\cite{DBLP:journals/tcs/Bar-IlanKP01,DBLP:journals/dam/Sagnol13}, to name a few.

For the submodular maximization problem under cardinality constraint $k$, the celebrated greedy algorithm due to Fisher, Nemhauser, and Wolsey~\cite{DBLP:journals/mp/NemhauserWF78} achieves  an approximation ratio of $1-1/e \approx 0.63$, which is optimal assuming $P \ne NP$.
However, this classic algorithm is inefficient when applied to \emph{modern big data}
settings, given the unique challenges of working with massive datasets. 
Motivated by these challenges, in recent years there has been a surge of interest in considering the submodular maximization problem 
under a variety of computational models such as streaming models \cite{DBLP:conf/kdd/BadanidiyuruMKK14,DBLP:conf/icml/0001MZLK19} 
and distributed models \cite{DBLP:conf/focs/BarbosaENW16,DBLP:conf/soda/LiuV19,DBLP:journals/topc/KumarMVV15,DBLP:journals/mst/McGregorV19}.

\paragraph{Related work.} 
Badanidiyuru, Mirzasoleiman, Karbasi and Krause~\cite{DBLP:conf/kdd/BadanidiyuruMKK14} 
were the first to study this problem in the insertion-only streaming model and developed a $(\frac{1}{2}-\epsilon)$-approximate 
streaming algorithm for this problem using space in $O(k\textrm{poly}(\log(n), \log(k),\frac{ 1}{\epsilon}))$. 
Recently,  Mirzasoleiman, Karbasi, and Krause \cite{DBLP:conf/icml/MirzasoleimanK017} 
studied this problem in the insertion-deletion streaming model where they obtained a $(\frac{1}{2}-\epsilon)$-approximate algorithm using $O(d^2(k\epsilon^{-1}\log k)^2)$ space and $O(dk\epsilon^{-1}\log k)$ average update time, where $d$ is an upper-bound for the number of deletions that are allowed. 
A follow-up by Kazemi, Zadimoghaddam, and Karbasi \cite{DBLP:conf/icml/0001ZK18} improved the space complexity and the average update of the first work down to $O(k\log k+d\log^2 k)$ and $O(dk\log^2 k+d\log^3 k)$.

One bottleneck of these two works is the polynomial dependency of their space and time complexities on the number of deletions 
since it takes too much time to (re)compute the solution after every insertion or deletion.  
Indeed, if the number of deletions is linear in $n = |V|$ or higher (say, $d=\Omega(n)$), it is better to rerun the offline 
algorithm~\cite{DBLP:journals/mp/NemhauserWF78} after every insertion and deletion. 
Despite their use cases, the above algorithms are unsuitable for many modern applications where data is highly \emph{dynamic}. 
For these applications such as data subset selection problem~\cite{DBLP:conf/nips/ElhamifarK17,DBLP:conf/icml/Elhamifar19}, 
movie recommendation system~\cite{DBLP:conf/wsdm/OhsakaM22,DBLP:conf/nips/00030K17}, influence maximization in social networks~\cite{DBLP:conf/kdd/ChenWY09,DBLP:journals/ton/TongWTD17,DBLP:conf/infocom/ZhangSWDT17},  
elements are continuously added and deleted, 
prompting the need for algorithms that can efficiently handle both insertions and deletions 
at the same time.

\paragraph{Submodular maximization in dynamic model.} 
Motivated by these interests, in this paper, we study  the submodular maximization problem under the cardinality constraint $k$ 
in the \emph{dynamic model}~\cite{DBLP:journals/talg/BernsteinFH21,DBLP:conf/sand/HanauerH022,DBLP:conf/soda/BhattacharyaHNW21,DBLP:journals/algorithmica/BhattacharyaCH20}. 
In this model, we are given a stream of 
insertions and deletions of elements of the underlying ground set $V$ and 
we have an oracle access to a function $f$ that returns value $f(A)$  for every subset $A \subseteq V$.
The goal is to maintain a good approximate set 
of at most $k$ elements after every insertion and deletion using a fast query complexity.

Recently, Chen and Peng~\cite{DBLP:conf/stoc/ChenP22} show 
a lower bound for submodular maximization in the dynamic model:
any randomized algorithm that achieves $(\frac{1}{2}+\epsilon)$-approximation ratio  
for dynamic submodular maximization under cardinality constraint $k$ 
requires amortized query complexity $\frac{n^{\tilde{\Omega}(\epsilon)}}{k^3}$. 
Therefore, the important question that we seek to answer is the following:

\begin{tcolorbox}[width=\linewidth, colback=white!90!gray,boxrule=0pt]
\textbf{Polylogarithmic question:}
Is there any dynamic algorithm 
for submodular maximization that maintains a $(\frac{1}{2}-\epsilon)$-approximate solution under cardinality constraint $k$ 
with a query complexity $\textrm{poly}(\log(n),\log(k),1/\epsilon)$  
where $n$ is the size of underlying ground set $V$? 
\end{tcolorbox}

Very recently,  Lattanzi, Mitrovic, Norouzi-Fard, Tarnawski, and Zadimoghaddam~\cite{DBLP:conf/nips/LattanziMNTZ20} 
and Monemizadeh~\cite{DBLP:conf/nips/Monemizadeh20} tried to answer this question. 
The first worked claimed to develop a dynamic algorithm 
that maintains an expected amortized $(\frac{1}{2}-\epsilon)$-approximate solution for this problem 
using an amortized expected query complexity of $\textrm{poly}(\log(n),\log(k),1/\epsilon)$. 
The second paper proposed a randomized dynamic $(\frac{1}{2}-\epsilon)$-approximation algorithm with 
expected amortized query complexity of $O(k^2 \textrm{poly}(\log(n),1/\epsilon))$, 
so it partially answers our question, but not fully. 
Therefore, only the first paper~\cite{DBLP:conf/nips/LattanziMNTZ20}  claims to answer our question. 
However, as we explain later in this paper, the proof of this dynamic algorithm has important gaps in the analysis.

In this paper, we answer the polylogarithmic question affirmatively;
we develop a new simpler algorithm 
which maintains a $(\frac{1}{2}-\epsilon)$-approximate solution 
for submodular maximization problem under cardinality constraint $k$ with a poly-logarithmic amortized query complexity.

\subsection{Preliminaries}
\textbf{Submodular functions.}
Given a finite set of elements $V$,
a function $f:2^{V} \to \mathbb{R}$ is \emph{submodular} if
\begin{align*}
    f(A \cup B) + f(A \cap B) \le f(A) + f(B).
\end{align*}
for all $A, B \subseteq V$.
In addition, function $f$ is called \emph{monotone} if
$f(A) \le f(B)$ for all $A \subseteq B$, and
is called \emph{nonnegative} if
$f(A) \ge 0$ for all $A \subseteq V$. 
In this paper, we consider non-negative monotone submodular functions. In addition, we assume that
$f(\emptyset) = 0$.\footnote{This assumption is without loss of generality since if $f(\emptyset) > 0$, then
defining the function $g$ as $g(V) := f(V) - f(\emptyset)$, $g$ will be a monotone submodular function as well.
In addition, optimizing $f$ is equivalent to optimizing $g$, and any algorithm obtaining approximation ratio $\alpha < 1$ for $g$, would obtain approximation ratio at least $\alpha$ for $f$ as well.}

\textbf{Oracle access.}
In this paper, we assume that our algorithm has access to an
\emph{oracle} that outputs value $f(A)$ when it is queried an arbitrary set $A$. 
We measure the \emph{time complexity} of a dynamic algorithm 
in terms of its \emph{query complexity} which is the number of queries it makes to the oracle.

\textbf{Dynamic model.}
We consider the classical dynamic model where we are given a sequence of insertions and deletions of elements, 
and the goal is to maintain an $\alpha$-approximate solution of size at most $k$ at any time $t$.
We assume without loss of generality that once 
an element
is deleted, 
it is never re-inserted. Indeed, the reinsertion of an element $e$ can always be simulated by inserting an identical copy of $e$.

We refer to the sequence of insertions and deletions as the \emph{update stream}, and assume that it is chosen by a non-adaptive adversary. The adversary is assumed to have knowledge of our algorithm, and can control the sequence of insertions and deletions, but does not have access to the random bits our algorithm uses.

\textbf{Notation.} Given integers $a, b$, we use the notation $[a, b]$ to denote the set $\{a, \dots, b\}$ and $[a]$ to denote $[1, a]$. We use $\ind{.}$ to denote the \emph{indicator function}, i.e., $\ind{A}$ equals one if $A$ is true and equals zero otherwise.
~\\
We use $\Pr{.}$ and $\Ex{.}$ to denote the probability and expectation, respectively. For an event $A$ satisfying $\Pr{A} > 0$, we use $\Pr{. | A}$ and $\Ex{. | A}$ to denote the \emph{conditional probability} and \emph{conditional expectation}.
In this paper, we frequently condition on the value of a random variable. 
To avoid confusion, we will often use bold letters to denote random variables and non-bold letters to denote their values, e.g., $\Ex{\mathbf{X} | \mathbf{Y} =Y}$ denotes the expectation of the random variable $\mathbf{X}$, conditioned on the random variable $\mathbf{Y}$ attaining the value $Y$.
~\\
Given a submodular function $f$ and sets $A$ and $B$,
we will use $f(A|B)$ to denote
the value $f(A \cup B) - f(A)$. 

\section{Polylogarithmic algorithm}
\subsection{Prior work}
\label{sec:existing_problems}
In this section, we highlight three main problems with the analysis of the algorithm in~\cite{DBLP:conf/nips/LattanziMNTZ20}.

On a high level, the algorithm obtains
a solution \sol{} as union of sets $S_{i, j}$ such that
each $S_{i, j}$ is claimed to satisfy
$\Ex{f(S_{i, j} | S_{\pred(i, j)})} \ge (1-\epsilon) \tau |S_{i, j}|$, where
$\tau=\frac{\opt}{2k}$, the value $OPT$ is the optimal value of any subset of ground set $V$ of size at most $k$, and 
$S_{\pred(i, j)}$ denotes the set of elements
sampled before $S_{i, j}$. The algorithm stops when it has sampled $k$ elements,
or when it is no longer possible to sample any element with the desired property.

Next, we explain three issues with the algorithm's analysis.

\paragraph{First Issue: Incorrect conditioning in proof.}
The proof of Theorem 5.1 (that proves the $(1/2 - \epsilon)$-approximation guarantee) 
does not consider the effect of conditional probability when bounding the submodular value of the samples.
Specifically, to prove that the approximation factor of the reported  set is $\frac{1}{2} - \epsilon$,
the proof considers two cases. 
The first case is when there are $k$ elements sampled.
In this case, it is claimed that the submodular value of output set $\sol$ satisfies $f(\sol) \ge \frac{\opt}{2} - \epsilon$
since, in expectation, each $S_{i, j}$ contributed $(1-\epsilon) \tau |S_{i, j}|$.
Given the assumption $|\sol| = k$ however, one
needs to
analyze the \emph{conditional expectation} of $f(S_{i, j} | S_{\pred(i, j)})$. In other words,
just because $\Ex{f(S_{i, j} | S_{\pred(i, j)})} \ge (1-\epsilon) \tau |S_{i, j}|$, 
does not mean that 
$\Ex{f(S_{i, j} | S_{\pred(i, j)}) \Big| |\sol| = k} \ge (1-\epsilon) \tau |S_{i, j}|$.

Intuitively, one can expect that whenever samples $S_{i, j}$ have high quality,
i.e., $f(S_{i, j} | S_{\pred(i, j)})$ is high, the algorithm would be more likely to terminate with less than $k$ samples since the obtained samples already have high quality and
the remaining elements may contribute little to them. 
This means that the condition $|\sol| = k$ may introduce a negative bias on the value of
$f(S_{i, j} | S_{\pred(i, j)})$.%

\paragraph{Second Issue: Lack of analysis for biased samples.}
Another 
issue is 
that it is not clear whether the identity
\begin{equation}
\Ex{f(S_{i, j} | S_{\pred(i, j)})} \ge (1-\epsilon)  \tau |S_{i, j}|
\label{eq:mywuer}
\end{equation}
holds in the early step of the analysis, even without the extra condition that $|\sol| = k$.
The analysis
considers the value of
$S_{i, j}$ and $S_{\pred(i, j)}$
the last time $\textsc{Level-Construct}(\ell)$
was called. It is then claimed
that because $S_{i, j}$ was obtained
by invoking the peeling algorithm,
\eqref{eq:mywuer} should hold. 
Importantly however, the analysis of the peeling
algorithm assumes that the chosen elements
we obtained by sampling
\emph{uniformly at random}.
While the claim may hold if the values
of $S_{\pred(i, j)}$ and $A_{i, j}$ (the set that $S_{i, j}$ is sampled from) were fixed, here
the values $S_{\pred(i, j)}$ can only be calculated randomly, by looking back at the last time
$\textsc{Level-Construct}(\ell)$ is called. Since this time itself depends on the value of $S_{i, j}$, this looking back process may introduce bias, and it is not clear how the bias should be handled.

To illustrate the issue, consider the following simple decremental process. Assume that we are given a set $V$ and we sample $k$ elements from $V$ to obtain a set $S$. An adversary then deletes an element $e$ from $V$ and,
if an element in $S$ is deleted, we take $k$ fresh samples from $V$.
Let $(V_1, S_1)$ and $(V_2, S_2)$ denote the value of $(V, S)$ before and after the deletion respectively and let $(V', S')$ denote the value of $(V, S)$ the last time we sampled $k$ elements. 
At first glance,
it may seem that the elements of $S'$ are $k$ random elements from $V'$, this is not the case. Indeed, if $V'=V_1$, then we know that $k$ fresh samples were \emph{not taken}, which implies $e\notin S'$. Since $k$ samples taken uniformly at random would contain $e$ with positive probability, this shows that the samples are not uniformly at random. 
\footnote{
  The bias here is reminiscent of the so-called random incidence paradox,
  also known as the waiting time paradox and the related inspection paradox \cite{ross2003inspection}.
  Consider a Poisson process that has been running forever and assume that
  the arrival rate is $\lambda=4$ per hour.
  This means that the average length between two successive arrivals
  is $\frac{1}{\lambda}$.
  One possible way to estimate this arrival
  rate is to pick an arbitrary point $x$, e.g., $x=0$, and measure the length of time between the
  first arrival after $x$ and the last arrival before $x$.
  It can be shown however that this results in an estimator with expectation $\frac{2}{\lambda}$
  and is therefore biased.
  To understand why, consider the segmentation of the real line caused by arrivals,
  i.e., each segment represents the interval between two successive arrivals.
  Intuitively, since we are choosing $x$ as a random point in the real line,
  $x$ is more likely to fall in the longer intervals than the short ones. This results in a bias
  in favor of the longer intervals in the estimator, leading to a larger estimate of $\frac{2}{\lambda}$. 
  }

The algorithm of \cite{DBLP:conf/nips/LattanziMNTZ20} has a ``resampling'' condition simliar to what is described above.
In each level $\ell$, they maintain a series of buckets,
where the samples $S_{i, \ell}$ on each bucket are obtained by repeatedly invoking the peeling algorithm and depend on the samples of the previous samples $S_{j, \ell}$ for $j< i$. 
Once an $\epsilon$-fraction of $S_{i, \ell}$ is deleted, the entire level $\ell$ is reconstructed.

\textbf{Third Issue: Expectation bound.
\footnote{We thank an anonymous reviewer in SODA'23 for pointing out this issue.}
}
Lemma $C.4$ 
bounds the expectation of the ratio of two expressions, 
but the applications of this lemma assumes that the bound applies to the ratios of the expectations, which is not the same thing.

Given the above issues, a natural question is whether it is even possible
to obtain $\frac{1}{2} - \epsilon$ approximation factor with polylogarithmic query complexity. 
Indeed, as we explained in above, Chen and Peng~\cite{DBLP:conf/stoc/ChenP22} show 
that obtaining $\frac{1}{2} + \epsilon$ approximation requires
an query complexity that is polynomial in $n$ (in fact, requires amortized query complexity $\frac{n^{\tilde{\Omega}(\epsilon)}}{k^3}$),
suggesting that $\frac{1}{2} -\epsilon$ may be impossible as well.
In this paper, we show that this is not the case.
We develop a  simpler algorithm 
than that proposed in~\cite{DBLP:conf/nips/LattanziMNTZ20} 
which maintains an $(\frac{1}{2}-\epsilon)$-approximate solution 
for submodular maximization under cardinality constraint $k$ with a poly-logarithmic query complexity. 
We emphasize that while our algorithm is simpler,
its requires careful analysis to avoid the aforementioned issues.

\subsection{Overview of our algorithm}
\paragraph{Offline algorithm.} 
In this section, we first present the offline version of our algorithm. 
Later, we show how to support insertions and deletions.
We first remove all elements that have submodular value less than $\tau = OPT/2k$ and 
let $R_1$ be the remaining elements. 
Let $G_0 = \emptyset$ and $i=1$. In each iteration, we
bucket
the elements of $R_i$ in 
based on  their relative
marginal gain to $G_{i-1}$, i.e., $f(e | G_{i-1})$ for $e \in R_i$, such that
all the elements in the same bucket have the same value of $f(e |G_{i-1})$
up to a factor of $1+\epsBuck$.
We use $R_i^{(b(i))}$ to denote
the largest (maximum size) bucket.

Next, for a suitable number $m_i$,
we take a uniformly random subset of size $m_i$ from the largest bucket and we denote it by $S_i$.
We then add $S_i$ to $G_{i-1}$ to form $G_{i}$. 
Next, we remove all elements
$f(e | G_{i}) < \tau$ from $R_{i}$ to form $R_{i+1}$ using the function: 
    \[
    	\begin{split}
		&R_{i+1} = \FilterF{}(R_i, G_i, \tau) = 
        \begin{cases}
			    \{e \in R_i: f(e | G_i) \ge \tau\}, & \text{if $|G_i| < k$}.\\
			    \emptyset, & \text{otherwise}.
	  	\end{cases}
	\end{split}
   \]

Later, we repeat the above process by setting $i = i+1$.
The process is continued until there is no element with $f(e | G_{i}) \ge \tau$ left,
or we have chosen $k$ elements.
The final value of $G_{i}$, denoted by $G_{T}$, is reported as the output of the algorithm.
After our offline algorithm has executed, sets $R_{0}, \dots, R_{T+1}, S_{1}, \dots, S_{T}, G_{0}, \dots G_{T}$ will
satisfy the following properties

\begin{tcolorbox}[width=\linewidth, colback=white!90!gray,boxrule=0pt]
\textbf{Properties:} \\
(1) $R_i = \FilterF{}(R_{i-1}, G_{i-1}, \tau)$ for $i\in [1, T+1]$.\\
(2) Each $S_i$ is a uniformly random subset of size $m_i$ from
    the largest bucket of $R_i$, denoted by $R_i^{(b(i))}$ and
    $G_{i} = \bigcup_{j\le i}S_j$.
\end{tcolorbox}

\paragraph{Dynamic algorithm.}
To handle insertions and deletions, we maintain modifications of the above properties after the updates in the stream.
These properties
will ensure that our algorithm has an approximation factor of $\frac{1}{2} - \epsilon$ using 
expected amortized query complexity that is polylogarithmic in $n, k$.

To handle insertions,
for each level $i$,
we maintain a set $\Rp_i \supseteq R_i$ that will temporarily
hold the elements in a level before they are processed.
More formally, each time an element $v$ is inserted,
we add $v$ to all $\Rp_i$ such that
$f(v | G_{i-1}) \ge \tau$.
Once the size of $\Rp_i$ is at least $\frac{3}{2}|R_i|$,
we reconstruct the level $i$.
A formal pseudocode is shown in Algorithm~\ref{alg:insert}.
~\\
To handle deletions,
we keep track of the deleted elements
by adding them to a set $D$.
For each level $i$, we keep track of the number of elements
deleted in the bucket $R_i^{(b(i))}$ that we had chosen
to sample $S_i$ from. Once an $\epsilon$-fraction of
these elements is deleted, we restart the offline algorithm from level $i$
by invoking \ReconstructF{}(i).
A formal pseudocode is provided in Algorithm \ref{alg:delete}.

We note that our construction
is different from
that of \cite{DBLP:conf/nips/LattanziMNTZ20} in a number of ways.
Firstly, their algorithm maintains
multiple buckets in each level and each bucket contains its own set of samples $S_{i, \ell}$,
where the set $S_{i, \ell}$ is sampled after $S_{i-1, \ell}$.
Once an $\epsilon$-fraction of $S_{j, \ell}$ is deleted for some $j$, the entire bucket $\ell$ is reconstructed.
Therefore, the resampling condition for $S_{i, \ell}$ depends on all $S_{j, \ell}$, including
$S_{i, \ell}$ itself and $S_{j, \ell}$ for $j> \ell$ which in turn depend on $S_{i, \ell}$. 
In contrast, our reconstruction condition for $S_{i}$ depends only on $S_{1}, \dots, S_{i-1}$ and does not depend on $S_{i}$ or any $S_{j}$ for $j > i$.
As we show in Section \ref{sec:invariant} (Lemma \ref{lm:uniform_lazy}),
this choice allows us
to prove that
the samples $S_i$ are \emph{always} a uniformly random subset of $R_i^{(b(i))}$.

Now, we explain how we choose sample size $m_i$ in each level. Our choice here is based on the peeling/threshold-sampling algorithm \cite{fahrbach2019submodular}, though we require new analysis in the proofs.
One possibility is to set $m_i=1$. This ensures that
$f(S_i | G_{i-1}) \ge |S_i| \cdot \tau$ since all the elements in $R_i$ are guaranteed
to satisfy $f(e | G_{i-1}) \ge \tau$.
The problem with this approach is that the number of levels $T$ can be polynomial in $k$
as we may need to sample every element of a solution set which is of size $k$ 
in a separate level. This would in turn lead to a polynomial query complexity.

Another extreme is to choose a very large integer, e.g., set $m_i=k$.
The problem with this approach however is that the sample $S_i$
may have low relative marginal gain to $G_{i-1}$, i.e.,
$f(S_{i} | G_{i-1})$ may be low. Indeed,
once we sample one element
$s_i$ from $R_i^{(b(i))}$, 
we expect the marginal gain of the remaining element
to drop since $f$ is submodular.

A good value of $m_i$ needs to balance the above trade off.
Let $\tau^{(i)}$ denote the minimum threshold corresponding to $R_i^{(b(i))}$, i.e.,
$e \in [\tau^{(i)}, (1+\epsilon)\tau^{(i)})$ for $e\in R_i^{(b(i))}$. 
Intuitively, we want to choose an $m_i$ such that
\begin{enumerate}
  \item In expectation, each element in $S_i$ adds at least
    $(1-\epsilon) \cdot \tau^{(i)}$ to the value of the output, i.e., 
    $f(S_{i} | G_{i-1}) \ge (1-\epsilon) \cdot |S_i| \cdot \tau^{(i)}$.
  \item The value $m_i$ is large enough to ensure that $|R_{i+1}|$
    is considerably smaller than $|R_i|$, ensuring that the number of levels is not too large.
\end{enumerate}
The first property is important for insuring the approximation guarantee of our algorithm,
while the second property controls its query complexity.

The main idea is to choose the largest $m_i$ that satisfies a modification of the first property.
Given a parameter $\epsilon$, we search for the largest integer $m_i$ such that
when sampling $m_i$ elements at random, the \emph{last element} has marginal gain at least
$\tau^{(i)}$ with probability at least $1-\epsilon$. 
Since the last element is likely to be the worst element because of submodularity, this ensures the first property.
In addition, since we chosen the largest integer $m_i$, a random sample
of the remaining elements should contribute $\tau^{(i)}$ with probability at most $1-\epsilon$ since
it is effectively the last elements of a sample of size $m_i + 1$.
This ensures that an $\epsilon$ fraction of the elements in $R_i^{(b(i))}$ will be removed
from the $b(i)$-th bucket in the next level, which in turn allows us to bound the number of levels.

To find the largest such $m_i$, we binary search
over the set of all possible values $m'$ for $m_i$.
For each $m'$, we test whether satisfies the first property above by sampling $S'$ for $\mO(\frac{1}{\epsilon^2} \cdot \log(k/\epsilon))$ trials, and testing whether its last element has marginal gain at least $\tau$ in more than $(1-\epsilon)$ fraction of the trials. 
A standard Chernoff bound then shows that $m_i$ satisfies the mentioned properties with high probability.

Finally, we relax the assumption of known \opt{} by maintaining multiple parallel
runs, indexed by $p \in \mathbb{Z}$.
For each run $p$, we will use the value
$\opt_p = (1+\epsOpt)^p$ for $\opt$ in the algorithm
and only insert elements with $f(e) \in [\epsilon \cdot \frac{\opt_p}{2k}, \opt_p]$.
We always output the set with maximum value of $f$ across all the runs.
This increases the query complexity of our algorithm by a factor of at most
$O(\log_{1+\epsOpt}(k/\epsilon))$
, since each element needs to be inserted into $\lceil\log_{1+\epsOpt} (2k/\epsilon)\rceil$ runs, 
and reduces the approximation guarantee by at most
$\epsilon$, since the discarded elements affect the solution by at most $\epsilon \cdot \opt$.

\subsection{Overview of techniques}
\paragraph{Approximation factor}
We start by giving an overview of our proof for the approximation factor of the algorithm.
As we show in Section \ref{sec:invariant} (Lemmas \ref{lm:invariant_filter} and \ref{lm:uniform_lazy}), the output of our algorithm satisfies the following important properties.

\begin{enumerate}
    \item 
      For all $i \in [0, T]$,
      defining $\Rc_i := \Rp_i \backslash D$,
      \begin{math}
        \Rc_{i + 1} = \FilterF{}(\Rc_i, G_i).
      \end{math}
      In addition, $\Rc_{T+1}=\emptyset$.
    \item 
      Conditioned on the values $S_1, \dots S_{i-1}$ and $m_{i}$, the set $S_i$ is a uniform subset of size $m_i$ from $R_i^{(b(i))}$. In other words,
      \[
      	\begin{split}
        &\Pr{\bS_i = S | \bT \ge i, \bS_1=S_1, \dots, \bS_{i-1}=S_{i-1}, \bM_i=m_i)}
        = \frac{1}{\binom{|R_i^{(b(i))}|}{m_i}}\ind{S \subseteq R_i^{(b(i))} \land |S| = m_i} \enspace ,	
	\end{split}
      \]
      where we have used bold letters to denote random variables and non-bold letters to denote their values.
\end{enumerate}

As mentioned earlier, the integers $m_i$ are chosen such that  in expectation, each element of $S_i$ contributes at least $(1-\epsilon)\tau$, i.e., 
$\Ex{f(S_i | G_{i-1})} \ge (1-\epsilon) \cdot |S_i| \cdot\tau$.
If the sample quality was deterministic, i.e,
$f(S_i | G_{i-1})$ was guaranteed to always be at least
$(1-\epsilon) \tau$, then we could have used a well-known  argument
that considers two separate
cases based on whether or not $G_T$ has $k$ elements.
If it has $k$ elements, then since each sample added, on average,
$(1-\epsilon)\tau$ to final $f$, then the claim would hold.
If it does not have $k$ elements, then since
$\Rc_{i + 1} = \FilterF{}(\Rc_i, G_i)$ and
$\Rc_{T+1} = \emptyset$, 
each element $e$ in the optimal set
must satisfy $f(e | G_T) < \tau$, which can then be used to show that
$f(G_T) \ge f(\optSet \cup G) - k\tau \ge \opt - k\tau = \frac{\opt}{2}$. 

However, the issue with this type of analysis is since conditioning on the event
$|G_T|=k$, would mean that we can no longer guarantee
$\Ex{f(S_i | G_{i-1})} \ge (1-\epsilon)\tau|S_i|$.
In other words, just because
$\Ex{f(S_i | G_{i-1}) | \bT \ge i} \ge (1-\epsilon)\tau|S_i|$ holds,
it does not mean that
$\Ex{f(S_i | G_{i-1}) \big| \bT \ge i, |G_T| =k} \ge (1-\epsilon)\tau|S_i|$ would hold as well.
Indeed, one can expect that if sample set $S_i$ has a high quality, i.e, 
$f(S_i |G_{i-1})$ is large, then the algorithm would be more likely to reach
$OPT/2$ with less than $k$ elements. 
The condition $|G_T| \le k$ therefore induces a negative bias on the quality of the samples.
Further complicating the analysis is the fact that some of the samples are deleted
and the output of our algorithm is $G_T \backslash D$, not $G_T$.

To address this issue, we introduce a novel relaxation function that may be of independent interest.
By carefully unifying the cases of $|G_T|=k$ and $|G_T| < k$,
we first show that our new relaxation function is a lower bound on $f(G_T \backslash D)$.
We then bound the contribution of each level to the relaxation function
to prove the desired $\frac{\opt}{2}-\epsilon$ bound.
We refer the reader to Section \ref{sec:proof_approx} for more details.

\paragraph{Query complexity.}
To bound the query complexity of our algorithm, we show that our choice of $m_i$ means the number of levels is polylogarithmic in $n, k$. 
The choice ensures that at least $\epsilon \cdot |R_i^{(b(i))}|$ elements in $R_i^{(b(i))}$ will either be moved to a \emph{lower} bucket in $R_{i+1}$, or they will not appear in $R_{i+1}$ altogether.
Since the number of bueckets is bounded by $\log_{1+\epsBuck}(\frac{\opt}{\tau}) = \log_{1+\epsBuck}(2k)$, each element can only go down at most $\log_{1+\epsBuck}(2k)$ times.
Our proof formalizes this intuition, though there are some subtleties given that the above guarantees hold only in expectation.

The polylogarithmic bound on the number of levels implies that
each call to \ReconstructF{}(i) makes at most
$\mO(|R_i| \cdot \poly(\log(n), \log(k), \frac{1}{\epsilon}))$ queries.
Given the choice of reconstruction conditions in our algorithm, when an insertion 
or deletion triggers a call to \ReconstructF{}(i), we can charge it back to at least $\frac{|R_i|}{\poly(log(k), \frac{1}{\epsilon})}$ elements that triggered it. 
This implies that each $\ReconstructF{}(i)$ charges back a polylogarithmic number of queries to each element. 
In addition, each update
can only be charged once by each level (the first time the level is reconstructed after this update).
Since the number of levels is polylogarithmic, this implies a polylogarithmic bound on the query complexity. 
We refer the reader to Section \ref{sec:query} for a more detailed analysis.

\textbf{Comparison with prior work.}
We briefly highlight the main
differences in the algorithm and analysis that allow
us to sidestep the aforementioned issues
of the analysis in~\cite{DBLP:conf/nips/LattanziMNTZ20}. 

The first main difference is the design of multi-level structure and reconstruction condition. Our structure only samples elements once per level, and our reconstruction condition considers the effect of deletions on the set of elements these samples were chosen from, not the actual elements themselves. This ensures that whether or not a level is reconstructed is independent of its samples. 
This effectively resolves the biasing issue discussed in Section \ref{sec:existing_problems}.
A consequence of these changes, however, is that the the existing analysis needs to be altered significantly for both the approximation guarantee and query complexity.

The second main difference is the introduction of a relaxation function that always lower bounds the value of the output $f(G_T \backslash D)$. 
The relaxation function unifies the two cases of $|G_T| =k$ and
$|G_T| < k$, allowing us to sidestep the conditioning issue discussed in Section \ref{sec:existing_problems}.

\section{Proposed algorithm}
\label{sec:lazy}
In this section, we present
our dynamic algorithm for submodular maximization under cardinality constraint $k$ using 
polylogarithmic number of oracle queries. 
We start by presenting an offline algorithm for the problem
in Section \ref{sec:alg_level}. 
We then show how to extend this to a dynamic setting by considering lazy updates
in Sections \ref{sec:alg_insert}, and \ref{sec:alg_delete}.
In these sections, we assume access to a parameter $\opt$, which
estimates the value of the optimal solution up to a factor of $1+\epsilon$. More formally, we assume that $f(\optSet) \le \opt{} \le (1+\epsOpt)f(\optSet{})$ where $\optSet$ denotes the optimal solution.
In section \ref{sec:alg_parallel}, we show how to remove this restriction by considering parallel runs.

\subsection{Offline algorithm}
\label{sec:alg_level}
We now present an offline algorithm for the problem.
In this section, we assume that the value of the optimal solution
which we denote by \opt{} is known to the algorithm. As we will show in our analysis,
\opt{} does not need to be an exact estimate and our results still hold
as long as it is approximately correct. Later in Section \ref{sec:alg_parallel},
we will remove this restriction all together.

Setting the threshold $\tau$ to be $\frac{\opt}{2k}$,
our algorithm starts by removing all elements that have submodular value less than $\tau$,
collecting the remaining elements in the set $R_1$. 
Defining the set $G_0$ as $\emptyset$ and starting with $i := 1$,
in each iteration we group the samples in buckets based on their relative
marginal gain to $G_{i-1}$, i.e., $f(e | G_{i-1})$, such that
all the elements in the same bucket have the same value of $f(e |G_{i-1})$
up to a factor of $1+\epsBuck$.
In other words, all the elements $e$ in the same bucket
satisfy $f(e | G_{i-1}) \in [\tau^{(i)}, (1+\epsBuck) \cdot \tau^{(i)}) $
for some threshold $\tau^{(i)}$.
This requires
$\log_{1+\epsBuck}(2k)$ buckets as 
\begin{enumerate}
  \item We assume that $f(e|G_{i-1}) \ge \tau$ for all $e \in R_{i}$. 
    This was the case for $i=1$, and we will also ensure it to be the case for $i\ge1$
    by removing all elements with $f(e | G_{i-1}) < \tau$ from $R_i$.
  \item All elements have $f(e|G_{i-1}) \le f(e) \le \opt = 2k\cdot\tau$.
\end{enumerate}
Next, we take a uniformly random subset of $m_i$ from the largest bucket
for a suitable number $m_i$, forming the samples $S_i$. 
We will explain how to choose $m_i$ in Section \ref{sec:alg_sample}.
We then add $S_i$ to $G_{i-1}$ to form $G_{i}$, remove all elements
$f(e | G_{i}) \le \tau$ from $R_{i}$ to form $R_{i+1}$ and repeat the above process
with $i\gets i+1$.
The process is continued until there are no elements with $f(e | G_{i}) \ge \tau$ left,
or we have chosen $k$ elements.
The final value of $G_{i}$, denoted by $G_{T}$, is given as the algorithm's output.
A formal pseudocode is provided in Algorithm \ref{alg:offline}.
The values $\Rp$ and $D$ in the algorithm will be necessary for the next section
and for now, we can think of them as $\Rp_i = R_i$ and $D = \emptyset$. 
As seen in Algorithm \ref{alg:offline},
the main part of our design, which is the function \ReconstructF{},
is more general than the description above and can, effectively,
start midway from a set $R_{i}$ for $i \ge 1$.
This property will play a crucial role in the upcoming sections as we frequently require reconstructing a portion of our data structure to handle insertions and deletions.

\begin{algorithm}[ht]
  \caption{Offline algorithm}
  \label{alg:offline}
  \begin{algorithmic}[1]
    \Procedure{\InitF{}}{$V, \opt$}
      \State $R_0 \gets V$, \quad $\tau \gets \frac{\opt}{2k}$, \quad $G_0 \gets \emptyset$
      \State $D \gets \emptyset$, \quad $\Rp_0 \gets R_0$
      \State $R_1 \gets \FilterF{}(R_0, G_0, \tau)$, \quad $\Rp_1 \gets R_1$
      \State $\ReconstructF{}(1)$
    \EndProcedure
    \Procedure{\ReconstructF}{$i$}
      \State $R_i \gets \Rp_i \backslash D$, \quad $\Rp_i \gets R_i$
      \label{line:level_first_line}
      \While{$R_i \ne \emptyset$}\label{line:level_break}
        \For{$j \in [0, \left\lfloor  \log_{1+\epsBuck} (2k)\right\rfloor]$}
          \State $R_i^{(j)} \gets \{e \in R_i: \frac{f(e | G_{i-1})}{\tau} \in [(1+\epsBuck)^{j}, (1+\epsBuck)^{j+1})\}$
        \EndFor
        \State $b(i) \in \argmax_{j} \abs{R_i^{(j)}}$ 
        \State $\tau^{(i)} \gets (1+\epsBuck)^{b(i)} \cdot \tau$ 
        \State $m_i \gets \CalcSampleCountF{}(R_i^{(b(i))}, G_{i-1}, \tau^{(i)})$
        \State $S_i \gets$ Uniform subset of size $m_i$ from $R_i^{(b(i))}$ \label{line:level_choose_sample}
        \State $G_{i} \gets G_{i-1} \cup S_i$
        \State $R_{i+1} \gets \FilterF{}(R_i, G_i, \tau)$, \quad $\Rp_{i+1} \gets R_{i+1}$
        \label{line:level_filter}
        \State $i \gets i + 1$
      \EndWhile
      \State $T \gets i - 1$
    \EndProcedure
    \Function{\FilterF}{$R', G', \tau'$}
      \If{$|G'| = k$}
        \State \Return $\emptyset$
      \Else
        \State \Return $\{e \in R': f(e | G') \ge \tau'\}$
      \EndIf
    \EndFunction
  \end{algorithmic}
\end{algorithm}

Next, we show how to handle insertions and deletions in our algorithm.
\subsection{Insertion}
\label{sec:alg_insert}
We take a lazy approach for dealing
with insertions;
for each level $i$,
we maintain a set $\Rp_i \supseteq R_i$ that will temporarily
hold the elements in a level, and process these elements once the number of elements is sufficiently large.
More formally, each time an element $v$ is inserted,
we add $v$ to all $\Rp_i$ such that
$f(v | G_{i-1}) \ge \tau$.
Once the size of $\Rp_i$ is at least
$\frac{3}{2}$ the size of $R_i$,
we reconstruct level $i$.
A formal pseudocode is shown in Algorithm \ref{alg:insert}.

\begin{algorithm}[ht]
  \caption{Insert}
  \label{alg:insert}
  \begin{algorithmic}[1]
    \Procedure{Insert}{$v$}
       \State $\Rp_{0} \gets \Rp_0 \cup \{v\}$
      \For{$i\gets 1, \dots, T + 1$}
        \If{$f(v|G_{i-1}) < \tau$ or $|G_{i-1}| = k$}
          \State \Break\label{line:insert_break}
        \EndIf
        \State $\Rp_i \gets \Rp_i \cup \{v\}$ \label{line:insert_add}
        \If{$i=T + 1$ or $|\Rp_{i}| \ge \frac{3}{2}\cdot |R_i|$}
          \State \ReconstructF{}(i)\label{line:insert_level}
          \State \Break
        \EndIf
      \EndFor
    \EndProcedure
  \end{algorithmic}
\end{algorithm}

\subsection{Deletion}
\label{sec:alg_delete}
To handle deletions,
we keep track of the deleted elements
by adding them to a set $D$.
For each level $i$, we keep track of the number of elements
deleted in the bucket $R_i^{(b(i))}$ that we had chosen
to sample $S_i$ from. 
Once an $\epsilon$-fraction of
these elements is deleted, we restart the offline algorithm from level $i$
by invoking \ReconstructF{}(i).
A formal pseudocode is provided in Algorithm \ref{alg:delete}.
Note that level $i$ is reconstructed with $R_i \gets \Rp_i \backslash D$
to avoid using deleted values in the reconstruction.

\begin{algorithm}[ht]
  \caption{Delete}
  \label{alg:delete}
  \begin{algorithmic}[1]
    \Procedure{Delete}{$v$}
      \State $D \gets D \cup v$
      \For{$i\gets 1, \dots, T$}
        \If{$|D \cap R_i^{(b(i))}| \ge \epsDel \cdot |R_i^{(b(i))}|$}
          \State \ReconstructF{}(i)
          \State \Break
        \EndIf
      \EndFor
    \EndProcedure
  \end{algorithmic}
\end{algorithm}

\subsection{Choice of sample size}
\label{sec:alg_sample}
As explained in the intro,
the main idea behind 
our algorithm is to find the largest integer $m_i$
such that if we sample $m_i$ elements uniformly at random, the last sampled element
 has marginal gain at least
$\tau^{(i)}$ with probability at least $1-\epsilon$.
To achieve this, we binary search over all possible values $m'$.
For each $m'$, we use repeated trials to estimate the probability 
that the last element of $S_i$ has marginal gain at least $\tau^{(i)}$.
As we show in Section \ref{sec:theory_sample_count}, a standard Chernoff bound implies that using polylogarithmic number of samples, we can estimate this probability with error at most $\frac{\epsSamp}{10}$.
A formal pseudocode of the above sketch is provided in Algorithm \ref{alg:sample} 
in which we use notation \underline{u.a.r} for a  set that is sampled \emph{uniformly at random}.

\begin{algorithm}[ht]
  \caption{\CalcSampleCountF{}}
  \label{alg:sample}
  \begin{algorithmic}[1]
    \Function{\ReduceMeanF{}}{$R', G', m'$}
      \For{$t \gets 1, \dots, \left\lceil\frac{4}{\epsSamp^2}\log\left(\frac{200k^{11}}{\epsSamp}\right)\right\rceil $}
        \State Sample $S'$ of size $m' - 1$ u.a.r. from $R'$
        \State Sample element $s$ from $R' \backslash S'$ u.a.r. 
        \State $I_t \gets \ind{f(s | G' \cup S') \ge \tau'}$
      \EndFor
      \State \Return mean of $I_t$
    \EndFunction
    \Function{CalcSampleCount}{$R', G', \tau'$}
      \State $m \gets 1, M \gets \min\{k - |G'|, |R'|\}$
      \If{$\ReduceMeanF{}(R', G', M) \ge 1 - \epsSamp$}
        \State \Return $M$
        \label{line:output_M_ajk}
      \EndIf
      \While{$M - m > 1$}
        \State $m' \gets  {\lfloor\frac{m + M}{2}\rfloor}$
        \If{$\ReduceMeanF{}(R', G', m') \ge 1 - \epsSamp$}
          \State $m \gets m'$
        \Else
          \State $M \gets m'$
        \EndIf
      \EndWhile
      \State \Return $m$
      \label{line:output_m_ajk}
    \EndFunction
  \end{algorithmic}
\end{algorithm}

\subsection{Unknown \opt{}}
\label{sec:alg_parallel}
In this section, we relax the assumption of known \opt{} by maintaining multiple parallel
runs, indexed by $p \in \mathbb{Z}$.
For each run $p$, we will use the value
$\opt_p = (1+\epsOpt)^p$ for $\opt$ in the algorithm
and only insert elements with $f(e) \in [\epsilon \cdot \frac{\opt_p}{2k}, \opt_p]$.
We always output the set with maximum value of $f$ across all the runs.

\subsection{Choice of parameters}
\label{sec:choice_param}
For some $\epsilon< \frac{1}{10}$, we set
$\epsSamp=\epsBuck=\epsOpt = \epsilon$ and
$\epsDel = \frac{\epsilon}{20}$. 

\section{Theoretical analysis}
In this section, we state our main theoretical result.
\begin{theorem}\label{thm:main}
  There is an algorithm for dynamic submodular maximization
  that maintains a set with expected $\frac{1}{2} - \epsilon$ approximation factor
  that makes
  at most $\poly(\log(n), \log(k), \frac{1}{\epsilon})$ amortized queries in expectation,
  where $n$ denotes the largest number of elements at any point of the stream.
\end{theorem}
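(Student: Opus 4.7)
The plan is to assume without loss of generality that \opt{} is known up to a $(1+\epsOpt)$ factor and then establish the approximation ratio and query complexity separately. The known-\opt{} assumption is removed by the parallel-run construction of Section \ref{sec:alg_parallel}: we run the algorithm for each $\opt_p=(1+\epsOpt)^p$ in the plausible range and return the best output; this multiplies the query complexity by $O(\log_{1+\epsOpt}(2k/\epsilon))$ and costs at most $\epsilon\cdot\opt$ additive approximation loss, since elements of value below $\epsilon\opt_p/(2k)$ contribute at most $\epsilon\cdot\opt$ in total.

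For the approximation ratio, I would first verify the two structural invariants highlighted in Section~\ref{sec:existing_problems}. The filtering invariant $\Rc_{i+1}=\FilterF(\Rc_i,G_i,\tau)$ together with $\Rc_{T+1}=\emptyset$ is syntactic and follows by induction on the stream, checking each branch of \InsertF{}, \DeleteF{}, and \ReconstructF{}. The uniformity invariant (conditional on $\bS_1,\dots,\bS_{i-1}$ and $\bM_i$, the sample $\bS_i$ is uniform over size-$m_i$ subsets of $R_i^{(b(i))}$) is where the design choice of basing the deletion trigger on $R_i^{(b(i))}$ rather than on $S_i$ pays off: under this trigger, the event ``level $i$ is the output of this particular \ReconstructF{} call'' is measurable with respect to $(\bS_1,\dots,\bS_{i-1})$, so the fresh uniform draw inside \ReconstructF{} remains uniform after conditioning. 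A Chernoff bound on the $\Theta(\epsSamp^{-2}\log(k/\epsSamp))$ trials of \ReduceMeanF{} then implies that the $m_i$ returned by \CalcSampleCountF{} satisfies, with high probability, $\Pr{f(s\mid G_{i-1}\cup S')\ge\tau^{(i)}}\ge 1-2\epsSamp$ for a uniform $S'$ of size $m_i-1$ and a uniform extra $s\in R_i^{(b(i))}\setminus S'$; by submodularity the last sampled element is marginally the worst, which yields $\Ex{f(S_i\mid G_{i-1}) \mid \bS_1,\dots,\bS_{i-1}}\ge (1-\epsilon)\,|S_i|\,\tau$.

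The delicate step is converting this per-level conditional bound into a bound on $\Ex{f(G_T\setminus D)}$. A naive case split on $\{|G_T|=k\}$ versus $\{|G_T|<k\}$ is precisely what broke the prior analysis, so instead I would construct a relaxation function $\Phi=\Phi(\bS_1,\dots,\bS_T, D)$ that is (i) \emph{linear} in the per-level contributions $\tau\,|S_i\setminus D|$ up to a deterministic shortfall term that activates only in the $|G_T|<k$ branch, and (ii) satisfies $f(G_T\setminus D)\ge\Phi$ \emph{pointwise}. The pointwise bound is proved by unifying the two cases: in the $|G_T|<k$ branch the filtering invariant forces every remaining optimal element to have marginal gain below $\tau$, so $f(G_T)\ge f(G_T\cup\optSet)-k\tau\ge \opt-k\tau=\opt/2$, while deletions of sampled elements are controlled by the $\epsDel$-trigger which guarantees $|S_i\cap D|\le \epsilon\,|R_i^{(b(i))}|$ at all times and thus a submodularity-based bound relating $f(G_T\setminus D)$ to $f(G_T)$. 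Linearity then lets me take expectations level by level using the per-level bound, yielding $\Ex{\Phi}\ge \opt/2 - O(\epsilon)\opt$; rescaling $\epsilon$ gives the claimed $(1/2-\epsilon)$ factor.

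For the query complexity, I would first bound $T=\poly(\log n,\log k,1/\epsilon)$ by a potential argument: since $m_i$ is the largest integer for which a random last element achieves marginal gain $\ge\tau^{(i)}$ with probability $\ge 1-\epsSamp$, the value $m_i+1$ \emph{fails} this test, hence in expectation at least an $\Omega(\epsilon)$ fraction of $R_i^{(b(i))}$ either leaves $R_{i+1}$ or drops to a strictly smaller bucket; because the bucket index can decrease at most $O(\log_{1+\epsBuck}(2k))$ times, the recursion depth is polylogarithmic. A single \ReconstructF{}$(i)$ call processes each element of $\Rp_i$ with $\poly(\log n,\log k,1/\epsilon)$ oracle queries (bucket assignment, the binary search inside \CalcSampleCountF{}, and the \ReduceMeanF{} trials). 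The amortization follows from a standard charging argument: \ReconstructF{}$(i)$ fires only when $\Omega(|R_i|)$ inserts have accumulated in $\Rp_i$ or $\Omega(\epsilon\,|R_i|)$ deletes have hit $R_i^{(b(i))}$, so each update is charged $\poly(\log n,\log k,1/\epsilon)$ amortized work per level, and summing over the polylogarithmic number of levels gives the stated bound. The main obstacle throughout is the construction of $\Phi$: ensuring pointwise dominance $f(G_T\setminus D)\ge\Phi$ simultaneously across the $|G_T|=k$ and $|G_T|<k$ branches while retaining linearity requires a careful submodular accounting, and it is here that most of the novelty (and the resolution of the conditioning issue of~\cite{DBLP:conf/nips/LattanziMNTZ20}) resides.
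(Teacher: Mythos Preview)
Your proposal is correct and follows essentially the same approach as the paper: the parallel-run reduction for unknown \opt{}, the filtering and uniformity invariants (Lemmas~\ref{lm:invariant_filter} and~\ref{lm:uniform_lazy}), the pointwise relaxation lower bound (the paper's $\sum_i f'_i$ with $f'_i=\min\{f(S_i\mid G_{i-1}),\,m_i\tau^{(i)}\}-d_i$ for $i\le T$ and a shortfall term $f'_{T+1}$), and the potential-based bound on $T$ plus the charging argument all match. The only refinements you will need when filling in details are the precise form of $\Phi$ (the $\min$ cap, not a linear term in $|S_i\setminus D|$, is what makes the pointwise bound $f(G_T\setminus D)\ge\Phi$ survive the $|G_T|=k$ branch and telescope with $f'_{T+1}$) and that ``taking expectations level by level'' with a random stopping level $T$ is carried out in the paper via a backward induction on $i$ (Lemma~\ref{lm:bound_f_total}) rather than a direct summation.
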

To prove this result, we establish two theorems that
consider the approximation factor and query complexity of our algorithm respectively.
First, in Section \ref{sec:proof_approx},
we prove the following result.
\begin{theorem}\label{thm:approx}
  The dynamic Algorithm \ref{alg:offline} (with Algorithms \ref{alg:insert} and \ref{alg:delete} for handling
  updates) maintains an output set $G_T \backslash D$ with expected approximation factor
  \begin{math}
    \frac{1}{2} - O(\epsilon)
  \end{math}
  as long as the optimal solution $\optSet$ satisfies
  \begin{align*}
      f(\optSet) \le \opt \le (1+\epsOpt) f(\optSet).
  \end{align*}
\end{theorem}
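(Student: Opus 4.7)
The plan is to first establish the two structural invariants stated in the "Overview of techniques" subsection, namely: (i) $\Rc_{i+1} = \FilterF{}(\Rc_i, G_i)$ for all $i \in [0, T]$, with $\Rc_{T+1} = \emptyset$, and (ii) conditioned on the past and on $m_i$, the set $S_i$ is a uniformly random subset of $R_i^{(b(i))}$ of size $m_i$. These follow from Lemmas \ref{lm:invariant_filter} and \ref{lm:uniform_lazy}. With the uniform-sampling invariant in hand, I combine it with the design of \CalcSampleCountF{} to show, via a submodularity argument, that $\Ex{f(S_i \mid G_{i-1}) \mid \bS_1, \dots, \bS_{i-1}, \bT \ge i} \ge (1-\epsilon)\,|S_i|\,\tau^{(i)}$, where $\tau^{(i)} \ge \tau = \opt/(2k)$. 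This is the per-level gain estimate that the whole analysis hinges on.

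The core difficulty, flagged as the first issue in Section \ref{sec:existing_problems}, is that one cannot simply split into the cases $|G_T| = k$ and $|G_T| < k$ and then apply the per-level bound, because conditioning on $|G_T| = k$ biases the samples downward (intuitively, the algorithm is more likely to terminate early when the samples are higher quality). To circumvent this, I would introduce a relaxation function $\Phi(S_1, \dots, S_T, D)$ that pointwise lower-bounds $f(G_T \setminus D)$ and whose expectation can be estimated without conditioning on $|G_T|$. The relaxation should unify the two drivers of the standard proof: on realizations where $|G_T| = k$ it behaves like the telescoping sum $\sum_i f(S_i \mid G_{i-1})$, while on the remaining realizations it absorbs the witness inequality $f(v \mid G_T) < \tau$ for every alive $v \in \optSet$ (which follows from $\Rc_{T+1} = \emptyset$ together with the filter invariant). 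The critical feature of the construction is that the level-$i$ term in $\Phi$ depends only on $S_1, \dots, S_i$ and the deletion history up through level $i$, so that the per-level lower bound $(1-\epsilon)|S_i|\tau^{(i)}$ applies to the \emph{unconditioned} expectation and the contributions add up by linearity.

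Once the relaxation is in place, the remaining work is to absorb the effect of deletions into the constant $O(\epsilon)$. Since $S_i$ is uniform in $R_i^{(b(i))}$ and since deletions trigger a reconstruction as soon as $|D \cap R_i^{(b(i))}| \ge \epsDel |R_i^{(b(i))}|$, a standard calculation gives $\Ex{|S_i \cap D|} \le \epsDel |S_i|$. Every element $e \in S_i \cap D$ satisfies $f(e \mid G_{i-1}) \le (1+\epsBuck)\tau^{(i)}$ by the bucketing rule, so by submodularity the loss incurred when removing $S_i \cap D$ from $G_T$ is at most $(1+\epsBuck)\tau^{(i)}|S_i \cap D|$. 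Summing across $i$ and using $\sum_i (1-\epsilon)|S_i|\tau^{(i)} \le \Ex{\sum_i f(S_i \mid G_{i-1})} = \Ex{f(G_T)} \le \opt$, the total expected deletion loss is $O(\epsilon)\opt$.

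Putting the pieces together, $\Ex{\Phi} \ge \tfrac{\opt}{2} - O(\epsilon)\opt$, so by (i) $\Ex{f(G_T \setminus D)} \ge \Ex{\Phi} \ge \tfrac{\opt}{2} - O(\epsilon)\opt$; using $\opt \le (1+\epsOpt)f(\optSet)$ converts this into the claimed $(\tfrac{1}{2} - O(\epsilon))$-approximation. The main obstacle is the design of $\Phi$ itself: it must simultaneously (a) pointwise lower-bound $f(G_T \setminus D)$, (b) decompose into contributions that are measurable with respect to the prefix $(S_1, \dots, S_i)$ so that the per-level gain estimate can be applied unconditionally, and (c) automatically reduce to the Case-B filter bound whenever $|G_T| < k$. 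Once $\Phi$ is found, the remaining arguments are standard submodularity, Chernoff, and linearity-of-expectation calculations.
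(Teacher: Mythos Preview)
Your overall strategy matches the paper's: invariants from Lemmas \ref{lm:invariant_filter} and \ref{lm:uniform_lazy}, a per-level conditional gain bound, and a relaxation function that sidesteps the conditioning-on-$|G_T|$ trap. The gap is in your specification of the relaxation.

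You require that the level-$i$ term of $\Phi$ be measurable with respect to $(S_1,\dots,S_i)$ so that ``the contributions add up by linearity,'' and simultaneously that $\Phi$ ``automatically reduce to the Case-B filter bound whenever $|G_T|<k$.'' These two desiderata are in tension: if every summand is prefix-measurable and the per-level bound only delivers $\Ex{\phi_i\mid H_i}\gtrsim m_i\tau^{(i)}$, then on Case-B realizations (where $\sum_i m_i<k$) the sum $\sum_i m_i\tau^{(i)}$ can be arbitrarily far from $\opt/2$, and no prefix-measurable term can ``see'' that the process terminated early. The paper does \emph{not} achieve property (b) as you state it. Its relaxation adds a non-prefix-measurable residual term
\[
f'_{T+1} \;=\; (1-4\epsSamp)\Bigl(\tfrac{1-\epsOpt}{1+\epsOpt}\cdot\tfrac{\opt}{2}\;-\;\sum_{j\le T} m_j\,\tau^{(j)}\Bigr),
\]
and for $i\le T$ uses the \emph{clipped} quantity $\min\{f(S_i\mid G_{i-1}),\,m_i\tau^{(i)}\}-d_i$ rather than the raw gain. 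The clipping is not cosmetic: it is exactly what makes the telescoping $(1-4\epsSamp)m_i\tau^{(i)} + \bigl[(1-4\epsSamp)(\text{budget}-\sum_{j\le i}m_j\tau^{(j)})\bigr] = (1-4\epsSamp)(\text{budget}-\sum_{j<i}m_j\tau^{(j)})$ go through, and it is what allows Lemma~\ref{lm:fp_lower_bound} (the pointwise lower bound $f(G_T\setminus D)\ge\sum_{i\le T+1} f'_i$) to be proved by a clean case split on the sign of $f'_{T+1}$.

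Because the residual term depends on all of $m_1,\dots,m_T$ and on $T$, linearity of expectation is not enough; the paper instead proves $\Ex{\sum_{j\ge i} f'_j\mid \sigma_i^{-m}}\ge (1-4\epsSamp)(\text{budget}-\sum_{j<i}m_j\tau^{(j)})$ by \emph{backward} induction on $i$ (Lemma~\ref{lm:bound_f_total}), with the base case $i=T+1$ holding by definition of $f'_{T+1}$. This backward induction, together with the specific form of the clipping and residual, is the technical heart of the argument; your proposal treats it as ``standard'' once $\Phi$ is found, but in fact the choice of $\Phi$ and the induction are intertwined and neither is obvious given only your desiderata (a)--(c). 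Your deletion analysis via the global bound $\Ex{f(G_T)}\le\opt$ is plausible but unnecessary: the paper absorbs the deletion loss level-by-level into the constant in Lemma~\ref{lm:bound_f_one_level}, using $\Ex{|S_i\cap D|\mid\sigma_i}\le\epsDel\,m_i$ directly.
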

Next, in Section \ref{sec:query}, we prove the following result.
\begin{theorem}\label{thm:query}
  The dynamic algorithm \ref{alg:offline} (with Algorithms \ref{alg:insert} and \ref{alg:delete} for handling
  updates) has an expected amortized query complexity that is
  polynomial in $\log(n), \log(k),$  and $\frac{1}{\epsilon}$.
\end{theorem}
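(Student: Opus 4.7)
The plan is to combine three quantitative ingredients: an upper bound of $\poly(\log n, \log k, 1/\epsilon)$ on the number of levels $T$, an upper bound of $|R_i| \cdot \poly(\log n, \log k, 1/\epsilon)$ on the cost of a single call to $\ReconstructF{}(i)$, and an amortization that charges each reconstruction back to $\Omega(|R_i|/\poly\log)$ updates that triggered it. An extra $\poly\log(k/\epsilon)$ factor then accounts for the parallel runs in Section~\ref{sec:alg_parallel}.

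To bound $T$, I will use a potential argument on bucket indices. For each element $e \in R_i$, let $b_i(e) \in \{0,\dots,B\}$ with $B := \lfloor \log_{1+\epsBuck}(2k)\rfloor$ denote its bucket index in level $i$. Since $G_{i-1} \subseteq G_i$ and $f$ is submodular, $f(e|G_i) \le f(e|G_{i-1})$, so buckets can only decrease and the potential $\Phi_i := \sum_{e \in R_i} b_i(e)$ is non-increasing. The key claim is that each level drops $\Phi$ by $\Omega(\epsilon\, |R_i^{(b(i))}|)$ in expectation: by the design of $\CalcSampleCountF{}$, sampling one additional element after $S_i$ would fail to achieve marginal $\ge \tau^{(i)}$ with probability at least $\Omega(\epsSamp)$, which (combined with the Chernoff concentration of $\ReduceMeanF{}$ established in Section~\ref{sec:theory_sample_count}) forces an $\Omega(\epsilon)$-fraction of $R_i^{(b(i))}$ to drop to a strictly smaller bucket or be filtered out entirely in $R_{i+1}$. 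Since $R_i^{(b(i))}$ is the largest of $B+1$ buckets, $|R_i^{(b(i))}| \ge |R_i|/(B+1)$, and combining with the initial bound $\Phi_1 \le n B$ yields $\Ex{T} = \poly(\log n, \log k, 1/\epsilon)$.

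For the cost of a single $\ReconstructF{}(i)$ call, I will bookkeep Algorithm~\ref{alg:offline} level by level: computing the buckets and $\FilterF{}$ takes $O(|R_i|)$ marginal-value queries per level, and $\CalcSampleCountF{}$ performs $O(\log k)$ binary-search iterations, each calling $\ReduceMeanF{}$ at cost $O(m' \cdot \epsSamp^{-2} \log k)$ with $m' \le k$. Summing over the at most $T$ rebuilt levels (and using that $|R_{i'}| \le |R_i|$ for $i' \ge i$) gives the claimed $|R_i| \cdot \poly(\log n, \log k, 1/\epsilon)$ bound per reconstruction. For the amortization, a call to $\ReconstructF{}(i)$ is triggered either by an insertion that pushed $|\Rp_i|$ above $\frac{3}{2}|R_i|$, meaning at least $|R_i|/2$ insertions have occurred since the previous reconstruction of level $i$, or by a deletion that pushed $|D \cap R_i^{(b(i))}|$ past $\epsDel |R_i^{(b(i))}| \ge \epsDel |R_i|/(B+1)$. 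In either case the $|R_i| \cdot \poly\log$ cost can be spread across $|R_i|/\poly\log$ updates, so each update carries only $\poly\log$ cost per reconstruction of a given level. Since any fixed update can only be counted once in the reconstruction-trigger threshold of each level, and since there are $T = \poly\log$ levels in expectation, the total amortized cost per update is $\poly(\log n, \log k, 1/\epsilon)$; multiplying by the $O(\log_{1+\epsOpt}(k/\epsilon))$ parallel runs from Section~\ref{sec:alg_parallel} completes the bound.

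The main obstacle will be reconciling the \emph{expected} bound on $T$ with the \emph{pointwise} charging argument: the random choices inside $\CalcSampleCountF{}$ make $m_i$ and hence the bucket structure of $R_{i+1}$ random, so the potential-drop claim must be stated conditionally on the execution history up to the start of level $i$ and then composed with the tower rule. A related subtlety is that the reconstruction thresholds are measured against $|R_i|$ fixed at the last reconstruction time, while $|R_i^{(b(i))}|$ may fluctuate under deletions; I will handle this by observing that $\epsDel < 1/2$ keeps the bucket within a constant factor of its initial size throughout the interval between reconstructions, so the charging remains valid up to constants absorbed into the $\poly\log$.
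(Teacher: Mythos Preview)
Your proposal is essentially the paper's own proof: a bucket-index potential to show the number of levels is $\poly(\log n,\log k,1/\epsilon)$ in expectation, a per-call cost bound for $\ReconstructF{}(i)$ of $|R_i|\cdot\poly\log$, and a charging argument that spreads each reconstruction over $\Omega(|R_i|/\poly\log)$ updates. The conditional potential-drop statement and the independence issue you flag at the end are exactly what the paper formalizes via Lemmas~\ref{lm:potential_decrease}, \ref{lm:num_level_total}, and~\ref{lm:bound_num_query_amor}.

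Two small corrections. First, your cost for $\ReduceMeanF{}$ is off: each trial makes $O(1)$ oracle queries (one marginal evaluation $f(s\mid G'\cup S')$), not $O(m')$, so $\CalcSampleCountF{}$ costs $O(\epsSamp^{-2}\log^2 k)$ rather than $O(k\cdot\epsSamp^{-2}\log^2 k)$; with your stated bound the per-level cost would be polynomial in $k$ and the amortization would break when $|R_i|\ll k$. Second, the parallel runs of Section~\ref{sec:alg_parallel} are not part of Theorem~\ref{thm:query}; they enter only in the wrap-up for Theorem~\ref{thm:main}.
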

We note that the above results
imply Theorem \ref{thm:main} since,
as mentioned in Section \ref{sec:alg_parallel},
maintaining multiple runs adds a $\log_{1+\epsOpt}(k/\epsilon)$ term in the query complexity while the approximation ratio reduces by an extra $\epsilon$ term.

\section{Proofs}
\label{appendix:theory}
 We establish some properties of our algorithm in Section \ref{sec:invariant} and \ref{sec:theory_sample_count}.
Throughout the section, we use $\Rc_{i}$ to denote $\Rp_i \backslash D$.
 
 \subsection{Invariants}
 \label{sec:invariant}
 \subsubsection{Deterministic invariants}
In this section, we prove some invariants that will be useful in our proofs.
We start with the following fact.
\begin{fact} \label{fact:level_affect_i_le_j}
  For all $1 \le i < j$, a call to $\ReconstructF{}(j)$ would not affect any of the values
  $R_{i}, \Rp_i, G_i$.
  Furthermore, the value of $\Rc_i := \Rp_i \backslash D$ is unaffected by a call to $\ReconstructF{}(i)$.
\end{fact}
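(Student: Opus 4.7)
The proof is essentially by direct inspection of Algorithm~\ref{alg:offline}, specifically the procedure $\ReconstructF{}$, so my plan is to argue both parts by tracking exactly which indexed variables are written to during a call.

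For the first claim, I would walk through the body of $\ReconstructF{}(j)$ and observe that every assignment targets a variable whose subscript is at least the current loop index. Concretely, $\ReconstructF{}(j)$ begins at line~\ref{line:level_first_line} by writing to $R_j$ and $\Rp_j$; the body of the \textbf{while} loop then writes only to $R_j^{(\cdot)}, \tau^{(j)}, m_j, S_j, G_j, R_{j+1}, \Rp_{j+1}$ before incrementing the loop counter, after which the same pattern repeats with $j$ replaced by $j+1$, and so on. In particular, no assignment in $\ReconstructF{}(j)$ ever touches a variable indexed by anything strictly smaller than $j$. Since $i < j$ by hypothesis, the values $R_i$, $\Rp_i$ and $G_i$ (as well as the set $D$, which is only modified in \InsertF{} and \DeleteF{}) are not among the variables written, and hence they retain their pre-call values. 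This gives the first part of the fact.

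For the second claim, I would pinpoint every line of $\ReconstructF{}(i)$ that could modify $\Rp_i$. The only such line is the very first one (line~\ref{line:level_first_line}), which executes $R_i \gets \Rp_i \setminus D$ followed by $\Rp_i \gets R_i$; all subsequent iterations of the \textbf{while} loop have already incremented the index and therefore write only to $\Rp_{i+1}, \Rp_{i+2}, \ldots$. Writing $\Rp_i^{\text{old}}$ and $\Rp_i^{\text{new}}$ for the values of $\Rp_i$ immediately before and after the call, we therefore have $\Rp_i^{\text{new}} = \Rp_i^{\text{old}} \setminus D$. Since $D$ is not modified inside $\ReconstructF{}$, it follows that
\begin{align*}
\Rc_i^{\text{new}} \;=\; \Rp_i^{\text{new}} \setminus D \;=\; (\Rp_i^{\text{old}} \setminus D) \setminus D \;=\; \Rp_i^{\text{old}} \setminus D \;=\; \Rc_i^{\text{old}},
\end{align*}
which is exactly the second claim.

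There is no real obstacle here; the only thing to be careful about is to enumerate \emph{all} places where $\Rp_i$ or $R_i$ could be mutated during a call to $\ReconstructF{}$ (including the implicit mutations in the loop header through the counter $i$), and to confirm that $D$ is not altered inside $\ReconstructF{}$, so that taking $\cdot \setminus D$ twice is idempotent. Both are immediate from the pseudocode.
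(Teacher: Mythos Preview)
Your proof is correct and takes essentially the same approach as the paper, which simply states that the fact follows from inspecting the pseudocode of \ReconstructF{}. Your version is just a more detailed line-by-line walkthrough of the same inspection.
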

The fact follows from the pseudocodes given for \ReconstructF{}.
\begin{lemma}\label{lm:invariant_filter}
  For all $i \in [0, T]$,
  defining $\Rc_i := \Rp_i \backslash D$,
  \begin{equation}
    \Rc_{i + 1} = \FilterF{}(\Rc_i, G_i).
    \label{eq:invariant_filter}
  \end{equation}
\end{lemma}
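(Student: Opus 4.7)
The plan is to prove the invariant by induction on the number of operations (initialization, insertion, deletion, reconstruction) processed so far. The key observation that drives the induction is that $\FilterF{}$ commutes with single-element additions/removals to its first argument: if $v \notin \FilterF{}(\Rc_i, G_i)$ then $\FilterF{}(\Rc_i \cup \{v\}, G_i) = \FilterF{}(\Rc_i, G_i)$, and $\FilterF{}(\Rc_i \setminus \{v\}, G_i) = \FilterF{}(\Rc_i, G_i) \setminus \{v\}$ always. For the base case, \InitF{} explicitly sets $\Rp_1 \gets \FilterF{}(R_0, G_0, \tau)$ with $D = \emptyset$, so $\Rc_0 = R_0$ and $\Rc_1 = \FilterF{}(\Rc_0, G_0)$; the subsequent $\ReconstructF{}(1)$ call builds all further levels as described below.

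For \emph{Insert}$(v)$, the for-loop adds $v$ to $\Rp_0, \Rp_1, \dots, \Rp_{j-1}$ and then breaks at some index $j$ (either because the filter condition at level $j$ fails, or because a reconstruction is triggered). Since $v$ is fresh, $v \notin D$, hence $\Rc_i^{\text{new}} = \Rc_i^{\text{old}} \cup \{v\}$ for $i < j$ and $\Rc_i^{\text{new}} = \Rc_i^{\text{old}}$ for $i \ge j$. For each $i \le j-2$, the loop did not break at level $i+1$, so $v$ passes the filter into $\Rc_{i+1}^{\text{new}}$; both sides of the invariant gain $v$ and reduce to the old invariant. For $i = j-1$, the break condition ``$f(v|G_{j-1}) < \tau$ or $|G_{j-1}| = k$'' is precisely the negation of membership in $\FilterF{}(\cdot, G_{j-1})$, so $v \notin \FilterF{}(\Rc_{j-1}^{\text{new}}, G_{j-1})$; combined with $\Rc_j^{\text{new}} = \Rc_j^{\text{old}}$ and the old invariant, the new invariant follows. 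For $i \ge j$, neither side changes. If the break is due to a triggered $\ReconstructF{}(i)$, the reconstruction case below re-establishes the invariant at all affected levels.

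For \emph{Delete}$(v)$, we only add $v$ to $D$, so $\Rc_i^{\text{new}} = \Rc_i^{\text{old}} \setminus \{v\}$ at every level. The commutation property above, together with the old invariant, immediately yields the new invariant; a triggered reconstruction is again handled below. For $\ReconstructF{}(i)$, the pseudocode sets $R_i \gets \Rp_i \setminus D$ and $\Rp_i \gets R_i$, and then iteratively $\Rp_{j+1} \gets R_{j+1} \gets \FilterF{}(R_j, G_j, \tau)$ for $j \ge i$. Because each newly assigned $\Rp_{j+1}$ is a subset of $\Rp_i \setminus D$ and hence disjoint from $D$, we have $\Rc_j^{\text{new}} = \Rp_j$ for $j \ge i$, so the invariant is established directly by construction. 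Levels below $i$ are untouched by Fact~\ref{fact:level_affect_i_le_j}, so their invariants are inherited.

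The main obstacle is simply the bookkeeping around the Insert loop's break: one must verify that the break condition on $v$ at level $j$ aligns exactly with $v$'s exclusion from $\FilterF{}(\Rc_{j-1}, G_{j-1})$, and that the levels below $j-1$ really do pass $v$ through. Once this index alignment between the break check and the filter definition is in hand, the remaining verifications are routine applications of the commutation property of $\FilterF{}$ with set updates.
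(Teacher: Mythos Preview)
Your proposal is correct and takes essentially the same approach as the paper: both establish the invariant directly after a reconstruction (using that the freshly built $\Rp_j$'s are disjoint from $D$), and then verify that \InsertF{} and \DeleteF{} preserve it via the obvious commutation of $\FilterF{}$ with single-element additions and removals. The only structural difference is that the paper fixes $i$ and argues from the \emph{last} call to $\ReconstructF{}(j)$ with $j\le i$, whereas you run a forward induction from \InitF{}; the case analyses for \InsertF{} and \DeleteF{} are otherwise the same.
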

\begin{proof}
  Fix $i$ and consider the last time $\ReconstructF{}(j)$ was called
  for some $j \le i$.
  Our proof consists of 2 parts. We
  first show that \eqref{eq:invariant_filter} holds right after this call was made (Part 1).
  We then show that the condition holds after any subsequent insertion and deletion operations (Part
  2).

  \textbf{Part 1.}
  Consider the last execution of $\ReconstructF{}(j)$ for some $j \le i$.
  The following properties hold after this execution.
  \begin{enumerate}
    \item $R_{i+1} = \FilterF{}(R_i, G_i)$ because of Line \ref{line:level_filter}.
    \item $\Rp_i = R_i$ and $\Rp_{i+1} = R_{i+1}$ because of Line \ref{line:level_filter}
    (or line~\ref{line:level_first_line} for $i$, if $j = i$). 
    \item $R_i \cap D = R_{i+1} \cap D = \emptyset$
      because $D$ is removed from $R_j$ in Line \ref{line:level_first_line}, and
      all $R_{j'}$ for $j' > j$ are subsets of $R_j$
      because of Line \ref{line:level_filter}.
  \end{enumerate}
  Therefore,
  \begin{equation*}
    \Rc_{i+1} = R_{i+1} = \FilterF{}(R_i, G_i) = \FilterF{}(\Rc_{i}, G_i)
  \end{equation*}

  \textbf{Part 2 (Update).}
  We assume that the update does not trigger a call to
  \ReconstructF{}(j) for $j \le i$ as otherwise,
  the claim holds by Part 1.
  ~\\
  We first consider \InsertF{}.
  Each time a new value $v$ is inserted to
  $\Rp_i$, then it is inserted into $\Rp_{i+1}$ if and only if
  $f(v | G_{i})$ is at least  $\tau$ and
  $|G_{i-1}| < k$.
  Therefore, $\Rc_{i+1} = \FilterF(\Rc_{i}, G_{i})$
  holds after the execution of Line \ref{line:insert_add} for $i+1$.
  By Fact \ref{fact:level_affect_i_le_j},
  even if the insertion triggers $\ReconstructF{}(i + 1)$,
  the values of
  $\Rc_{i}$ and $\Rc_{i+1}$ would not be affected. Therefore,
  \InsertF{} preserves \eqref{eq:invariant_filter}.
  ~\\
  As for \DeleteF{},
  when an element is added to $D$, it is removed from both
  $\Rc_i$ and $\Rc_{i+1}$,
  and therefore \eqref{eq:invariant_filter} is preserved.
  The possible calls to $\ReconstructF{}(j)$ for
  $j \ge i+1$ also preserve the property by Fact \ref{fact:level_affect_i_le_j}, which finishes the proof.
\end{proof}
\begin{lemma}\label{lm:final_level_empty}
  \begin{math}
    R_{T+1} = \emptyset \label{eq:invariant_final}
  \end{math}
  and $R_{i} \ne \emptyset$ for $i \in [1, T]$. 
\end{lemma}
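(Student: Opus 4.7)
The plan is to prove both halves of the statement simultaneously by induction on the sequence of operations, using the while-loop termination condition inside \ReconstructF{} together with Fact \ref{fact:level_affect_i_le_j} as the two main tools. For the base case, immediately after \InitF{} runs it calls $\ReconstructF{}(1)$; inside this call, the loop body on Lines \ref{line:level_break}--\ref{line:level_filter} processes a level and increments $i$ only when $R_i \ne \emptyset$, and exits as soon as $R_i$ becomes empty, at which point $T \gets i-1$ is set. This directly yields $R_{T+1} = \emptyset$ together with $R_{i'} \ne \emptyset$ for every $i' \in [1, T]$.

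For the inductive step I would assume the statement holds at some moment and show that it persists through the next update. An \InsertF{} that does not trigger \ReconstructF{} only writes to $\Rp_i$ on Line \ref{line:insert_add}; it never touches any $R_i$ or $T$, so the invariant is trivially preserved. (If the for-loop in \InsertF{} ever reaches $i = T+1$, the explicit clause on Line \ref{line:insert_level} forces a call to $\ReconstructF{}(T+1)$, so this case folds into the reconstruction case below.) A \DeleteF{} that does not trigger a reconstruction only updates $D$, again leaving every $R_i$ and $T$ unchanged. Finally, when $\ReconstructF{}(j)$ is called, Fact \ref{fact:level_affect_i_le_j} gives that $R_1, \ldots, R_{j-1}$ are unmodified (so they remain nonempty by induction), while the call resets $T$ to $i-1$ for the loop index $i$ at termination; exactly as in the base case, this yields $R_{T+1} = \emptyset$ and $R_{i'} \ne \emptyset$ for every $i' \in [j, T]$, and splicing these with the unchanged prefix gives the full claim.

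The proof is essentially a bookkeeping check and I do not anticipate a significant technical obstacle. The one subtlety I would flag is making sure that when $\ReconstructF{}(j)$ is called with $j > 1$, the newly assigned value of $T$ is consistent with the unmodified prefix $R_1, \ldots, R_{j-1}$; this is exactly where the two index ranges must be glued together, using Fact \ref{fact:level_affect_i_le_j} for $i' < j$ and the while-loop termination condition for $i' \ge j$.
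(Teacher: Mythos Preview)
Your proposal is correct and follows essentially the same approach as the paper's proof: both argue by induction on the update stream, observing that the while-loop termination condition in \ReconstructF{} establishes the invariant whenever a reconstruction occurs, and that updates not triggering a reconstruction leave all $R_i$ and $T$ untouched. Your write-up is somewhat more detailed (explicitly invoking Fact~\ref{fact:level_affect_i_le_j} for the prefix and spelling out the gluing), but the underlying argument is the same.
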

\begin{proof}
  The claim holds at the beginning of the stream as $\FilterF{}(\emptyset, \cdot) = \emptyset$.
  ~\\
  For each update in the stream,
  if the update triggers a call to $\ReconstructF{}(i)$ for some $i \le T + 1$,
  the property would be preserved since $\ReconstructF{}$ 
  stops building levels when $R_{T+1} = \emptyset$.
  Otherwise, we observe that since
  the value of $T$ and $R_i$ can only change
  during insertion and deletion through invoking $\ReconstructF{}$, the invariant will be preserved.
\end{proof}
\begin{lemma}
\label{lm:reconstruction_condition}
    For all $i \in [T]$
    \begin{align*}
        |R_i^{(b(i))} \cap D|
        \le 
        \epsDel|R_i^{(b(i))}|.
    \end{align*}
    and
    \begin{align*}
        |\Rp_i| \le \frac{3}{2}|R_i|. 
    \end{align*}
\end{lemma}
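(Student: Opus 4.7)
The plan is to prove both invariants simultaneously by induction on the sequence of operations in the stream, showing that each of \ReconstructF{}, \InsertF{}, and \DeleteF{} preserves both properties for every level $i \in [T]$ upon completion. The two invariants are exactly the negations of the triggering conditions in Algorithms~\ref{alg:insert} and~\ref{alg:delete}, so the proof amounts to carefully tracking when each algorithm is guaranteed to reset a level.

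First I would handle the base case: right after any execution of \ReconstructF{}$(j)$ completes, for every newly built level $i \ge j$, line~\ref{line:level_first_line} (at level $j$) together with line~\ref{line:level_filter} (at subsequent levels) sets $\Rp_i \gets R_i$, giving $|\Rp_i| = |R_i| \le \frac{3}{2}|R_i|$; moreover, $R_j$ is set to $\Rp_j \setminus D$ and each $R_i$ for $i > j$ is a subset of $R_j$ via \FilterF{}, so $R_i \cap D = \emptyset$ and in particular $|R_i^{(b(i))} \cap D| = 0 \le \epsDel |R_i^{(b(i))}|$. For levels $i < j$, Fact~\ref{fact:level_affect_i_le_j} guarantees the values are untouched, so the inductive hypothesis applies unchanged.

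Next I would address the inductive step for \InsertF{}$(v)$. The algorithm walks upward from $i = 1$, adds $v$ to $\Rp_i$ at line~\ref{line:insert_add}, and immediately tests $|\Rp_i| \ge \frac{3}{2}|R_i|$. If the test fires at some level $i^{*}$, \ReconstructF{}$(i^{*})$ is invoked and restores both invariants for all levels $\ge i^{*}$ by the base case; for levels $j < i^{*}$ the test was checked and failed, which is exactly $|\Rp_j| < \frac{3}{2}|R_j|$; and for levels beyond those touched by the loop, $\Rp_j$ is unchanged so the hypothesis transfers. If the test never fires, the same reasoning applies at every touched level. The deletion invariant is automatic for \InsertF{}, since $D$, $R_i$, and $R_i^{(b(i))}$ are untouched outside any triggered reconstruction. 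The inductive step for \DeleteF{}$(v)$ is symmetric: $v$ is added to $D$, the loop walks upward and either triggers \ReconstructF{}$(i^{*})$ at the first violation (resetting levels $\ge i^{*}$) or completes without violation (so the test $\ge \epsDel |R_i^{(b(i))}|$ failed at every level, giving the invariant); and $\Rp_i, R_i$ are not modified outside of reconstruction, so the insertion invariant carries over.

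I do not expect a real obstacle here; the argument is essentially bookkeeping. The only subtlety is justifying that when reconstruction fires at level $i^{*}$ mid-loop, the levels $j < i^{*}$ that the loop already inspected indeed still satisfy their invariant \emph{after} the reconstruction at $i^{*}$ completes. This is immediate from Fact~\ref{fact:level_affect_i_le_j}, which tells us that \ReconstructF{}$(i^{*})$ does not alter $\Rp_j$ or $R_j^{(b(j))}$ for $j < i^{*}$, so the inequalities established when the loop passed through those levels continue to hold.
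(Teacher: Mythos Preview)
Your proposal is correct and follows essentially the same approach as the paper: the paper's own proof is a two-sentence sketch (``initially $T=0$; after each update the reconstruction conditions enforce the invariants''), and your plan simply fills in the details of that sketch by walking through \ReconstructF{}, \InsertF{}, and \DeleteF{} separately. The only cosmetic point is that what you call the ``base case'' (the state after a call to \ReconstructF{}) is really an auxiliary lemma you use inside the inductive step, while the genuine base case is the empty data structure with $T=0$; but the logic is sound.
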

\begin{proof}
    The claim holds initially because $T=0$, and it holds {after each update} because of the reconstruction condition of the levels. 
\end{proof}
\begin{lemma}
\label{lm:final_level_stronger}
  For $i \in [T]$,
  $\Rp_i\ne \emptyset$ and
  $\Rc_i \ne \emptyset$ and
  $\Rp_{T+1} = \Rc_{T+1} = \emptyset$.
\end{lemma}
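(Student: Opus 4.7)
The plan is to treat the three assertions separately and reduce them to Lemmas \ref{lm:final_level_empty} and \ref{lm:reconstruction_condition} together with one auxiliary containment invariant. The useful invariant to establish first is that $\Rp_i \supseteq R_i$ throughout the stream for every $i \in [T]$. This holds because \ReconstructF{} sets $\Rp_i \gets R_i$ on the levels it rebuilds (lines \ref{line:level_first_line} and \ref{line:level_filter}), \InsertF{} only enlarges $\Rp_i$ via line \ref{line:insert_add}, and \DeleteF{} modifies neither $\Rp_i$ nor $R_i$ (it only touches $D$). With this containment in hand, the first claim $\Rp_i \ne \emptyset$ for $i \in [T]$ is immediate from Lemma \ref{lm:final_level_empty}, which guarantees $R_i \ne \emptyset$ on exactly that range.

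For $\Rc_i \ne \emptyset$ when $i \in [T]$, I would write $\Rc_i = \Rp_i \setminus D \supseteq R_i^{(b(i))} \setminus D$, using that each bucket sits inside $R_i \subseteq \Rp_i$. Since $R_i$ is the disjoint union of its buckets, $R_i \ne \emptyset$ forces the largest bucket $R_i^{(b(i))}$ to be non-empty. Lemma \ref{lm:reconstruction_condition} then bounds the deleted portion of that bucket by $|R_i^{(b(i))} \cap D| \le \epsDel \cdot |R_i^{(b(i))}|$, and the parameter choice $\epsDel = \epsilon/20 < 1$ from Section \ref{sec:choice_param} leaves at least one surviving element, i.e.\ $R_i^{(b(i))} \setminus D \ne \emptyset$.

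The final assertion $\Rp_{T+1} = \Rc_{T+1} = \emptyset$ is proved by induction on the update stream. The base case is established by the initial call to \ReconstructF{}(1): its while loop runs until the first index $i$ with $R_i = \emptyset$ is reached and then assigns $T \gets i-1$, at which point $\Rp_{T+1} = R_{T+1} = \emptyset$ holds by construction (the loop body sets $\Rp_{i+1} \gets R_{i+1}$ at line \ref{line:level_filter} before incrementing). For the inductive step, \DeleteF{} never touches any $\Rp_j$ directly, and any \ReconstructF{}(i) it triggers reestablishes the invariant by the same termination argument. \InsertF{} can modify $\Rp_{T+1}$ only by advancing to the iteration $i = T+1$, but that branch unconditionally invokes \ReconstructF{}(T+1) at line \ref{line:insert_level}, which again rebuilds levels until an empty one is found and updates $T$ accordingly. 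Finally $\Rc_{T+1} = \Rp_{T+1} \setminus D = \emptyset$ is then automatic. The only delicate bookkeeping is that a reconstruction can change $T$, leaving stale values of $\Rp_j$ for indices $j$ beyond the new $T+1$; I do not expect this to obstruct the argument because such indices lie outside the claim. Overall no single step looks difficult---the main obstacle is simply keeping careful track of how $T$, $R_{T+1}$, and $\Rp_{T+1}$ evolve through each operation.
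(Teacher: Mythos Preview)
Your argument is correct. For the claims with $i\in[T]$ you match the paper: both reduce to $R_i^{(b(i))}\subseteq R_i\subseteq\Rp_i$ together with the deletion bound $|R_i^{(b(i))}\cap D|\le\epsDel\,|R_i^{(b(i))}|$ from Lemma~\ref{lm:reconstruction_condition} to force a survivor in $\Rc_i$ (the paper phrases it contrapositively, but it is the same computation). The difference is in the $T{+}1$ case. The paper does not induct on the stream; it simply invokes the second inequality of Lemma~\ref{lm:reconstruction_condition} at index $T{+}1$ to conclude $|\Rp_{T+1}|\le\tfrac{3}{2}|R_{T+1}|=0$, the last equality coming from Lemma~\ref{lm:final_level_empty}. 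Your direct stream induction is longer but arguably cleaner here, since Lemma~\ref{lm:reconstruction_condition} as stated only ranges over $i\in[T]$; the paper's one-liner tacitly relies on that lemma's proof extending to index $T{+}1$, whereas your approach makes the required invariant explicit.
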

\begin{proof}
  We use Lemma \ref{lm:final_level_empty} in both cases.
    We note that for $T+1$, we have
    $|\Rp_{T+1}| \le \frac{3}{2} |R_{T+1}| = 0$ given Lemma \ref{lm:reconstruction_condition}.
    Therefore,
    $\Rp_{T+1} = \Rc_{T+1}=\emptyset$.

    For $i \in [T]$, we observe that
    if $\Rc_i = \emptyset$, then it follows that
    $R_i^{(b(i))} \subseteq D$, which implies
    $|R_i^{(b(i))} \cap D| = |R_i^{(b(i))}|$.
    According to Lemma \ref{lm:reconstruction_condition}, this is only possible when
    $|R_i^{(b(i))}|=0$.
    Because $b(i)$ was the largest bucket, 
    this implies
    that $|R_i|=0$ 
    which contradicts Lemma \ref{lm:final_level_empty}. 
\end{proof}
\begin{lemma}\label{fact:rp_subset}
  For all $i \in [0, T]$,
  $\Rp_{i+1} \subseteq \Rp_i$.
\end{lemma}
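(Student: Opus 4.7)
The plan is to treat $\Rp_{i+1} \subseteq \Rp_i$ as an invariant and check that \InitF{} establishes it and that \ReconstructF{}, \InsertF{}, and \DeleteF{} each preserve it. For the base case, \InitF{} sets $\Rp_0 = V$ and $\Rp_1 = R_1 = \FilterF{}(V, G_0, \tau) \subseteq V = \Rp_0$; all levels beyond~$1$ are then produced by the subsequent call $\ReconstructF{}(1)$, which falls under the reconstruction case below.

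For a call $\ReconstructF{}(j)$, line~\ref{line:level_first_line} assigns $\Rp_j \gets \Rp_j \setminus D$, which only shrinks $\Rp_j$ and so preserves $\Rp_j \subseteq \Rp_{j-1}$ (using the inductive hypothesis on the unaffected level $j-1$). Inside the while loop, each iteration sets $\Rp_{i+1} = R_{i+1} = \FilterF{}(R_i, G_i, \tau) \subseteq R_i = \Rp_i$ directly from the definition of \FilterF{}, so the invariant holds for every $i \ge j$ after the reconstruction completes; levels $i < j-1$ are untouched by Fact~\ref{fact:level_affect_i_le_j}. For \InsertF{}(v), the critical observation is that the for-loop processes indices $0, 1, 2, \dots$ in strictly increasing order and only leaves via a \textbf{break}, so the set of indices for which $v$ is appended to $\Rp_i$ forms an initial segment $\{0, 1, \dots, i^*\}$. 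Consequently, if $v$ is newly present in $\Rp_{i+1}$, it is also newly present in $\Rp_i$, and since insertion is purely additive on each $\Rp_i$, the inclusion is preserved. Any \ReconstructF{} call triggered at the end of \InsertF{} is then covered by the preceding case. Finally, \DeleteF{} does not modify any $\Rp_i$ directly (it only updates $D$), so the only way it can affect the invariant is through a triggered reconstruction, already handled.

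None of the cases presents a genuine obstacle; the statement is essentially enforced by the code. The only subtlety worth flagging is in the insertion analysis, where one must observe that the for-loop exits via \textbf{break} rather than continuing past a failing check --- this is precisely what guarantees that the indices $i$ at which $v$ enters $\Rp_i$ form a prefix, and hence yields the desired containment at the one index $i^*$ where a new element is added to $\Rp_{i^*}$ but not to $\Rp_{i^*+1}$.
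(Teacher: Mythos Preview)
Your proof is correct and follows essentially the same approach as the paper's: both argue the invariant is established initially and then preserved by \ReconstructF{}, \InsertF{}, and \DeleteF{}, using that \FilterF{} returns a subset for the reconstruction case and that elements are appended to the $\Rp_i$'s in a prefix pattern for the insertion case. Your treatment is slightly more explicit (handling \InitF{} separately and spelling out the prefix structure of the insertion loop), but the underlying argument is the same.
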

\begin{proof}
  The claim holds initially since $T=0$.
  It suffices to show that \InsertF{} and \DeleteF{} preserve it.
  
  We first observe that each time \ReconstructF{}(j)
  is called for some $j \le i$, the property will hold since
  \begin{equation*}
    \Rp_{i+1} = R_{i+1} = \FilterF{}(R_i, G_i) \subseteq R_i = \Rp_i
  \end{equation*}
  It is also clear that calling $\ReconstructF{}(j)$ for $j\ge i+1$ will preserve the property
  since it does not alter $\Rp_i$, while it may (if $j=i+1$)
  decrease $\Rp_{i+1}$. Other than $\ReconstructF{}$,
  the only way either $\Rp_i$ or $\Rp_{i+1}$ are altered is when an element is inserted
  into the sets during processing \InsertF{}, but this also preserves the property since
  an element that was not inserted into $\Rp_i$, will not be inserted into $\Rp_{i+1}$.
\end{proof}
\subsubsection{Random invariants}
Next, we will state and prove the uniformity invariant. 
Before we do this though, we will need a few definitions.
\begin{definition}[History]\label{def:history}
  We define the \emph{History} of Level $i$ as
  \begin{equation}
    H_i := (\Rp_0, \Rp_1, \dots, \Rp_i, R_0, R_1, \dots, R_i, S_1, \dots, S_{i-1}, m_i).
    \label{eq:def_history}
  \end{equation}
  Note that $S_i$ is not included in definition.
  Analogously, we define the random variable $\bH_i$ as 
  \begin{equation*}
    \bH_i := (\bRp_0, \bRp_1, \dots, \bRp_i, \bR_0, \bR_1, \dots, \bR_i, \bS_1, \dots, \bS_{i-1}, \bM_i).
  \end{equation*}
\end{definition}
\begin{remark}
  In the above definition, the random variable $\bH_i$ is defined if and only if $\bT \ge i$.
  Note that $\bH_i$ is not
  defined for $i=T+1$ because
  there is no value of $m_i$.
  This will not be an issue in our analysis since we will work with $\bH_i$ only when $\bT \ge i$. 
  Most notably, when conditioning on the value of $\bH_i$, we will condition on the event $\bT \ge i$ as well.
  Throughout the proofs, we will frequently write $\sigma_i$ instead of
  $\bT \ge i, \bH_i=H_i$ for convenience.
\end{remark}
Our next lemma shows that conditioned on the history of level $i$, its sample set $S_i$
is a random subset of size $m_i$ from $R_i^{(b(i))}$.
\begin{lemma}\label{lm:uniform_lazy}
  For any $i\ge 1$, 
  and any $H_i$ such that
  $\Pr{\bT \ge i, \bH_i = H_i} > 0$,
  \begin{equation}
    \Pr{\bS_i = S | \bT \ge i, \bH_i = H_i} = \frac{1}{\binom{|R_i^{(b(i))}|}{m_i}}\ind{S \subseteq R_i^{(b(i))} \land |S| = m_i} 
    .
    \label{eq:invariant_uniform}
  \end{equation}
\end{lemma}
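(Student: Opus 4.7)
My plan is to prove the lemma by induction on the sequence of algorithmic events—each insertion, deletion, or internal \ReconstructF{} call—maintaining as the inductive invariant both equation \eqref{eq:invariant_uniform} and the auxiliary fact that the current values of the variables comprising $\bH_i$ are measurable with respect to randomness and inputs that do not include $\bS_i$. The central observation is that $\bS_i$ is assigned exactly once—by Line \ref{line:level_choose_sample} inside a call to $\ReconstructF{}(j)$ for some $j \le i$—as a uniformly random subset of size $m_i$ of $R_i^{(b(i))}$, drawn with fresh random bits that are independent of everything in $H_i$. So once I show that nothing happening after that draw either mutates $\bS_i$ or leaks information about it back into $\bH_i$, the lemma will follow immediately.

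For the inductive step I would split into three cases according to the next event. First, if the event triggers $\ReconstructF{}(j)$ for some $j \le i$, then by the time the while-loop reaches level $i$ the quantities $R_i, G_{i-1}, R_i^{(b(i))}, m_i$ are all determined by the update stream, the prior samples $S_1,\dots,S_{i-1}$, and the internal randomness of \CalcSampleCountF, and then $\bS_i$ is redrawn uniformly from $R_i^{(b(i))}$ using independent random bits; this case is immediate. Second, if the event is (or triggers) a reconstruction of a level $j > i$, then by Fact \ref{fact:level_affect_i_le_j} none of $\Rp_0,\dots,\Rp_i,R_0,\dots,R_i,S_1,\dots,S_{i-1},m_i$ is touched, so both $\bH_i$ and $\bS_i$ are unchanged and the invariant is preserved. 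Third, if the event is an update that performs no reconstruction at a level $\le i$, then $\bS_i$ is again unchanged; the only possible change to $\bH_i$ comes from extending some $\Rp_j$ with $j \le i$ during \InsertF{} (Line \ref{line:insert_add}), but whether to add $v$ to $\Rp_j$ depends only on the inserted element, on $f(v|G_{j-1})$, and on $|G_{j-1}|$, quantities that depend on $S_1,\dots,S_{j-1}$ and not on $S_i$. Hence the transition kernel on $\bH_i$ at this step is independent of $\bS_i$, so conditioning on the new value $\bH_i = H_i$ does not perturb the conditional distribution of $\bS_i$, which remains uniform by the inductive hypothesis.

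The main obstacle I anticipate is making the ``independence'' part of the invariant fully rigorous: one has to verify that no information about $\bS_i$ can creep back into the current value of $\bH_i$ through some side channel. The subtlest such channel is that an insertion or deletion may trigger $\ReconstructF{}(j)$ for $j > i$, and the condition that triggers such a call—e.g.\ $|\Rp_{i+1}| \ge \tfrac{3}{2}|R_{i+1}|$ or the $\epsDel$-fraction deletion threshold at a higher level—genuinely depends on $\bS_i$ through $G_i = G_{i-1}\cup S_i$; nevertheless Fact \ref{fact:level_affect_i_le_j} guarantees that the reconstruction itself only mutates variables at levels $\ge j$, so $\bH_i$ is untouched and no information leaks. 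A minor auxiliary point is that the conditioning on $\bT \ge i$ is consistent with $\bH_i=H_i$: by Lemma \ref{lm:final_level_empty}, $\bT\ge i$ is equivalent to $\bR_i\neq \emptyset$, and $\bR_i$ is part of $\bH_i$, so the two conditioning events are compatible and do not introduce any additional bias on $\bS_i$.
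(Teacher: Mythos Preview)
Your proposal is correct and follows essentially the same approach as the paper: induction over the update stream, with the inductive step split according to whether the current update triggers $\ReconstructF{}(j)$ for some $j\le i$; in the reconstruction case the uniformity is immediate from Line~\ref{line:level_choose_sample} (the paper states this as Lemma~\ref{lm:uniform_right_after}), and in the non-reconstruction case one argues that the new value of $\bH_i$ is a deterministic function of the old $\bH_i$ and the update (in particular it does not depend on $\bS_i$), so the conditional law of $\bS_i$ given $\bH_i$ is inherited from the previous step. Your additional observation that a reconstruction at level $j>i$ may be \emph{triggered} by information depending on $\bS_i$ but still leaves $\bH_i$ untouched (Fact~\ref{fact:level_affect_i_le_j}) is exactly the subtlety the paper handles by showing $\bL_i$ is measurable with respect to $\bH_i^{-}$.
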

Before stating the proof, we observe that the property holds after a call to \ReconstructF{}, as we formally show in the following 
Lemma.
\begin{lemma}\label{lm:uniform_right_after}
  Assume we call $\ReconstructF{}(j)$ for some $j\le i$ in a data structure with values
  $T^{-}, R_0^-, \dots$, satisfying $T^- \ge i$,
  obtaining the new (random) values
  $\bT, \bR_0, \dots$.
  Then for all sets $S$,
  \begin{equation*}
    \Pr{\bS_i = S | \bT \ge i, \bH_i = H_i} = \frac{1}{\binom{|R_i^{(b(i))}|}{m_i}}\ind{S \subseteq R_i^{(b(i))} \land |S| = m_i} 
    .
  \end{equation*}
\end{lemma}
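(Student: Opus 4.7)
The plan is to trace through the execution of $\ReconstructF(j)$ and observe that when the algorithm reaches the sampling line \ref{line:level_choose_sample} at iteration $i$ of its while loop, it draws $\bS_i$ uniformly at random, using fresh random bits, from exactly the size-$m_i$ subsets of $R_i^{(b(i))}$. The task is just to translate this direct construction into a statement about the conditional distribution given $\bH_i = H_i$.

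First, I would fix some $H_i$ with $\Pr{\bT \ge i, \bH_i = H_i} > 0$ and verify that both $R_i^{(b(i))}$ and $m_i$ are deterministic functions of $H_i$. The set $R_i$ is a component of $H_i$, and the buckets $R_i^{(0)}, R_i^{(1)}, \ldots$ are obtained deterministically from $R_i$ together with $G_{i-1} = \bigcup_{\ell < i} S_\ell$, which is determined by the samples $S_1, \ldots, S_{i-1}$ in $H_i$. Choosing $b(i)$ via a deterministic tie-breaking rule in the $\argmax$ then fixes $R_i^{(b(i))}$, and $m_i$ itself is a component of $H_i$.

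Next, I would introduce the $\sigma$-algebra $\mathcal{F}_i$ generated by all random bits used by the algorithm strictly before line \ref{line:level_choose_sample} in iteration $i$ of \ReconstructF$(j)$, i.e., those consumed during the earlier iterations $j, j+1, \ldots, i-1$ (including the sampling of $S_j, \ldots, S_{i-1}$) and by the call $\CalcSampleCountF(R_i^{(b(i))}, G_{i-1}, \tau^{(i)})$ that produces $m_i$. By construction, $\bH_i$ and the event $\{\bT \ge i\}$ (equivalent to the while loop reaching iteration $i$, which only requires $R_i \ne \emptyset$) are both $\mathcal{F}_i$-measurable. The sampling in line \ref{line:level_choose_sample} uses fresh random bits independent of $\mathcal{F}_i$ and, given $\mathcal{F}_i$, produces a uniformly random size-$m_i$ subset of $R_i^{(b(i))}$. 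Hence, for any set $S$,
\begin{equation*}
\Pr{\bS_i = S \mid \mathcal{F}_i,\ \bT \ge i} = \frac{1}{\binom{|R_i^{(b(i))}|}{m_i}} \ind{S \subseteq R_i^{(b(i))} \wedge |S| = m_i}.
\end{equation*}
Since the right-hand side depends on $\mathcal{F}_i$ only through $R_i^{(b(i))}$ and $m_i$, which are determined by $\bH_i$, taking conditional expectations given $\bH_i = H_i$ and $\bT \ge i$ (via the tower property) preserves the identity, yielding the claim.

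The only real subtlety to flag is the independence between the random bits used inside \CalcSampleCountF (which produce $m_i$) and the bits used to draw $\bS_i$ on line \ref{line:level_choose_sample}; this is immediate from the algorithm's description, where each sampling step consumes fresh independent randomness, but it is worth stating explicitly so that conditioning on $m_i$ (a random quantity) does not bias the subsequent uniform draw. Beyond this, the proof is essentially an unwrapping of the definitions, and I would not expect any technical obstacle.
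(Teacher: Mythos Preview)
Your proposal is correct and follows essentially the same approach as the paper's proof: both argue that all components of $H_i$ are fixed before line~\ref{line:level_choose_sample} executes and do not change afterward, so the uniform draw there directly gives the claimed conditional distribution. Your version is simply more formal, making the measurability and independence explicit via a $\sigma$-algebra and the tower property, and you rightly flag the independence between the randomness inside \CalcSampleCountF{} and the subsequent uniform sample---a point the paper leaves implicit.
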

\begin{proof}
We observe that
before $S_i$ is sampled
in Line \ref{line:level_choose_sample},
all of the values compromising $H_i$ are already determined, and will not 
change after $S_i$ is sampled.
Therefore, since the claim holds
when $S_i$ is sampled (because of Line~\ref{line:level_choose_sample}), it holds afterwards as well.
\end{proof}

Next, we prove Lemma \ref{lm:uniform_lazy} by showing that \eqref{eq:invariant_uniform} is
preserved after insertions and deletions.
\begin{proof}[Proof of Lemma \ref{lm:uniform_lazy}]
  We will prove the claim by induction on the time step.
  At the beginning of the stream, the claim holds trivially since
  the event $\bT \ge i$ is impossible because of $T = 0 < 1 \le i$.

  Assuming that the claim holds for some time step, we will show that it holds in the next time step as well.
  More formally, for an insertion or deletion of an element $v$ in the stream,
  let $\bT^{-}, \bR_0^{-}, \dots$ denote the values of the data structure before the operation
  and $\bT, \bR_0, \dots$, denote the values afterwards.
  For a fixed $i$, we need to show that \eqref{eq:invariant_uniform} holds.
  The main idea behind the proof is to consider two cases, based on whether or not $\LevelF{}(i)$ was triggered by the insertion or deletion operation.
  In the first case, we will show that Lemma~\ref{lm:uniform_right_after} implies the claim. 
  In the second case, we will 
  use the induction hypothesis,
  i.e, the fact that \eqref{eq:invariant_uniform} held for the values $\bT^{-}, \bR_0^{-}, \dots$,
  to prove the claim.
  As we will observe, a crucial aspect of our proof in the second case is that the decision to invoke \ReconstructF{}(i) in \InsertF{} and \DeleteF{} procedures is solely determined by the History of level $i$ and \emph{not} influenced by the actual samples $S_i$.

  We now proceed with a formal proof.
  Let $\bL_i$ be a random variable that takes the value $1$ if $\LevelF{}(i)$ was called because of the update,
  and takes the value $0$ otherwise.
  Let the shorthand $\sigma_i$ denote the event
  $\bT \ge i \land \bH_i = H_i$.
  We need to show that
  $\Pr{\bS_i = S | \sigma_i} = \frac{1}{\binom{|R_i^{(b(i))}|}{m_i}} \cdot \ind{S \subseteq R_i^{(b(i))} \land |S| = m_i}$. 
  Conditioning on $\bL_i$,
  we can rewrite $\Pr{\bS_i=S | \sigma_i}$ as
  \begin{equation}
    \Pr{\bS_i = S| \sigma_i} = 
    \Exu{L_i \sim \bL_i | \sigma_i}{
      \Pr{\bS_i = S| \sigma_i, \bL_i=L_i}
    }.
    \label{eq:s_sigma_decompose_l}
  \end{equation}
  It therefore suffices to prove that 
  \begin{equation}
    \Pr{\bS_i = S| \sigma_i, \bL_i=L_i}
    =
    \frac{1}{\binom{|R_i^{(b(i))}|}{m_i}} \cdot \ind{S \subseteq R_i^{(b(i))} \land |S| = m_i},
    \label{eq:jul11_1815}
  \end{equation}
  for both $L_i=0$ and $L_i=1$.
  For $L_i=1$, the claim holds
  by Fact~\ref{lm:uniform_right_after}
  since by definition, $L_i=1$ means \ReconstructF{}(j) was called for some $j \le i$.

  We therefore focus on $L_i=0$.
  We first observe that
  $\sigma_i, \bL_i=0$ implies
  $\bT^- \ge i$. 
  This is because
  $\bL_i=0$ means
  $\ReconstructF{}(j)$ for any $j \le i$ was not called.
  Therefore if $\bT^-$ were strictly less than $i$, then  $\bT$ would
  equal $\bT^-$ (because
  $\ReconstructF{}$ can only be called for values upto $T+1$ and if $\ReconstructF{}$ is not called, then $T$ does not change.)
  which is not possible since $\bT \ge i$.
  ~\\
  Since $\bT^- \ge i$, we can condition on the value of
  the \emph{previous history} of level $i$.
  More formally, define the random variable $\bH_i^-$ as
  \begin{equation*}
    \bH_i^- := (\bRp_0^-, \bRp_1^-, \dots, \bRp_i^-, \bR_0^-, \bR_1^-, \dots, \bR_i^-, \bS_1^-, \dots, \bS_{i-1}^-, \bM_i^-).
  \end{equation*}
  By law of iterated expectation,
  \begin{align}
    \notag
    \Pr{\bS_i = S| \sigma_i, \bL_i=0}
    &=
    \Pr{\bS_i = S| \sigma_i, \bL_i=0, \bT^-\le i}
    \\&=
    \Exu{H_i^- \sim \bH_i^- | \sigma_i, \bL_i=0
    , \bT^-\le i}{
      \Pr{\bS_i = S| \sigma_i, \bL_i=0, \bT^-\le i, \bH_i^- = H_i^-}
    }
    \label{eq:Jul10_2249}
  \end{align}
  where the expectation is taken over all $H_i^-$ with positive probability.

  We now observe that conditioned on $\bT^-\le i, \bH_i^-=H_i^-$, the value of $\bL_i$ always equals $0$.
  This is because $L_i$ is a function of
  $(\Rp_{0}, \dots \Rp_{i}, R_0,\dots, R_i, D)$,
  which is determined by $H_i$. 
  Note that $D$ is deterministic since it contains all the deleted elements in the
  stream and is independent of our algorithm.
  Therefore, since we only consider
  $H_i^-$ with positive probability, we can drop the conditioning on $\bL_i=0$
  in the $\Pr{\bS_i = S| \sigma_i, \bL_i=0, \bT^-\le i, \bH_i^- = H_i^-}$ term of \eqref{eq:Jul10_2249}
  since it is redundant.
  ~\\
  We can similarly drop $\sigma_i$. 
  This is because as $\bL_i=0$, the value of $\bH_i$ is deterministic
  conditioned on $\bH_i^-=H_i^-$. Notably, the values of $\bR_1, \dots, \bR_{i}$
  are going to be $R_1^-, \dots, R_i^-$. The same
  can be said for $\bS_1, \dots, \bS_{i-1}$ and $\bM_i$. As for
  $\bRp_1, \dots, \bRp_{i}$,
  even though their value maybe different from 
  $\Rp_1^-, \dots, \Rp_{i}^-$, it is \emph{still deterministic}
  conditioned on $\bH_i^-=H_i^-$
  as the decision to add elements in Line \ref{line:insert_add} is
  based on the values in $H_i^-$ only.
  Given the above observation, we can rewrite 
  $\Pr{\bS_i = S| \sigma_i, \bL_i=0}$ as
  \begin{align*}
    \Pr{\bS_i = S| \sigma_i, \bL_i=0}
    &=
    \Exu{H_i^- \sim \bH_i^- | \sigma_i, \bL_i=0}{
      \Pr{\bS_i = S| \bT^-\le i, \bH_i^- = H_i^-}
    }
  \end{align*}
  We can further replace $\bS_i=S$ with $\bS_i^-=S$ as
  $\bL_i=0$ implies $\bS_i = \bS_i^-$. 
  Denoting the largest bucket in $R_i^{-}$ with
  $R_i^{-, (b(i))}$,
  it follows that
  \begin{align*}
    \Pr{\bS_i = S| \sigma_i, \bL_i=0}
    &=
    \Exu{H_i^- \sim \bH_i^- | \sigma_i, \bL_i=0}{
      \Pr{\bS_i^- = S| \bT^-\le i, \bH_i^- = H_i^-}
    }
    \\&\overset{(a)}{=}
    \Exu{H_i^- \sim \bH_i^- | \sigma_i, \bL_i=0}{
      \frac{1}{\binom{|R_i^{(-, b(i))}|}{m_i^-}} \cdot \ind{S \subseteq R_i^{(-, b(i))} \land |S| = m_i^-} 
    }
    \\&\overset{(b)}{=}
    \Exu{H_i^- \sim \bH_i^- | \sigma_i, \bL_i=0}{
      \frac{1}{\binom{|R_i^{(b(i))}|}{m_i}} \cdot \ind{S \subseteq R_i^{(b(i))} \land |S| = m_i} 
    }
    \\&=
    \frac{1}{\binom{|R_i^{(b(i))}|}{m_i}} \cdot \ind{S \subseteq R_i^{(b(i))} \land |S| = m_i} 
  \end{align*}
  where for $(a)$, we have used the induction assumption,
  and for $(b)$ we have used the fact that
  $R_i = R_i^{-}$ and $m_i=m_i^{-}$ because of $\bL_i=0$. 
  We have therefore proved \eqref{eq:jul11_1815} 
  for both $L_i=0$ and $L_i=1$, which completes the proof given \eqref{eq:s_sigma_decompose_l}.
\end{proof}
\subsection{Properties of \CalcSampleCountF{}}
\label{sec:theory_sample_count}
In this section, we state the key properties of \CalcSampleCountF{} that will be useful in our proofs.
We begin by defining the notion of Suitable sample count (Definition \ref{def:suitable_sample_count}), which captures
the properties we require in our proofs. We then show that the output of \CalcSampleCountF{} is suitable with high probability (Lemma \ref{lm:bound_countF}).

\begin{definition}[Suitable sample count]
  \label{def:suitable_sample_count}
  Given $R', G', \tau'$ such that
  $R' \ne \emptyset$ and $\FilterF{}(R', G', \tau') = R'$,
  define their \emph{Suitable sample count} $M^*(R', G', \tau')$ as the set of all integers like $m'$ such that
  $m' \ge 1$, and
  if we sample a set $S'$ with $m' - 1$ elements from $R'$ and an element $s'$ from 
  $R' \backslash S'$,
  \begin{equation}
    \Ex{\abs{\FilterF{}(R', G' \cup S' \cup s', \tau')}} \le (1-\epsSamp/4)|R'|
    \label{eq:sample_count_filter}
  \end{equation}
  and
  \begin{equation}
    \Pr{f(s' | G' \cup S') \ge \tau'} \ge (1-2\epsSamp)
    \label{eq:sample_count_quality}
  \end{equation}
  Define the suitable sample count of level $i$ as
  $M_i^* := M^*(R_i^{(b(i))}, G_{i-1}, \tau^{(i)})$.
\end{definition}

\begin{lemma}\label{lm:bound_countF}
  Consider a call to $\CalcSampleCountF{}(R', G', \tau')$
  with values satisfying $\FilterF{}(R', G', \tau') = R'$ 
  and $R' \ne \emptyset$.
  The number of queries made by \CalcSampleCountF{} is bounded by
  \begin{math}
    \mO\left(|R'| \cdot \frac{\log(k)}{\epsSamp^3}\right).
  \end{math}
  Furthermore, the output is a suitable sample count (Definition \ref{def:suitable_sample_count}) with probability at least $1-\frac{\epsSamp}{100k^{10}}$.
\end{lemma}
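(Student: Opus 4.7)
My plan is to split the proof into a straightforward query-complexity bound and a correctness bound whose only subtle ingredient is one symmetry identity. For the query count, each trial inside \ReduceMeanF{} costs at most two oracle queries (for $f(G'\cup S')$ and $f(G'\cup S'\cup\{s\})$) and the trial count is $T=\lceil (4/\epsSamp^{2})\log(200k^{11}/\epsSamp)\rceil = O(\log(k)/\epsSamp^{2})$, so one \ReduceMeanF{} call uses $O(\log(k)/\epsSamp^{2})$ queries. The outer \CalcSampleCountF{} performs one preliminary call on the upper bound $M=\min\{k-|G'|,|R'|\}$ followed by at most $O(\log|R'|)=O(\log k)$ binary-search iterations, each invoking \ReduceMeanF{} once; hence the total is $O(\log^{2}(k)/\epsSamp^{2})$, which fits comfortably inside the stated bound $O(|R'|\log(k)/\epsSamp^{3})$.

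For correctness I would introduce, for each candidate size $m'\ge 1$, the quantity
\[ q(m') \;:=\; \Pr{f(s\mid G'\cup S')\ge \tau'}, \]
where $S'$ is a uniform size-$(m'-1)$ subset of $R'$ and $s$ is uniform in $R'\setminus S'$; this is exactly what \ReduceMeanF{} estimates by $T$ i.i.d.\ Bernoulli samples of the underlying event. Hoeffding's inequality gives $\Pr{|\hat p(m')-q(m')|>\epsSamp/2}\le 2\exp(-T\epsSamp^{2}/2)\le \epsSamp^{2}/(20000\,k^{22})$, and a union bound over the $O(\log k)$ invocations shows that every empirical estimate lies within $\epsSamp/2$ of the truth with probability at least $1-\epsSamp/(100\,k^{10})$. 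I would condition on this good event from here on.

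Under this conditioning the returned $m$ is suitable. The quality requirement \eqref{eq:sample_count_quality} is immediate: if $m$ was accepted via $\hat p(m)\ge 1-\epsSamp$ then $q(m)\ge \hat p(m)-\epsSamp/2\ge 1-2\epsSamp$, and if instead $m=1$ was returned untested then the precondition $\FilterF(R', G', \tau')=R'$ forces $q(1)=1$. For the filter requirement \eqref{eq:sample_count_filter}, the key identity is
\[ \Ex{\abs{\FilterF(R', G'\cup S'\cup\{s'\}, \tau')}} \;=\; (|R'|-m)\cdot q(m+1), \]
obtained by viewing $T:=S'\cup\{s'\}$ as a uniform size-$m$ subset of $R'$, drawing a fresh uniform $s''\in R'\setminus T$, and noting that $\Pr{s''\in\FilterF(R', G'\cup T, \tau')\mid T}=|\FilterF|/(|R'|-m)$ while the joint law of $(T,s'')$ is exactly the law that defines $q(m+1)$. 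Two termination cases then close the argument. If $m$ equals the initial upper bound $M$, then either $|G'\cup T|=k$ or $T=R'$, and in either case \FilterF{} returns $\emptyset$, so \eqref{eq:sample_count_filter} is trivial. Otherwise the binary search ended with $M=m+1$ and a failing test $\hat p(m+1)<1-\epsSamp$, which by concentration gives $q(m+1)<1-\epsSamp/2$, hence $(|R'|-m)\,q(m+1)\le |R'|(1-\epsSamp/2)\le (1-\epsSamp/4)|R'|$.

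The only non-routine ingredient is the symmetry identity linking $\Ex{|\FilterF|}$ to $q(m+1)$; it is what converts the ``probability that the last sampled element still crosses the threshold'' (which is what \ReduceMeanF{} actually measures) into the ``expected number of surviving elements after one more sample'' (which is what Definition~\ref{def:suitable_sample_count} demands), and it is precisely this conversion that justifies the binary-search criterion used inside \CalcSampleCountF{}. Everything else is Chernoff bookkeeping plus the short case split on how the binary search terminated.
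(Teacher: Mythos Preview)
Your proof is correct and follows the same route as the paper: Hoeffding plus a union bound for concentration of the \ReduceMeanF{} estimates, then the case split on how \CalcSampleCountF{} terminates, using the same symmetry identity $\Ex{|\FilterF|}\le |R'|\cdot q(m+1)$ to convert the failed test at $m+1$ into the filter condition \eqref{eq:sample_count_filter}. One minor bookkeeping point: your query bound $O(\log^2(k)/\epsSamp^2)$ is not automatically inside the stated $O(|R'|\log(k)/\epsSamp^3)$ in all regimes (take $|R'|=1$ and $\log k\gg 1/\epsSamp$); the paper sidesteps this by bounding the number of \ReduceMeanF{} calls by $\min\{|R'|,\log k\}\le |R'|$ rather than by $\log k$ alone.
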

\begin{proof}
  To bound the number of queries, we note that
  there will be at most $\min\{|R'|, \log(k)\} \le |R'|$ 
  calls to \ReduceMeanF{}, and each call makes at most
  \begin{align*}
    \mO\left(\frac{4}{\epsSamp^2}\log(200k^{10}/\epsSamp)\right)
    = 
    \mO\left(\frac{1}{\epsSamp^2}\cdot \left(\log(k) + \log(\frac{1}{\epsSamp})\right)\right),
  \end{align*}
  queries, which implies the lemma's statement.

  We therefore focus on proving that the output is suitable with probability at least
  $1 - \frac{\epsSamp}{100k^{10}}$.

  For a number $m'$, 
  define the value $I_{m'}$ as
  \begin{equation*}
    \Pr{f(s' | G' \cup S') \ge \tau'}.
  \end{equation*}
  where $s', S'$ are sampled as in the Definition \ref{def:suitable_sample_count}, i.e.,
  $S'$ has size $m'-1$ and is sampled from $R'$, and $s'$ is an element sampled from
  $R' \backslash S'$. Observe that $I_{m'}$ is the expectation of the Bernoulli random variable
  $\ind{f(s' | G' \cup S') \ge \tau'}$, while
  $\ReduceMeanF{}(R', G', m')$ estimates this mean by repeated sampling.
  By Hoeffding's inequality, we conclude that for each call to $\ReduceMeanF{}(R', G', m')$,
  \begin{align*}
    \Pr{\abs{I_{m'} - \ReduceMeanF{}(R', G', m')} \ge \epsSamp/2} &\le
    2e^{-\left\lceil \frac{4}{\epsSamp^2}\log(200k^{11}/\epsSamp) \right\rceil \cdot (\frac{\epsSamp}{2})^2}
    \\&\le
    2e^{-\frac{4}{\epsSamp^2}\log(200k^{11}/\epsSamp)\cdot (\frac{\epsSamp}{2})^2}
    \\&\le
    2e^{-\log(200k^{11}/\epsSamp)}
    \\&\le
    \frac{\epsSamp}{100k^{11}}
  \end{align*}
  The above formula is an upper bound on the probability that a single
  call to \ReduceMeanF{} is off by at most $\epsSamp/2$. 
  Since \CalcSampleCountF{} makes at most $M\le k$ calls, a union bound implies that
  with probability at least $1-\frac{\epsSamp}{100k^{10}}$,
  \ReduceMeanF{} is off by at most $\frac{\epsSamp}{2}$ for \emph{all} the calls
  made in \CalcSampleCountF{}. In other words,
  \begin{align}
      \Pr{\forall m' \in [1, \min\{k - |G'|, |R'|\}: \abs{I_{m'} - \ReduceMeanF{}(R', G', m') }\ge \frac{\epsSamp}{2}} \le \frac{\epsSamp}{100k^{10}}.
      \label{eq:17ajak1}
  \end{align}
  For the rest of the proof, we will
  show that whenever
  all of the
  calls to \ReduceMeanF{}
  are off by at most $\frac{\epsSamp}{2}$,
  the output is guaranteed to be a suitable sample count.
  This implies the Lemma's statement because of \eqref{eq:17ajak1}. 

  If $M$ is outputted in Line~\ref{line:output_M_ajk} of Algorithm~\ref{alg:sample}, we conclude that $\Pr{I_M} \ge 1-\frac{3}{2}\epsSamp$,
  which means $M$ satisfies \eqref{eq:sample_count_quality}.
  Note however that $M$ always satisfies \eqref{eq:sample_count_filter}
  since
  $\FilterF{}(R', G' \cup S' \cup s', \tau')$ will always equal $\emptyset$.
  ~\\
  We therefore assume that the output was some value $m' < M$ in Line~\ref{line:output_m_ajk} of Algorithm~\ref{alg:sample}. 
  We note that $\ReduceMeanF{}(m') \ge 1 - \epsSamp$. 
  This holds because all of the values that $m$ can take in Algorithm \ref{alg:sample} satisfy this and the output is $m$; the value $m=1$ satisfies this cause
  $I_{1}=1$ and the others satisfy this given the condition for setting $m$.
  Since $\ReduceMeanF{}(m')$ was accurate within $\epsSamp/2$
  and $\ReduceMeanF{}(m') \ge 1 - \epsSamp$,
  it follows that
  $I_{m'} \ge 1-\frac{3}{2}\epsSamp$, which implies \eqref{eq:sample_count_quality}.

  It remains to show \eqref{eq:sample_count_filter}.
  Sampling $S'$ and $s'$ as before, 
  \begin{align*}
    \Exu{S', s'}{
      \abs{
        \FilterF{}(R', G' \cup S' \cup s', \tau')
        }
      }
    &=
    \Exu{S', s'}{
      \sum_{e \in R'}
        \ind{e \in \FilterF{}(R', G' \cup S' \cup s', \tau')}
      }
    \\&\overset{(a)}{=}
    \Exu{S', s'}{
      \sum_{e \in R' \backslash (S' \cup s')}
        \ind{e \in \FilterF{}(R', G' \cup S' \cup s', \tau')}
      }
  \end{align*}
  where for $(a)$, we have used the fact that $\FilterF{}(A, B, \tau) \cap B$
  is $\emptyset$ for any sets $A$ and $B$.
  Letting $e'$ be a random sample from $R' \backslash (S' \cup s')$,
  this implies that
  \begin{align*}
    \Exu{S', s'}{
      \abs{
        \FilterF{}(R', G' \cup S' \cup s, \tau')
        }
      }
    &=
    \mathbb{E}_{S', s'}\Bigg[{
      |R' \backslash (S' \cup s')| \cdot
      \mathbb{E}_{e'}\Big[{
        \ind{e'\in \FilterF{}(R', G' \cup S' \cup s', \tau')}
        }\Big]
      }\Bigg] 
    \\&\le
    |R'| \cdot
    \Exu{S', s'}{
      \Exu{e'}{
        \ind{e'\in \FilterF{}(R', G' \cup S' \cup s', \tau')}
        }
      }
    \\&=
    |R'| \cdot
    \Pru{S', s', e'}{
        e'\in \FilterF{}(R', G' \cup S' \cup s', \tau')
      }
    \\&\overset{(a)}{=}
    |R'| \cdot
    \Pru{S', s', e'}{
        f(e'| G' \cup S' \cup s') \ge \tau'
      }
  \end{align*}
  {where for $(a)$, we have used the fact that
  since $m' < M \le k - |G'|$, we have $|G' \cup S' \cup s'| < k$, which implies $e'\in \FilterF{}(R', G' \cup S' \cup s', \tau')$ is equivalent to 
  $f(e'| G' \cup S' \cup s') \ge \tau'$.
  }
  Note however that since $S' \cup s'$ is a random subset of size $m'$
  and $e'$ is a random sample of $R' \backslash (S' \cup s')$, we have
  $ \Pru{S', s', e'}{ f(e | G' \cup S' \cup s') \ge \tau' }$ equals
  $I_{m'+1}$ (note that $m' + 1 \le M$ because we had assumed earlier that $m'< M$). Since $m'\ne M$, we conclude that
  $\ReduceMeanF{}(R', G', m'+1) \le 1-\epsSamp$, which implies
  $I_{m'+1} \le 1-\frac{\epsSamp}{2}$, which implies \eqref{eq:sample_count_filter}.
\end{proof}
The above lemma effectively shows that $m_i \in M_i^*$
holds with high probability \emph{right after $\ReconstructF{}(j)$}
is called for some $j \le i$. 
To prove our approximation guarantee, we will need a stronger result that shows this holds at an arbitrary point in the stream. 
Before stating the result, we define the pre-count history of a level as, effectively,
its history without $m_i$.
\begin{definition}[Pre-count History]\label{def:pre_history}
  We define the \emph{Pre-count History} of Level $i$ as
  \begin{equation}
    H_i^{-m} := (\Rp_0, \Rp_1, \dots, \Rp_i, R_0, R_1, \dots, R_i, S_1, \dots, S_{i-1}).
    \label{eq:def_pre_count_history}
  \end{equation}
  Note that this is similar to the definition of $H_i$ in \eqref{eq:def_history},
  with the difference that $m_i$ is no longer included.
  Analogously, we define the random variable $\bH_i^{-m}$ as 
  \begin{equation*}
    \bH_i^{-m} := (\bRp_0, \bRp_1, \dots, \bRp_i, \bR_0, \bR_1, \dots, \bR_i, \bS_1, \dots, \bS_{i-1}).
  \end{equation*}
\end{definition}
\begin{lemma}\label{lm:bound_countF_always}
  At any point in the stream,
  for any $i\ge 1$,
  \begin{equation*}
    \Pr{\bM_i \notin M_i^* | \bT \ge i, \bH_i^{-m} = H_i^{-m}} \le \frac{\epsSamp}{100k^{10}}
  \end{equation*}
\end{lemma}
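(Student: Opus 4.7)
The plan is to reduce the statement to Lemma \ref{lm:bound_countF} by tracing back to the unique moment at which the current value of $\bM_i$ was actually computed. Let $J_i$ denote the most recent time (up to the current point in the stream) at which $\ReconstructF{}(j)$ was called for some $j \le i$ and this call reached level $i$. Since $\bT \ge i$, level $i$ must have been built at some point, and since $\bM_i$ is only modified inside $\ReconstructF{}$ at level $i$, the value $\bM_i$ we see now equals the value produced by the call to $\CalcSampleCountF{}(R_i^{(b(i))}, G_{i-1}, \tau^{(i)})$ that occurred at time $J_i$.

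The key observation is that the inputs $R_i^{(b(i))}, G_{i-1}, \tau^{(i)}$ to this call are exactly the quantities present at the current time, and hence are fully determined by $\bH_i^{-m}$. Indeed, $R_i$ can only change through reconstructions at level $\le i$ and $J_i$ is by definition the last such reconstruction; $G_{i-1}$ is a deterministic function of $\bS_1,\dots,\bS_{i-1}$, which are components of $\bH_i^{-m}$; and the bucket decomposition $R_i^{(b(i))}$ and threshold $\tau^{(i)}$ are deterministic functions of $R_i$ and $G_{i-1}$. Therefore, fixing $\bH_i^{-m}=H_i^{-m}$ fixes the inputs to $\CalcSampleCountF{}$ at $J_i$, and by Lemma \ref{lm:bound_countF} the probability over the fresh random bits used inside that single call that the output fails to lie in $M_i^*$ is at most $\frac{\epsSamp}{100k^{10}}$.

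The main obstacle is to justify that conditioning on the full \emph{current} pre-count history $\bH_i^{-m}=H_i^{-m}$ does not introduce bias beyond what conditioning on the state at $J_i$ would give. The worry is twofold: first, $\bRp_i$ may grow through insertions between $J_i$ and the current time, so the current $\bH_i^{-m}$ carries information not available at $J_i$; and second, reconstructions at levels $>i$ during this interval are themselves random events whose outcomes might correlate with $\bM_i$. I would dispatch the first point by noting that these insertion updates are deterministic functions of $G_{i-1}$ and the adversarial stream (Line \ref{line:insert_add} of Algorithm \ref{alg:insert} adds $v$ to $\bRp_i$ iff $f(v|G_{i-1})\ge\tau$), and $G_{i-1}$ is part of $H_i^{-m}$; so conditioning on the current $\bRp_i$ adds nothing beyond conditioning on the post-$J_i$ value. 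For the second point, Fact \ref{fact:level_affect_i_le_j} ensures that reconstructions at levels $>i$ do not alter $\bR_i, \bRp_i, G_i$, hence they do not appear in $\bH_i^{-m}$; moreover $\bM_i$ is a deterministic function only of the inputs at $J_i$ and the fresh random bits consumed inside that call, which are independent of the random bits used by higher-level reconstructions.

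Combining these observations, the conditional distribution of $\bM_i$ given $\bT\ge i,\,\bH_i^{-m}=H_i^{-m}$ coincides with the distribution of the output of $\CalcSampleCountF{}(R_i^{(b(i))}, G_{i-1}, \tau^{(i)})$ with these inputs fixed. Invoking Lemma \ref{lm:bound_countF} then gives $\Pr{\bM_i \notin M_i^* \mid \bT \ge i,\,\bH_i^{-m}=H_i^{-m}} \le \frac{\epsSamp}{100k^{10}}$, as required.
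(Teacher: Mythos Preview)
Your argument is correct and reaches the same conclusion, but it proceeds along a genuinely different route from the paper. The paper proves the lemma by induction on the update stream, exactly mirroring the structure of the proof of Lemma~\ref{lm:uniform_lazy}: it introduces an indicator $\bL_i$ for whether $\ReconstructF{}(j)$ with $j\le i$ was triggered by the current update, applies Lemma~\ref{lm:bound_countF} directly when $\bL_i=1$, and when $\bL_i=0$ observes that both $\bM_i$ and $M_i^*$ are unchanged from the previous step, so the induction hypothesis carries over after checking that the extra conditioning on the current $\bH_i^{-m}$ is redundant given the previous one.

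Your approach instead collapses the whole history into a single ``look-back'' to the random time $J_i$ at which the present $\bM_i$ was produced, and then argues directly that the event $\{\bT\ge i,\ \bH_i^{-m}=H_i^{-m}\}$ is measurable with respect to random bits disjoint from those consumed inside that particular call to \CalcSampleCountF{}. This is more intuitive and avoids the step-by-step bookkeeping, at the price of having to verify carefully that (i) the reconstruction triggers at levels $\le i$ between $J_i$ and now depend only on $R_{j},\Rp_{j},D$ for $j\le i$ and hence not on $\bM_i$, and (ii) even though $J_i$ itself need not be pinned down by $H_i^{-m}$, the conditional law of $\bM_i$ is the same for every compatible value of $J_i$ because the inputs to \CalcSampleCountF{} are fixed by $H_i^{-m}$ and the fresh bits at different times are i.i.d. You address (i) explicitly and (ii) implicitly; making (ii) explicit would tighten the argument. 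The paper's inductive approach buys reusability (it is the same template as Lemma~\ref{lm:uniform_lazy}) and sidesteps having to reason about the random stopping time $J_i$ altogether.
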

\begin{proof}
  The proof follows in the same manner as Lemma \ref{lm:uniform_lazy}.

  We prove the claim by induction on the update stream.
  Initially, the claim holds trivally since $\bT\ge i$ is impossible.
  Assuming the lemma's statement holds before an update, we will show that it holds for the new values as well.
  Let the superscript $^{-}$ to denote the values before the insertion, e.g., $T^{-}$ denotes the number of levels before insertion and $T$ denotes the number of levels after insertion.

  As before, let $\bL_i$ denote a random variable that is set to 1 if
  $\ReconstructF{}(j)$ is called for some $j\le i$ and set to 0 otherwise.
  We will show that
  \begin{equation*}
    \Pr{\bM_i \notin M_i^* | \bT \ge i, \bH_i^{-m} = H_i^{-m}, \bL_i=L_i} \le \frac{\epsSamp}{100k^{10}}
  \end{equation*}
  for both $L_i=0$ and $L_i=1$. For $L_i=1$, the claim follows from
  Lemma \ref{lm:bound_countF}.
  As for $L_i=0$, we note that this implies
  $\bT^- \ge i$ because
  if $\bT^-$ were $< i$,
  then $\bT$ would have been equal to  $\bT^-$ because
  $\bL_i=0$ means that $\ReconstructF{}(j)$ was not invoked for
  any $j\le i$. 
  Let $(H_i^{-m})^{-}$ denote the value of pre-count history before the update. 
  By law of iterated expectation, it suffices to show
  \begin{equation}
    \Pr{\bM_i \notin M_i^* | \bT \ge i, \bT^- \ge i, \bH_i^{-m} = H_i^{-m}, (\bH_i^{-m})^{-} = (H_i^{-m})^{-}, \bL_i=0} \le \frac{\epsSamp}{100k^{10}}
    \label{eq:jul12_1507}
  \end{equation}
  for all $(H_i^{-m})^{-}$ that have positive probability conditioned on $\bT\ge i,\bH_i^{-m}=H_i^{-m}, \bL_i=0$.
  As before, we can drop the $\bT \ge i, \bH_i^{-m}=H_i^{-m}, \bL_i=0$ from the condition
  since they are implied by $(\bH_i^{-m})^{-}=(H_i^{-m})^{-}$. 
  We can further replace $\bM_i \notin M_i^*$ with
  $\bM_i^{-} \notin (M_i^{*})^{-}$, where $(M_i^{*})^{-}$ is the set of suitable sample counts for level $i$ before the update, as determined by $(H_i^{-m})^{-}$.
  This is because both $\bM_i$ and $M_i^*$ are unchanged through the udpate (note that $M_i^*$ is a function of
  $R_i, G_{i-1}$ and $(R_i, G_{i-1})=(R_i^{-}, G_{i-1}^{-}$)).
  This changes \eqref{eq:jul12_1507} to the induction hypothesis,
  which finishes the proof.
\end{proof}
\subsection{Approximation guarantee}\label{sec:proof_approx}
In this section, we prove the approximation guarantee of our algorithm.

Our analysis is based on a novel relaxation function that may be of independent interest.
We first introduce the relaxation function $f'$ and show that 
it
is a lower bound on $f(G_T \backslash D)$ (Lemma \ref{lm:fp_lower_bound}).
We then bound the contribution of each level to the relaxation function (Lemmas \ref{lm:bound_f_one_level} and \ref{lm:bound_f_total}),
to prove the desired $\frac{\opt}{2}-\epsilon$ bound.

We begin with some definitions.
For $1 \le i \le T$, define the quantities $f_i$ and $d_i$ as
\begin{equation*}
  f_i := f(S_i | G_{i-1}),
  \quad \tau^{(i)} := (1+\epsBuck)^{b(i)} \cdot \tau,
  \quad d_i := |D \cap S_i| \cdot (1+\epsBuck)\cdot \tau^{(i)}.
\end{equation*}
We note that
\begin{equation}
  f(G_T) = 
  \sum_{i=1}^{T}
  \paranth{
    f(G_i) - f(G_{i - 1}) 
  }
  =
  \sum_{i=1}^{T} 
    f(S_i | G_{i-1})
  = \sum_{i=1}^{T} f_i.
  \label{eq:f_G_T_equal_sum_f_i}
\end{equation}
We can also bound $f(G_T\backslash D)$ in terms of $f(G_T)$ and $d_i$
as
\begin{align}
  f(G_T \backslash D) 
  =
  \sum_{i=1}^{T}
  \left(
  f(G_i \backslash D) - f(G_{i-1} \backslash D)
  \right)
  &=
  \sum_{i=1}^{T}
  f(S_i \backslash D | G_{i-1} \backslash D)
  \notag
  \\&\ge
  \sum_{i=1}^{T}
  f(S_i \backslash D | G_{i-1})
  \notag
  \\&\overset{(a)}{\ge}
  \sum_{i=1}^{T}
  \left(
  f(S_i| G_{i-1})
  - f(D \cap S_i| G_{i-1})
  \right)
  \notag
  \\&\ge 
  \sum_{i=1}^{T}
  f(S_i | G_{i-1})
  - \sum_{i=1}^T \sum_{e \in D \cap S_i} f(e | G_{i-1})
  \notag
  \\&\overset{(b)}{\ge}
  \sum_{i=1}^T f_i - \sum_{i=1}^{T} d_i
  \label{eq:sep13_1317}
\end{align}
where the first three inequalities follow from submodularity (in particular, $(a)$ follows from
the fact that $f(A \backslash B) \ge f(A) - f(B)$ for any sets $A, B$)
and $(b)$ follows from the fact that $f(e|G_{i-1})\le (1+\epsBuck)\cdot \tau^{(i)}$
for all $e\in S_i$.

For $i\in [1, T+1]$, Define $f'_i$ as 
\begin{align*}
  f'_i =
  \begin{cases}
    \min \{f_i, m_i \cdot \tau^{(i)} \}  - d_i
    \CasesIf
    i \le T \\
    (1-4\epsSamp) \left(
      \frac{1-\epsOpt}{1+\epsOpt}\frac{\opt}{2} - \sum_{j\le T} m_j \cdot \tau^{(j)}
      \right)
    \CasesIf i = T + 1
  \end{cases}
\end{align*}
Our next lemma shows that $\sum_{i=1}^{T + 1}f'_i$ is a lower bound for the value of output.
\begin{lemma}\label{lm:fp_lower_bound}
  Assume that
  \begin{equation*}
    f(\optSet) \le \opt \le (1+\epsOpt) f(\optSet).
  \end{equation*}
  Then
  \begin{align*}
    f(G_T \backslash D) \ge \sum_{i=1}^{T+ 1}f'_i.
  \end{align*}
\end{lemma}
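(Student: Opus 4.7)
The plan is to split $\sum_{i=1}^{T+1} f'_i$ into its first $T$ summands and the tail $f'_{T+1}$, and combine two deterministic lower bounds on $f(G_T\backslash D)$: a \emph{sampling bound} that uses only the inequality already established in \eqref{eq:sep13_1317}, and a \emph{stopping bound} derived from the fact that $\Rc_{T+1}=\emptyset$. The dichotomy between them is controlled by whether $\sum_j m_j\tau^{(j)}$ reaches $\frac{1-\epsOpt}{1+\epsOpt}\cdot\frac{\opt}{2}$.

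First I would observe that \eqref{eq:sep13_1317}, together with the trivial inequality $\min\{f_i, m_i\tau^{(i)}\} \le f_i$, yields
$$f(G_T\backslash D)\;\ge\;\sum_{i=1}^T f_i - \sum_{i=1}^T d_i\;\ge\;\sum_{i=1}^T \bigl(\min\{f_i, m_i\tau^{(i)}\} - d_i\bigr)\;=\;\sum_{i=1}^T f'_i,$$
which is the sampling bound. Next, I would establish a stopping bound valid whenever $|G_T| < k$, namely $f(G_T\backslash D) \ge \frac{1-\epsOpt}{1+\epsOpt}\cdot\frac{\opt}{2} - \sum_i d_i$. By Lemma~\ref{lm:final_level_stronger} and Lemma~\ref{lm:invariant_filter} we have $\FilterF{}(\Rc_T, G_T) = \emptyset$, and iterating the filter backward through the levels shows that every $e \in \Rc_0\setminus G_T$ was removed at some level $i$ with $|G_i| < k$ and $f(e\mid G_i) < \tau$; submodularity then lifts this to $f(e\mid G_T) < \tau$. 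Since every currently alive element, and in particular every element of $\optSet$, lies in $\Rc_0 = \Rp_0\setminus D$, submodular expansion gives $f(\optSet) \le f(\optSet \cup G_T) \le f(G_T) + k\tau = f(G_T) + \opt/2$, and combining with $f(\optSet)\ge \opt/(1+\epsOpt)$ produces $f(G_T) \ge \frac{1-\epsOpt}{1+\epsOpt}\cdot\frac{\opt}{2}$. Replaying the derivation of \eqref{eq:sep13_1317} then passes this bound through to $f(G_T\backslash D)$.

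With both bounds in hand, I would case-split on whether $\sum_{j=1}^T m_j\tau^{(j)} \ge \frac{1-\epsOpt}{1+\epsOpt}\cdot\frac{\opt}{2}$. If so, then $f'_{T+1}\le 0$ and the sampling bound alone already dominates $\sum_{i=1}^{T+1} f'_i$. If not, then since $|G_T| = k$ would force $\sum m_j\tau^{(j)} \ge k\tau = \opt/2 \ge \frac{1-\epsOpt}{1+\epsOpt}\cdot\frac{\opt}{2}$, we must have $|G_T| < k$, so the stopping bound applies. Using $\min\{f_i,m_i\tau^{(i)}\}\le m_i\tau^{(i)}$ gives
$$\sum_{i=1}^{T+1} f'_i \;\le\; 4\epsSamp\sum_j m_j\tau^{(j)} + (1-4\epsSamp)\cdot\frac{1-\epsOpt}{1+\epsOpt}\cdot\frac{\opt}{2} - \sum_i d_i,$$
and applying the case hypothesis to the first term collapses the right-hand side to $\frac{1-\epsOpt}{1+\epsOpt}\cdot\frac{\opt}{2} - \sum_i d_i$, which is at most $f(G_T\backslash D)$ by the stopping bound.

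The main obstacle is the stopping bound, because it requires translating a \emph{historical} invariant ($\Rc_{i+1} = \FilterF{}(\Rc_i, G_i)$ for all $i$, maintained across updates by Lemma~\ref{lm:invariant_filter}) into a statement about the \emph{current} optimal set via the common container $\Rc_0$. The subtle point is that elements of $\optSet$ may have entered the stream at very different times, when the various $G_i$ looked quite different; what rescues us is that every currently alive element is deposited into $\Rp_0$ at insertion and never leaves $\Rc_0$ until it is deleted, so the backward filter cascade bounds $f(e\mid G_T)$ uniformly for all $e \in \optSet\setminus G_T$. The remaining manipulations are routine bookkeeping, with the $\epsOpt$ and $\epsSamp$ slack absorbing small losses and the $(1+\epsBuck)\tau^{(i)}$ factor absorbed into the definition of $d_i$.
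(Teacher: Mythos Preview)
Your proposal is correct and follows essentially the same approach as the paper: both case-split on the sign of $f'_{T+1}$ (equivalently, on whether $\sum_j m_j\tau^{(j)}$ exceeds $\frac{1-\epsOpt}{1+\epsOpt}\cdot\frac{\opt}{2}$), use \eqref{eq:sep13_1317} directly in the first case, and in the second case deduce $|G_T|<k$, invoke Lemmas~\ref{lm:invariant_filter} and~\ref{lm:final_level_stronger} to obtain $f(e\mid G_T)<\tau$ for each $e\in\optSet$, and combine the resulting bound $f(G_T)\ge\frac{1-\epsOpt}{1+\epsOpt}\cdot\frac{\opt}{2}$ with \eqref{eq:sep13_1317}. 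The only cosmetic difference is that the paper adds and subtracts $\sum_i m_i\tau^{(i)}$ to go from the stopping bound down to $\sum f'_i$, whereas you bound $\sum f'_i$ upward and meet the stopping bound; the algebra is identical.
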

\begin{proof}
  We divide the proof into two cases, depending on whether or not $f'_{T+1} > 0$.
  \paragraph{Case 1:}
  $f'_{T+1} \le 0$.
  In this case,
  by \eqref{eq:sep13_1317},
  \begin{align*}
    f(G_T \backslash D)
    \ge \sum_{i=1}^{T}f_i - \sum_{i=1}^{T} d_i
    \ge \sum_{i=1}^{T} \left( \min\{f_i, m_i \cdot \tau^{(i)}\} - d_i\right)
  = \sum_{i=1}^{T}f'_i
  \ge \sum_{i=1}^{T+1}f'_i \enspace .
  \end{align*}
  Where the last inequality uses the assumption $f'_{T+1} \le 0$.
  \paragraph{Case 2:}
  $f'_{T+1} > 0$.
  We first claim that
  $|G_T| < k$:
  \begin{align}
    f'_{T+1} > 0 \implies 
    \frac{1-\epsOpt}{1+\epsOpt}\frac{\opt}{2} - \sum_{j=1}^T m_j \cdot \tau^{(j)} > 0
    .
    \label{eq:jul11_1105}
  \end{align}
  which further implies {that}
  \begin{align*}
    k\tau \overset{(a)}{=}
    \frac{\opt}{2} 
    \ge
    \frac{1-\epsOpt}{1+\epsOpt}\frac{\opt}{2}
    \overset{(b)}{>}
    \sum_{j=1}^{T} m_j \cdot \tau^{(j)} 
    \ge \sum_{j=1}^{T} m_j \cdot \tau 
    = |G_T| \cdot \tau.
  \end{align*}
  Here, we have used the definition of $\tau$ for $(a)$ and
  \eqref{eq:jul11_1105} for $(b)$.
  Therefore, $|G_T| < k$ as claimed.

  Consider an element $s\in \optSet$.
  By Lemma \ref{lm:final_level_stronger},
  $\Rc_{T+1}=\emptyset$ and therefore
  $s\notin \Rc_{T+1}$. 
  Since $s\in \Rc_0$, there is an
  index $i \in [0, T]$ such that
  $s \in \Rc_i$ but $s \notin \Rc_{i+1}$.
  By Lemma \ref{lm:invariant_filter}, this means that either
  $f(s | G_{i}) < \tau$ or $|G_i| = k$. Since $|G_i| \le |G_T| < k$,
  we conclude that $f(s| G_{i}) < \tau$, which by submodularity implies
  $f(s | G_{T}) < \tau$. 
  Now note that
  \begin{align*}
    \frac{\opt}{1+\epsOpt}
    \le 
    f(\optSet)
    \le
    f(G_T \cup \optSet) 
    \le
    f(G_T) + \sum_{s^* \in \optSet}f(s^* | G_T)
    \le
    f(G_T) + k \cdot \tau 
    = f(G_T) + \frac{\opt}{2} \enspace .
  \end{align*}
  Rearranging and using \eqref{eq:f_G_T_equal_sum_f_i}, we obtain
  \begin{equation*}
    \sum_{i=1}^{T} f_i =
    f(G_T)
    \ge
    \frac{\opt}{1+\epsOpt} - 
    \frac{\opt}{2}
    = 
    \frac{\opt(2 - (1+\epsOpt))}{2(1+\epsOpt)}
    =
    \frac{\opt}{2}\cdot (\frac{1-\epsOpt}{1+\epsOpt}).
  \end{equation*}
  Since in \eqref{eq:sep13_1317} we show that $f(G_T \backslash D) \ge \sum_{i=1}^{T} f_i - \sum_{i=1}^{T} d_i$, this implies that
  \begin{align*}
    f(G_T \backslash D)
    &\ge 
    \frac{1-\epsOpt}{1+\epsOpt}\frac{\opt}{2}
    - \sum_{i=1}^{T} d_i
    \\&\overset{(a)}{=}
    \sum_{i=1}^{T} \left(m_i \cdot \tau^{(i)} - d_i \right)
    +\left(\frac{1-\epsOpt}{1+\epsOpt}\frac{\opt}{2}
    - \sum_{i=1}^{T} m_i \cdot \tau^{(i)}\right)
    \\&\ge
    \sum_{i=1}^{T}\left( \min\{f_i, m_i \cdot \tau^{(i)}\} - d_i 
    \right)
    +\paranth{\frac{1-\epsOpt}{1+\epsOpt}\frac{\opt}{2}
    - \sum_{i=1}^{T} m_i \cdot \tau^{(i)}}
    \\&\overset{(b)}{\ge}
    \sum_{i=1}^{T} \paranth{\min\{f_i, m_i \cdot \tau^{(i)}\} - d_i }
    +(1-4\epsSamp)
    \left(
    \frac{1-\epsOpt}{1+\epsOpt}\frac{\opt}{2}
    - \sum_{i=1}^{T} m_i \cdot \tau^{(i)}
    \right)
    \\&= \sum_{i=1}^{T+1} f'_i
  \end{align*}
  where for $(a)$ we have just added and subtracted $\sum_{i=1}^{T} m_i \cdot \tau^{(i)}$, and for $(b)$, we have used the 
  assumption $f'_{T+1} > 0$.
\end{proof}
\begin{lemma}\label{lm:bound_f_one_level}
  For all $i$, if $m_i \in M_i^*$, then
  \begin{align*}
    \Ex{f'_i\big| \bT \ge i, \bH_i = H_i} \ge (1-3\epsSamp) \cdot m_i \cdot \tau^{(i)}
  \end{align*}
  for all $H_i$ such that $\Pr{\bT \ge i, \bH_i = H_i} > 0$. 
\end{lemma}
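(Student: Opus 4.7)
\textbf{Proof proposal for Lemma \ref{lm:bound_f_one_level}.}
The plan is to decompose $f'_i = \min\{f_i, m_i \cdot \tau^{(i)}\} - d_i$ and bound each piece in expectation under the conditioning $\sigma_i := (\bT \ge i, \bH_i = H_i)$, then combine using the parameter setting from Section \ref{sec:choice_param} to obtain the $(1 - 3\epsSamp)$ factor. Since all the quantities $R_i^{(b(i))}$, $G_{i-1}$, $\tau^{(i)}$, $m_i$ and the deterministic set $D$ are determined by $H_i$, the only remaining randomness under $\sigma_i$ is that of $\bS_i$, which by Lemma \ref{lm:uniform_lazy} is uniform over the $m_i$-subsets of $R_i^{(b(i))}$.

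For the deletion penalty $d_i = |D \cap S_i| \cdot (1+\epsBuck) \cdot \tau^{(i)}$, I will apply Lemma \ref{lm:uniform_lazy} together with linearity of expectation to get
$\Ex{|D \cap S_i| \mid \sigma_i} = m_i \cdot |D \cap R_i^{(b(i))}| / |R_i^{(b(i))}| \le m_i \cdot \epsDel$,
where the inequality uses Lemma \ref{lm:reconstruction_condition}. With $\epsDel = \epsilon/20$, $\epsBuck = \epsSamp = \epsilon$, and $\epsilon < 1/10$, this gives $\Ex{d_i \mid \sigma_i} \le \epsSamp \cdot m_i \tau^{(i)}$.

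For the main term, I will represent $S_i$ as a uniformly random ordered sample $(s_1, \dots, s_{m_i})$, write $f_i$ as the telescoping sum $\sum_{j=1}^{m_i} f(s_j \mid G_{i-1} \cup S_{[1:j-1]})$, and use that every summand is nonnegative to obtain
\begin{equation*}
\min\{f_i, m_i \tau^{(i)}\} \ge \sum_{j=1}^{m_i} \min\{f(s_j \mid G_{i-1} \cup S_{[1:j-1]}), \tau^{(i)}\}.
\end{equation*}
It will then suffice to show $\Pr{f(s_j \mid G_{i-1} \cup S_{[1:j-1]}) \ge \tau^{(i)} \mid \sigma_i} \ge 1 - 2\epsSamp$ for every $j \le m_i$, since $\min\{x, \tau^{(i)}\} \ge \tau^{(i)} \ind{x \ge \tau^{(i)}}$ then yields $\Ex{\min\{f_i, m_i \tau^{(i)}\} \mid \sigma_i} \ge (1 - 2\epsSamp) m_i \tau^{(i)}$. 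Combining with the $d_i$ bound gives the claimed $(1 - 3\epsSamp) m_i \tau^{(i)}$.

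To establish the per-$j$ probability bound I will use a coupling to the sampling scheme in Definition \ref{def:suitable_sample_count}. First draw a uniform size-$m_i$ subset $A \subseteq R_i^{(b(i))}$, pick $s^* \in A$ uniformly, and set $S'' := A \setminus \{s^*\}$; a standard exchangeability computation shows $(S'', s^*)$ has the same joint law as $(S', s')$ in the definition, so by $m_i \in M_i^*$ and \eqref{eq:sample_count_quality} we have $\Pr{f(s^* \mid G_{i-1} \cup S'') \ge \tau^{(i)}} \ge 1 - 2\epsSamp$. Now set $s_j := s^*$ and, given $A$ and $s^*$, sample $S_{[1:j-1]}$ uniformly among $(j-1)$-subsets of $S''$; a second symmetry check verifies that $(s_j, S_{[1:j-1]})$ then has the joint law of the $j$-th draw together with the first $j-1$ draws in a uniformly random ordering of $S_i$. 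Since $S_{[1:j-1]} \subseteq S''$, submodularity gives $f(s_j \mid G_{i-1} \cup S_{[1:j-1]}) \ge f(s^* \mid G_{i-1} \cup S'')$, which transfers the $(1-2\epsSamp)$ bound.

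The main obstacle I expect is executing the coupling cleanly, i.e., verifying that the two-step construction ($A$, then $s^*$, then $S_{[1:j-1]}$) reproduces the correct joint marginal of positions $j$ and $[1{:}j{-}1]$ in a uniformly random ordering of a uniform $m_i$-subset of $R_i^{(b(i))}$. The calculation is a short counting argument with binomial coefficients, but must be done carefully so that the application of \eqref{eq:sample_count_quality} introduces no extra conditioning on $s_j$ or $S_{[1:j-1]}$ beyond what $\sigma_i$ already fixes.
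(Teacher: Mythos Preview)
Your proposal is correct and follows the same high-level decomposition as the paper: bound $\Ex{d_i\mid\sigma_i}$ via Lemma~\ref{lm:uniform_lazy} and Lemma~\ref{lm:reconstruction_condition}, bound $\Ex{\min\{f_i,m_i\tau^{(i)}\}\mid\sigma_i}$ using \eqref{eq:sample_count_quality}, and combine. The bound on $d_i$ is identical to the paper's.

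The difference lies in the treatment of the main term. You telescope along the prefix, $f_i=\sum_j f(s_j\mid G_{i-1}\cup S_{[1:j-1]})$, and then couple each $(s_j,S_{[1:j-1]})$ to the pair $(s',S')$ of Definition~\ref{def:suitable_sample_count}, using submodularity (since $S_{[1:j-1]}\subseteq S''$) to push the $(1-2\epsSamp)$ bound through. The paper instead lower-bounds each telescoping increment by $f(s_{i,j}\mid G_{i-1}\cup S_i\setminus\{s_{i,j}\})$; these $m_i$ quantities are \emph{identically distributed} by exchangeability of a random permutation, so every term is immediately distributed as the ``last element'' and \eqref{eq:sample_count_quality} applies with no coupling step. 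Your coupling is valid (the counting check you flag as the main obstacle goes through: both constructions give $\Pr{S_{[1:j-1]}=B,\,s_j=e}=(j-1)!(n-j)!/n!$ for $n=|R_i^{(b(i))}|$), but the paper's ``condition on everything else'' trick is a shorter route to the same conclusion and makes the symmetry do the work that your submodularity-plus-coupling argument does.
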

\begin{proof}
  Since $\Pr{\bT \ge i, \bH_i = H_i} > 0$, we conclude that $R_i \ne \emptyset$. 
  By Lemma \ref{lm:uniform_lazy},
  $S_i$ is a uniform sample of size $m_i$ from $R_i^{b(i)}$.
  Let $s_{i, 1}, \dots, s_{i, m_i}$
  be a random ordering of $S_i$
  and let $S_{i, < j}$ and $S_{i, > j}$ denote
  $\{s_{i, 1}, \dots, s_{i, j-1}\}$ and
  $\{s_{i, j + 1}, \dots, s_{i, m_i}\}$ respectively.
  It follows that
  \begin{align*}
    f_i
    &=
    \sum_{j=1}^{m_i}
    f(s_{i, j} | G_{i-1} \cup S_{i, <j})
    \\&\ge
    \sum_{j=1}^{m_i}
    f(s_{i, j} | G_{i-1} \cup S_{i, <j} \cup S_{i, > j})
    \\&\ge
    \sum_{j=1}^{m_i}
    \min\{\tau^{(i)}, f(s_{i, j} | G_{i-1} \cup S_{i, <j} \cup S_{i, > j})\}
  \end{align*}
  It is also clear that
  \begin{equation*}
    m_i \cdot \tau^{(i)}
    \ge
    \sum_{j=1}^{m_i}\min\{\tau^{(i)}, f(s_{i, j} | G_{i-1} \cup S_{i, <j} \cup S_{i, > j})\}
  \end{equation*}
  As before, using the shorthand $\sigma_i$ to denote
  $\bT \ge i, \bH_i = H_i$,
  \begin{align*}
    \Ex{
      \min\{f_i, m_i\tau^{(i)}\}
      \Big| \sigma_i
      }
    &\ge 
    \sum_{j=1}^{m_i}
    \Ex{
    \min\{\tau^{(i)}, f(s_{i, j} | G_{i-1} \cup S_{i, <j} \cup S_{i, > j})\}
    \Big| \sigma_i
    }
    \\&\ge 
    \sum_{j=1}^{m_i}
    \Ex{
    \tau^{(i)}\ind{f(s_{i, j} | G_{i-1} \cup S_{i, <j} \cup S_{i, > j}) \ge \tau^{(i)}}
    \Big| \sigma_i
    }
    \\&=
    \sum_{j=1}^{m_i}
    \tau^{(i)}
    \Ex{
    \ind{f(s_{i, j} | G_{i-1} \cup S_{i, <j} \cup S_{i, > j}) \ge \tau^{(i)}}
    \Big| \sigma_i
    }
    \\&=
    \sum_{j=1}^{m_i}
    \tau^{(i)}
    \Pr{
    f(s_{i, j} | G_{i-1} \cup S_{i, <j} \cup S_{i, > j}) \ge \tau^{(i)}
    \Big| \sigma_i
    }
    \\&\overset{(a)}{=}
    m_i \cdot \tau^{(i)}\cdot 
    \Pr{ f(s_{i, m_i} | G_{i-1} \cup S_{i, <m_i})\ge \tau^{(i)} \Big| \sigma_i}
  \end{align*}
  where $(a)$ follows from the fact that $s_{i, 1}, \dots, s_{i, m_i}$
  was a random permutation.
  ~\\
  We now note that by Lemma \ref{lm:uniform_lazy},
  $S_{i, <m_i}$ is a random subset of size
  $m_i-1$ from $R_i^{(b(i))}$ and
  $s_{i, m_i}$ is a random element in $R_i^{(b(i))} \backslash S_{i, < m_i}$.
  Therefore, given the assumption $m_i \in M_i^*$,
  \eqref{eq:sample_count_quality} implies that
  \begin{align*}
    \Pr{ f(s_{i, m_i} | G_{i-1} \cup S_{i, <m_i})\ge \tau^{(i)} \Big| \sigma_i}
    \ge 1-2\epsSamp.
  \end{align*}
  Therefore,
  \begin{align}
    \Ex{
      \min\{f_i, m_i\tau^{(i)}\}
      \Big| \sigma_i
      }
      \ge (1-2\epsSamp) \cdot m_i \cdot \tau^{(i)}
      .
      \label{lkjqwey}
  \end{align}
  Furthermore, given the definition of $d_i$, we have
  \begin{align}
    \notag
    \Ex{d_i | \sigma_i}
    &=
    \notag
    (1+\epsBuck) \cdot \tau^{(i)}
    \cdot
    \Ex{\abs{S_i \cap D} \Big| \sigma_i}
    \\&\overset{(a)}{=}
    \notag
    (1+\epsBuck) \cdot \tau^{(i)}
    \cdot
    \frac{|R_i^{(b(i))} \cap D|}{|R_i^{(b(i))}|} m_i
    \\& \overset{(b)}{\le}
    (1+\epsBuck) \cdot \tau^{(i)}
    \cdot
    m_i \cdot \epsDel
    .
    \label{yqwerqwu}
  \end{align}
  where $(a)$
  follows from
  the uniform sample assumption of Lemma \ref{lm:uniform_lazy},
  and $(b)$ follows from the restriction $|R_i^{(b(i))} \cap D| \le \epsDel \cdot |R_i^{(b(i))}|$ given in Lemma \ref{lm:reconstruction_condition}.
  We note however that $(1+\epsBuck) \cdot \epsDel \le \epsSamp$ because of the way the parameters were set in Section~\ref{sec:choice_param}.
  Therefore,
  the claim follows from
  \eqref{lkjqwey} and \eqref{yqwerqwu}
  by the linearity of expectation:
  \begin{align*}
    \Ex{
      \min\{f_i, m_i\tau^{(i)}\}
      -d_i
      \Big| \sigma_i
      }
      \ge (1-2\epsSamp) \cdot m_i \cdot \tau^{(i)}
      - \epsSamp\cdot \tau^{(i)} \cdot m_i
      = 
      (1-3\epsSamp) \cdot m_i \cdot \tau^{(i)}
      .
  \end{align*}
\end{proof}
\begin{lemma}\label{lm:bound_f_total}
  For any value $i\ge 1$\footnote{Note that we do not impose the restriction $i \le T + 1$ when specifying the range for $i$ since, effectively, this is done by conditioning  on the event $\bT+1 \ge i$.}, 
  \begin{align*}
    \Ex{\sum_{j= i}^{\bT + 1} f'_j \Bigg| \bT + 1 \ge i, \bH_i^{-m} = H_i^{-m}}
    &\ge
    (1-4\epsSamp) \left(
      \frac{1-\epsOpt}{1+\epsOpt}\frac{\opt}{2} - \sum_{j < i} m_j \cdot \tau^{(j)}
      \right)
  \end{align*}
  for all $H_i^{-m}$ such that 
  $\Pr{\bT + 1 \ge i, \bH_i^{-m} = H_i^{-m}} > 0$.
\end{lemma}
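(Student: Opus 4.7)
The statement is made tractable by a reverse induction on $i$: because every $m_j \ge 1$ and $|G_T|\le k$, one has $T\le k$ deterministically, so the event $\bT + 1 \ge i$ is empty once $i\ge k+2$, giving a vacuous base case. I will prove the inductive step from $i+1$ to $i$ by conditioning on whether $\bT = i-1$ or $\bT\ge i$, and in the latter case using Lemmas \ref{lm:bound_f_one_level} and \ref{lm:bound_countF_always} together with the induction hypothesis after refining the conditioning to $\bH_{i+1}^{-m}$.

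\textbf{Case $\bT = i-1$.} Here the sum collapses to $f'_{T+1}=f'_i$, and by the definition of $f'_{T+1}$ together with $T=i-1$ we get the exact equality
\begin{align*}
f'_i = (1-4\epsSamp)\paranth{\tfrac{1-\epsOpt}{1+\epsOpt}\tfrac{\opt}{2} - \sum_{j<i} m_j\tau^{(j)}},
\end{align*}
which is literally the right-hand side of the target inequality (recall that $\sum_{j<i}m_j\tau^{(j)}$ is determined by $H_i^{-m}$). So this case holds with equality.

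\textbf{Case $\bT\ge i$.} I split the sum as $f'_i+\sum_{j=i+1}^{T+1}f'_j$ and handle the two pieces separately. For the tail, by the tower property I condition on $\bH_{i+1}^{-m}$ (which is well-defined once $\bT\ge i$, since then $\bT+1\ge i+1$) and invoke the induction hypothesis, obtaining
\begin{align*}
\Ex{\sum_{j=i+1}^{\bT+1}f'_j \,\Big|\, \bT\ge i, \bH_i^{-m}=H_i^{-m}} \ge (1-4\epsSamp)\paranth{\tfrac{1-\epsOpt}{1+\epsOpt}\tfrac{\opt}{2} - \sum_{j<i} m_j\tau^{(j)}} - (1-4\epsSamp)\,\tau^{(i)}\,\Ex{\bM_i \mid \bT\ge i, \bH_i^{-m}=H_i^{-m}},
\end{align*}
using that $\tau^{(i)}$ is determined by $H_i^{-m}$. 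For the head $f'_i$, I condition further on $\bH_i$ and split by whether $\bM_i \in M_i^*$. On the good event Lemma \ref{lm:bound_f_one_level} gives $\Ex{f'_i\mid \bT\ge i,\bH_i}\ge (1-3\epsSamp)\,m_i\tau^{(i)}$, and on the bad event I use the crude bound $f'_i \ge -m_i(1+\epsBuck)\tau^{(i)}\ge -2m_i\tau^{(i)}$ from the definition of $f'_i$. Since Lemma \ref{lm:bound_countF_always} yields $\Pr{\bM_i\notin M_i^*\mid \bT\ge i,\bH_i^{-m}}\le \epsSamp/(100 k^{10})$ and $m_i\le k$, the bad-event contribution is at most $3\tau^{(i)}\cdot k\cdot\epsSamp/(100 k^{10})=O(\epsSamp\,\tau^{(i)}/k^{9})$, which is negligible. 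Combining,
\begin{align*}
\Ex{f'_i\mid \bT\ge i,\bH_i^{-m}=H_i^{-m}} \ge (1-3\epsSamp)\,\tau^{(i)}\,\Ex{\bM_i \mid \bT\ge i,\bH_i^{-m}=H_i^{-m}} - O(\epsSamp\,\tau^{(i)}/k^{9}).
\end{align*}

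\textbf{Assembly.} Adding the two pieces in the case $\bT\ge i$, the $\Ex{\bM_i}\tau^{(i)}$ terms combine with coefficient $(1-3\epsSamp)-(1-4\epsSamp)=\epsSamp$, which is positive and — since $\bM_i\ge 1$ by construction of \CalcSampleCountF{} — produces an additive surplus of at least $\epsSamp\tau^{(i)}$ that easily absorbs the $O(\epsSamp\tau^{(i)}/k^{9})$ slack. This yields the required bound conditionally on $\bT\ge i$. Finally I apply the law of total expectation to $\Pr{\bT=i-1\mid \bT+1\ge i,\bH_i^{-m}}$ and $\Pr{\bT\ge i\mid \bT+1\ge i,\bH_i^{-m}}$; since both cases dominate the same target RHS, so does the weighted average.

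\textbf{Main obstacle.} The delicate step is keeping track of precisely which quantities are measurable with respect to $\bH_i^{-m}$ versus $\bH_i$ versus $\bH_{i+1}^{-m}$, so that the tower-of-expectations argument is clean. The key observations — that $\tau^{(i)}$ and $\sum_{j<i}m_j\tau^{(j)}$ are functions of $H_i^{-m}$, while $\bM_i$ is measurable with respect to $\bH_i$ but not $\bH_i^{-m}$, and that $\bH_{i+1}^{-m}\supseteq \bH_i^{-m}\cup\{\bM_i,\bS_i\}$ — are what make the nested conditioning compatible with the induction hypothesis. Handling the bad-event term $\bM_i\notin M_i^*$ also requires Lemma \ref{lm:bound_countF_always}, whose applicability at an arbitrary point in the stream is essential here.
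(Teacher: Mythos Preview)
Your proposal is correct and follows essentially the same backward-induction strategy as the paper: split off the term $f'_i$, bound it via Lemma~\ref{lm:bound_f_one_level} on the good event $\bM_i\in M_i^*$ and crudely on the bad event (using Lemma~\ref{lm:bound_countF_always} to control its probability), apply the induction hypothesis to the tail after refining to $\bH_{i+1}^{-m}$, and observe that the surplus $\epsSamp\,\tau^{(i)}\Ex{\bM_i}$ absorbs the $O(\epsSamp\tau^{(i)}/k^{9})$ error. The only cosmetic difference is that the paper notes the event $\{\bT=i-1\}$ versus $\{\bT\ge i\}$ is actually \emph{determined} by $H_i^{-m}$ (since $i=T+1\iff R_i=\emptyset$), so it treats this as two deterministic cases rather than a law-of-total-expectation split; your weighted-average argument is of course still valid since one of the two probabilities is always $0$ or $1$.
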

\begin{proof}
  We divide the proof into two parts.
  In the first part, we prove the result when $i=T+1$. 
  Note that we can determine whether $i=T+1$ by examining $H_i^{-m}$, as $i= T + 1$ if and only if $R_i = \emptyset$.
  In part 2, we extend this result to the general case using induction.
  
  \paragraph{Part 1.}
  If $i=T + 1$, then by definition of $f'_{T+1}$,
  \begin{align*}
    \Ex{\sum_{j= i}^{T+1} f'_j \Bigg| \bT + 1 \ge i, \bH_i^{-m} = H_i^{-m}}
      &= f'_{T+1}
    = (1-4\epsSamp) \left(
      \frac{1-\epsOpt}{1+\epsOpt}\frac{\opt}{2} - \sum_{j < i} m_j \cdot \tau^{(j)}
      \right)
  \end{align*}
  We therefore focus on the case of $i < T + 1$.
  \paragraph{Part 2}
  We prove the claim by a backwards induction on $i$.
  We note that $T \le k$ because we always have at least one sample in $|S_i|$ as long as $R_i \ne \emptyset$. 
  Therefore, if $i > k + 1$, then
  $\Pr{\bT + 1 \ge i, \bH_i^{-m} = H_i^{-m}} = 0$ and there is nothing to prove. If $i=k +1$, then
  given $\Pr{\bT + 1 \ge i, \bH_i^{-m} = H_i^{-m}} > 0$
  we must have $i=T+1$ and the 
  claim follows from Part 1.
  Assume the claim holds for $i+1$, we prove it holds for $i$ as well.
  Since in this part we assumed $i\ne T + 1$ we can conclude
  \begin{align*}
  \cbr{
   \bT + 1 \ge i, \bH_i^{-m} = H_i^{-m}
  }
  = 
  \cbr{ \bT \ge i, \bH_i^{-m} = H_i^{-m}}.
  \end{align*}

  We first give a sketch of the proof. By Lemma \ref{lm:bound_countF_always},
  we can expect $m_i \in M_i^*$ with high probability.
  As long as $m_i\in M_i^*$, by Lemma \ref{lm:bound_f_one_level}, \begin{align*}
      \Ex{f'_i| \sigma_i} > (1-4\epsSamp) \cdot m_i \tau^{(i)}.
  \end{align*}
  Furthermore, we can utilize the induction hypothesis to conclude that
  \begin{align*}
      \Ex{\sum_{j \ge i+1} f'_j \Bigg|  \sigma_i} \ge  (1-4\epsSamp) \left(
      \frac{1-\epsOpt}{1+\epsOpt}\frac{\opt}{2} - \sum_{j < i +1} m_j \cdot \tau^{(j)}
      \right)
  \end{align*}
  Adding the two expressions above results in the desired bound.
  To prove the lemma's statement, we formalize the above sketch while also carefully
  considering the case of $m_i \notin M_i^*$.

  Let the shorthand $\sigma_i^{-m}$ denote
  $\bT \ge i, \bH_i^{-m}=H_i^{-m}$.
  By the law of total expectation,
  \begin{align}
    \notag
    \Ex{\sum_{j \ge i} f'_j \Bigg| \sigma_i^{-m}}
    &=
    \notag
    \Exu{m_i \sim \bM_i | \sigma_i^{-m}}{\Ex{\sum_{j \ge i} f'_j \Bigg| \sigma_i^{-m}, \bM_i=m_i}}
    \\&=
    \notag
    \Exu{m_i \sim \bM_i | \sigma_i^{-m}}{\Ex{f'_i + \sum_{j \ge i+1} f'_j \Bigg| \sigma_i^{-m}, \bM_i=m_i}}
    \\&=
    \Exu{m_i \sim \bM_i | \sigma_i^{-m}}{\Ex{f'_i\ind{m_i \in M_i^*} + f'_i\ind{m_i \notin M_i^*} + \sum_{j \ge i+1} f'_j \Bigg| \sigma_i^{-m}, \bM_i=m_i}}
    \label{eq:jul11_1845}
  \end{align}
  Note that $\cbr{\sigma_{i}^{-m}, \bM_i=m_i}$ is the same as 
  $\cbr{\bT \ge i, \bH_i=H_i}$.
  By Lemma \ref{lm:bound_f_one_level}, if $m_i \in M_i^*$,
  and $\Pr{\bM_i=m_i\mid \sigma_{i}^{-m}} > 0$,
  then 
  \begin{equation*}
    \Ex{f'_i\big| \bT \ge i, \bH_i = H_i} \ge (1-3\epsSamp) \cdot m_i \cdot \tau^{(i)}
  \end{equation*}
  Therefore,
  \begin{align*}
    \Ex{f'_i\cdot \ind{m_i \in M_i^*} \Big| \sigma_{i}^{-m}, \bM_i=m_i}\ge (1-3\epsSamp) m_i \cdot \tau^{(i)}\cdot \ind{m_i \in M_i^*}
  \end{align*}
  Furthermore, if $m_i \notin M_i^*$, 
  \begin{equation*}
    \Ex{f'_i\big| \bT \ge i, \bH_i = H_i} \ge
    -d_i \ge 
    -(1+\epsBuck) \cdot \tau^{(i)} \cdot m_i \ge
    -2k \cdot \tau^{(i)},
  \end{equation*}
  where the first inequality follows from the fact that
  $\min\{f_i, m_i\tau^{(i)} \}\ge 0$ and the last inequality follows from $m_i \le k$ and $1+\epsBuck \le 2$. 
  Therefore,
  \begin{align*}
    \Ex{f'_i\cdot \ind{m_i \notin M_i^*} \Big| \sigma_{i}^{-m}, \bM_i=m_i}\ge  -2k \cdot \tau^{(i)} \cdot \ind{m_i \notin M_i^*}
    .
  \end{align*}
  Plugging the above inequalities in \eqref{eq:jul11_1845}, we obtain
  \begin{align*}
    \Ex{\sum_{j \ge i} f'_j \Bigg| \sigma_i^{-m}}
    &\ge
    \Exu{m_i \sim \bM_i | \sigma_i^{-m}}{(1-3\epsSamp) \cdot m_i \cdot \tau^{(i)} \cdot \ind{m_i \in M_i^*} -2k \cdot \tau^{(i)} \ind{m_i \notin M_i^*} + \Ex{\sum_{j \ge i+1} f'_j \Bigg| \sigma_i^{-m}, \bM_i=m_i}
    }
  \end{align*}
  Define $B$ as
  \begin{equation*}
    B:= 
    \epsSamp \cdot m_i \cdot \tau^{(i)} \cdot \ind{m_i \in M_i^*}
    -2k  \cdot\tau^{(i)} \cdot \ind{m_i \notin M_i^*}
    -(1-4\epsSamp) \cdot m_i \cdot \tau^{(i)}  \cdot\ind{m_i \notin M_i^*}
    .
  \end{equation*}
  It follows that
  \begin{align*}
    \Ex{\sum_{j \ge i} f'_j \Bigg| \sigma_i^{-m}}
    &\ge
    \Exu{m_i \sim \bM_i | \sigma_i^{-m}}{(1 - 4\epsSamp)\cdot m_i\cdot  \tau^{(i)}+ B + \Ex{\sum_{j \ge i+1} f'_j \Bigg| \sigma_i^{-m}, \bM_i=m_i}}
    \\&\ge %
    \Exu{m_i \sim \bM_i | \sigma_i^{-m}}{(1 - 4\epsSamp)\cdot m_i\cdot  \tau^{(i)}+ 
    \Ex{\sum_{j \ge i+1} f'_j \Bigg| \sigma_i^{-m}, \bM_i=m_i}
    }
    + 
    \Exu{m_i \sim \bM_i | \sigma_i^{-m}}{B}
  \end{align*}
  For the first term, we observe that 
  \begin{align*}
    \Ex{\sum_{j \ge i+1} f'_j \Bigg| \sigma_i^{-m}, \bM_i=m_i}
    &=
    \Exu{S_i \sim \bS_i | \sigma_i^{-m}, \bM_i=m_i}{\Ex{\sum_{j \ge i+1} f'_j \Bigg| \sigma_i^{-m}, \bM_i=m_i, \bS_i=S_i}}
    \notag
    \\&\overset{(a)}{\ge}
    \Exu{S_i \sim \bS_i | \sigma_i^{-m}, \bM_i=m_i}{ (1-4\epsSamp) \left(
      \frac{1-\epsOpt}{1+\epsOpt}\frac{\opt}{2} - \sum_{j < i+1} m_j \cdot \tau^{(j)}
      \right)}
    \notag
    \\&=(1-4\epsSamp) \left(\frac{1-\epsOpt}{1+\epsOpt}\frac{\opt}{2} - \sum_{j < i+1} m_j \cdot \tau^{(j)}\right)
    \label{qyuqwerniuqwe}
  \end{align*}
  where for $(a)$, we have used the induction hypothesis, together with iterated expectation. 
  Formally, assume that in addition to 
  $\cbr{\sigma_i^{-m}, \bM_i=m_i, \bS_i=S_i}$,
  we further condition on the value of
  $\Rp_{i+1}, R_{i+1}$. 
  Then we have conditioned on
  \begin{align*}
    \cbr{\sigma_i^{-m}, \bM_i=m_i, \bS_i=S_i, \bRp_{i+1}=\Rp_{i+1}, \bR_{i+1}=R_{i+1}},
  \end{align*}
  which
  is the same
  as conditioning on $\cbr{\bT \ge i, H_{i+1}^{-m}}$, which is the same as conditioning on
  $\cbr{\bT + 1 \ge i + 1, H_{i+1}^{-m}}$
  .
  
  Therefore,
  \begin{align*}
    &\Exu{m_i \sim \bM_i | \sigma_i^{-m}}{(1 - 4\epsSamp)\cdot m_i\cdot  \tau^{(i)}+ \Ex{
      \sum_{j \ge i+1} f'_j \Bigg| \sigma_i^{-m}, \bM_i=m_i
    }}
    \\&\ge
    \Exu{m_i \sim \bM_i | \sigma_i^{-m}}{
    (1 - 4\epsSamp)\cdot m_i\cdot  \tau^{(i)}+ 
    (1-4\epsSamp) \left(
      \frac{1-\epsOpt}{1+\epsOpt}\frac{\opt}{2} - \sum_{j < i+1} m_j \cdot \tau^{(j)}
      \right)
      }
    \\&=
    \Exu{m_i \sim \bM_i | \sigma_i^{-m}}{
    (1-4\epsSamp) \left(
      \frac{1-\epsOpt}{1+\epsOpt}\frac{\opt}{2} - \sum_{j < i} m_j \cdot \tau^{(j)}
      \right)
      }
    \\&=
    (1-4\epsSamp) \left(
      \frac{1-\epsOpt}{1+\epsOpt}\frac{\opt}{2} - \sum_{j < i} m_j \cdot \tau^{(j)}
      \right)
  \end{align*}
  where the first inequality follows from \eqref{qyuqwerniuqwe},
  To finish the proof, it suffices to show that
  \begin{align*}
    &\Exu{m_i \sim \bM_i | \sigma_i^{-m}}{B}
    \ge 0
  \end{align*}
  Note however that,
  since we assumed
  $\bT \ge i$, we can assume that
  $m_i > 0$ by definition of $M_i^*$. 
  By definition of $B$,
  \begin{align*}
    B&=
    \left(\epsSamp \cdot m_i \cdot \tau^{(i)} \cdot \ind{m_i \in M_i^*}\right)
    -\left(2k  \cdot\tau^{(i)} \cdot \ind{m_i \notin M_i^*}\right)
    -\left((1-4\epsSamp) \cdot m_i \cdot \tau^{(i)}  \cdot\ind{m_i \notin M_i^*}\right)
    \\&\ge 
    \tau^{(i)}\cdot \left(
      \ind{m_i \in M_i^*} \cdot \left(
        \epsSamp
        \right)
      -
      \ind{m_i \notin M_i^*} \cdot \left(
        2k + k
        \right)
      \right) 
    \\&\ge
    \tau^{(i)}\cdot \left(
      \ind{m_i \in M_i^*} \cdot \left(
        \epsSamp
        \right)
      -
      \ind{m_i \notin M_i^*} \cdot \left(
        3k
        \right)
      \right)
  \end{align*}
  where the first inequality follows from $m_i \le k$. 
  This further implies
  \begin{align*}
    \Exu{m_i \sim \bM_i | \sigma_i^{-m}}{B \Big| \sigma_i^{-m}, \bM_i=m_i}
    &\ge
    \tau^{(i)}
    \cdot \left(
      \Pr{\bM_i \in M_i^* | \sigma_i^{-m}} \cdot \epsSamp
      - 3k \Pr{\bM_i \notin M_i^* | \sigma_i^{-m}}
      \right)
    \\&\overset{(a)}{\ge}
    \tau^{(i)}
    \paranth{
      \frac{\epsSamp}{2} - \frac{3\epsSamp}{100}
      }
      > 0
  \end{align*}
  Where $(a)$ follows from Lemma~\ref{lm:bound_countF_always}
  and the fact that
  $1-\epsSamp/100k^{10}> 1/2$. 
\end{proof}
Setting $i=1$ in the above Lemma completes the proof of Theorem \ref{thm:approx} as
\begin{align*}
    \Ex{f(G_T \backslash D)}
    \overset{(a)}{\ge} \Ex{\sum_{i=1}^{\bT+1} f'_i}
    \ge 
    (1-4\epsSamp) \cdot 
      \frac{1-\epsOpt}{1+\epsOpt}\frac{\opt}{2}
\end{align*}
where $(a)$ follows from Lemma \ref{lm:fp_lower_bound}.
\subsection{Query complexity} 
\label{sec:query}
In this section, we analyze the query complexity of our algorithm.
First, In section \ref{sec:query_level},
we analyze the query complexity of each call to $\ReconstructF{}$
as it is the main building block of our algorithm.
Next, in section \ref{sec:query_amor}, we
show how to utilize this result for bounding the expected amortized query complexity of
our algorithm.
Throughout the section, we use $V_t$ to denote all of the elements that have been inserted, but not yet deleted. 

We start by defining a quantity that will be important in our proofs.
\begin{definition}[Potential]
  For any $i \ge 1$ and any element $e \in \Rc_i$, we define the element's
  \emph{potential with respect to level $i$}, denoted by
  $P(e, i)$,
  as the single number satisfying
  $\frac{f(e | G_{i-1})}{\tau} \in [(1+\epsBuck)^{P(e, i)-1}, (1+\epsBuck)^{P(e, i)})$. 
  For elements $e\notin \Rc_i$, we define
  $P(e, i)$ to be zero.
  Additionally, we define the \emph{potential of level $i$}
  as
  $P_i := \sum_{e \in V_t} P(e, i) = \sum_{e\in \Rc_i} P(e, i)$.
  We also define $P_i$ for $i> T+1$ to be 0.
\end{definition}
We will use $\bP_i$ instead of $P_i$ when we want to emphasize the fact that
these values are random.
\begin{lemma}\label{lm:potential_properties}
  The potential $P$ satisfies the following properties for all $i \ge 1$:
  \begin{align}
      &\forall e \in \Rc_i: P(e, i)\in \left[1, \log_{1+\epsBuck}(4k)\right] 
      \\&
      |\Rc_i| \le P_i \le |\Rc_i| \cdot \log_{1+\epsBuck}(4k)
      \\&
      P_{T+1} = 0\quad\text{ and }\quad \forall i \le T: P_{i} > 0
  \end{align}
\end{lemma}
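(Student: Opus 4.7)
The plan is to establish each of the three properties in turn, using the invariants from Lemmas~\ref{lm:invariant_filter} and~\ref{lm:final_level_stronger} together with the bucket definition. All three properties should follow rather directly from these prior results, with no real technical obstacle; the argument is essentially a careful bookkeeping exercise around the definition of $P(e,i)$.

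For the first property (bounds on $P(e,i)$ for $e\in \Rc_i$): fix $i\ge 1$ with $\Rc_i \ne \emptyset$ and $e\in \Rc_i$. By Lemma~\ref{lm:invariant_filter}, $\Rc_i = \FilterF{}(\Rc_{i-1}, G_{i-1}, \tau)$, so the existence of $e$ forces $|G_{i-1}| < k$ and $f(e|G_{i-1}) \ge \tau$. This gives $f(e|G_{i-1})/\tau \ge 1 = (1+\epsBuck)^0$, which by the defining interval forces $P(e,i) \ge 1$. For the upper bound, submodularity and monotonicity give $f(e|G_{i-1}) \le f(e) \le f(\optSet) \le \opt = 2k\tau$, hence $f(e|G_{i-1})/\tau \le 2k$. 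Since $(1+\epsBuck)^{P(e,i)-1} \le f(e|G_{i-1})/\tau$, we obtain $P(e,i) \le 1 + \log_{1+\epsBuck}(2k)$. The only small check is that $1 + \log_{1+\epsBuck}(2k) \le \log_{1+\epsBuck}(4k)$, which reduces to $\log_{1+\epsBuck}(2) \ge 1$, i.e., $\epsBuck \le 1$; this holds trivially by the parameter choice $\epsBuck = \epsilon < 1/10$ in Section~\ref{sec:choice_param}.

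For the second property (bounds on $P_i$): by the definition $P(e,i)=0$ for $e\notin \Rc_i$, we have $P_i = \sum_{e\in V_t} P(e,i) = \sum_{e\in \Rc_i} P(e,i)$. Applying the first property term-by-term, each summand lies in $[1, \log_{1+\epsBuck}(4k)]$, so the bounds $|\Rc_i| \le P_i \le |\Rc_i|\cdot \log_{1+\epsBuck}(4k)$ are immediate.

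For the third property (sign of $P_i$): Lemma~\ref{lm:final_level_stronger} gives $\Rc_{T+1}=\emptyset$, from which $P_{T+1} = \sum_{e\in\Rc_{T+1}} P(e,T+1) = 0$. For $i \le T$, the same lemma gives $\Rc_i \ne \emptyset$, so by the second property $P_i \ge |\Rc_i| \ge 1 > 0$. The only subtlety worth mentioning is that the first property was proved only for $e\in \Rc_i$; once $\Rc_i \ne \emptyset$ this is enough for the lower bound on $P_i$, so no further argument is needed.
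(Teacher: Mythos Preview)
Your proof is correct and follows essentially the same approach as the paper: invoke Lemma~\ref{lm:invariant_filter} for the lower bound on $P(e,i)$, use $f(e|G_{i-1}) \le f(e) \le \opt = 2k\tau$ together with the bucket definition for the upper bound, and then derive the second and third properties from the first via Lemma~\ref{lm:final_level_stronger}. The only cosmetic difference is that the paper explicitly notes $e\in\Rc_i$ implies $e\notin D$ before invoking $f(e)\le f(\optSet)$, but this does not affect the argument.
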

\begin{proof}
    For the first result, note that since
    $e \in \Rc_i$, given Lemma~\ref{lm:invariant_filter},
    it must have 
    $e \in \FilterF{}(\Rc_{i-1}, G_{i-1}, \tau)$, which implies
    $f(e | G_{i-1}) \ge \tau$. Therefore, $P(e, i) \ge 1$. On the other hand,
    we note that
    since $e \in \Rc_i$, it follows that
    $e \notin D$. Therefore,
    \begin{align*}
      2k \tau = \opt \ge f(\optSet)
      \ge f(e) \ge f(e | G_{i-1}).
    \end{align*}
    It follows that $\frac{f(e | G_{i-1})}{\tau} \le 2k$.
    Note however that, by definition,
    \begin{align*}
        \frac{f(e | G_{i-1})}{\tau} 
        \ge (1+\epsBuck)^{P(e, i) - 1}
        \implies 
        P(e, i) \le \log_{1+\epsBuck}(\frac{f(e | G_{i-1})}{\tau}) + 1.
    \end{align*}
    Therefore,
    \begin{align*}
        P(e, i) \le \log_{1+\epsBuck}(2k) + 1 \le
        \log_{1+\epsBuck}(2k) +  \log_{1+\epsBuck}(2)
        =  \log_{1+\epsBuck}(4k).
    \end{align*}

    The second identity follows from the first one since
    $P(e, i) = 0$ for $e\notin \Rc_i$.

    The third identity follows from the second identity and Lemma \ref{lm:final_level_stronger}. 
\end{proof}
\begin{lemma}\label{lm:potential_decrease_always}
    For any $i \in [1, T]$,
    and any $e \in V_T$,
    $P(e, i) \ge P(e, i+1)$ and therefore,
    $P_{i} \ge P_{i+1}$.
\end{lemma}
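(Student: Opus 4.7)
The plan is to prove the pointwise inequality $P(e, i) \ge P(e, i+1)$ for each fixed element $e \in V_t$ and then obtain $P_i \ge P_{i+1}$ by summing over all such $e$. The key ingredients will be submodularity (to compare $f(e | G_{i-1})$ and $f(e | G_i)$) together with the set inclusion $\Rc_{i+1} \subseteq \Rc_i$, which I can extract from Lemma~\ref{fact:rp_subset} (namely $\Rp_{i+1} \subseteq \Rp_i$, intersected with $V_t = V \setminus D$).

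First I would split into two cases. If $e \notin \Rc_{i+1}$, then by definition $P(e, i+1) = 0$, and since the potential is always non-negative, $P(e, i) \ge 0 = P(e, i+1)$. So the interesting case is $e \in \Rc_{i+1}$. In this case Lemma~\ref{fact:rp_subset} gives $e \in \Rc_i$ as well, so both potentials come from the actual defining interval. Now the core observation is that $G_{i-1} \subseteq G_i$, so by submodularity $f(e | G_i) \le f(e | G_{i-1})$, i.e., the marginal gain can only shrink (or stay the same) as we pass from level $i$ to level $i+1$. Since the potential $P(e, \cdot)$ is defined as the (essentially unique) integer $p$ with $f(e | G_{\cdot - 1})/\tau \in [(1+\epsBuck)^{p-1}, (1+\epsBuck)^{p})$, a smaller marginal gain lands in a bucket with a smaller (or equal) index. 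Formally, if $P(e, i+1) = j$, then $f(e|G_i)/\tau \ge (1+\epsBuck)^{j-1}$; if one had $P(e, i) < j$, then $f(e | G_{i-1})/\tau < (1+\epsBuck)^{j-1} \le f(e | G_i)/\tau$, contradicting submodularity.

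To deduce the aggregate bound, I use that $P_i = \sum_{e \in V_t} P(e, i)$ (a single sum over the same index set $V_t$, valid because $P(e, i) = 0$ for $e \notin \Rc_i$ and similarly for $i+1$). The pointwise inequality $P(e, i) \ge P(e, i+1)$ therefore sums termwise to $P_i \ge P_{i+1}$.

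I do not expect any serious obstacle here; the statement is essentially a bookkeeping consequence of submodularity plus the nesting $\Rc_{i+1} \subseteq \Rc_i$. The only subtle point to watch is to make sure the two sums defining $P_i$ and $P_{i+1}$ are over a common index set so that termwise comparison is legitimate, which is handled by the convention that $P(e, \cdot) = 0$ outside the corresponding $\Rc$.
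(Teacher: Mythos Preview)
Your proposal is correct and follows essentially the same approach as the paper's own proof: case-split on whether $e \in \Rc_{i+1}$, use $\Rc_{i+1} \subseteq \Rc_i$ to ensure both potentials are defined by the bucketing formula, and then invoke submodularity via $G_{i-1} \subseteq G_i$. The only cosmetic difference is that the paper obtains $\Rc_{i+1} \subseteq \Rc_i$ from Lemma~\ref{lm:invariant_filter} (since $\FilterF$ returns a subset of its input), whereas you derive it from Lemma~\ref{fact:rp_subset} by intersecting $\Rp_{i+1} \subseteq \Rp_i$ with the complement of $D$; both routes are valid.
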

\begin{proof}
  If $e \notin \Rc_{i+1}$, the claim holds trivially cause the right hand side will equal zero. 

  Otherwise, we note that $e\in \Rc_i$ because
  $\Rc_{i+1} \subseteq \Rc_{i}$ given Lemma~\ref{lm:invariant_filter}.
  The claim now follows from the fact that
  $G_{i} \subseteq G_{i+1}$. 
\end{proof}

\subsubsection{\ReconstructF{}}
\label{sec:query_level}
In this section, we bound the query complexity of invoking $\ReconstructF{}$.
It is clear from Algorithm \ref{alg:offline} that \ReconstructF{}
works by  sampling $S_{i}$ from $R_{i}$,
forming $R_{i+1} = \FilterF{}(R_i, G_{i})$,
and repeating the process with $i \gets i+1$. 
We can therefore bound its query complexity by
bounding
\begin{enumerate}
  \item The number of times the while-loop is executed.
    We will obtain this result in Lemma \ref{lm:num_level_reconstruct}
    after bounding the decrease in potential of each level in Lemma \ref{lm:potential_decrease}
  \item The number of queries inside the while-loop. We have already done this in Lemma \ref{lm:bound_countF}.
\end{enumerate}
We use the above results to establish a bound on the total number of queries
made by \ReconstructF{} in Lemma \ref{lm:bound_num_query_recounstruct}.

As we want to state results that hold for an arbitrary level $i$,
throughout this section, we will assume that we call \ReconstructF{}
with a value of $\Rp_j$ satisfying 
$\FilterF{}(R_j, G_{j-1}, \tau) = R_j = \Rp_j$.
Given Lemma \ref{lm:invariant_filter}, this assumption
is going to be valid in our algorithm.
We note that since we are considering the data structure
right after calling \ReconstructF{}, the values
$R_i$ and $\Rp_i$ and $\Rc_i$ will be the same
for $i \ge j$ after the call.

\begin{lemma}\label{lm:potential_decrease}
  Assume we are given sets $R_{j}, G_{j-1}$
  satisfying
  $\FilterF{}(R_j, G_{j-1}, \tau) = R_j$,
  and we invoke $\ReconstructF{}(j)$
  with $(\Rp_j, D)$ set to $(R_j, \emptyset)$,
  obtaining the (random) values
  $\bT, \bS_{j}, \dots \bS_{\bT}, \bR_{j + 1}, \dots \bR_{\bT+1}$.
  If $\epsSamp < \frac{1}{4}$,
  for each $i\ge j$,
  \begin{equation*}
    \Ex{P_{i} - \bP_{i+1} \big| \bT \ge i, \bH_i^{-m}=H_i^{-m}} \ge \frac{\epsSamp}{8} \cdot |R_i^{(b(i))}|
  \end{equation*}
  for all $H_i^{-m}$ such that $\Pr{\bT \ge i, \bH_i^{-m}=H_i^{-m}} > 0$,
  where $H_i^{-m}$ is defined as in Definition \ref{def:pre_history}.
\end{lemma}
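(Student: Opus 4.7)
The plan is to show that the expected potential drop from level $i$ to level $i+1$ is driven entirely by elements of the largest bucket $R_i^{(b(i))}$ that get ``demoted'' once the sample $S_i$ is incorporated into $G_{i-1}$ to form $G_i$. The key elementary observation is that for any $e \in R_i^{(b(i))}$ one has $P(e, i) = b(i) + 1$ by definition, and if moreover $f(e \mid G_i) < \tau^{(i)} = (1+\epsBuck)^{b(i)} \tau$, then $P(e, i+1) \le b(i)$ regardless of whether $e$ survives filtering at the coarser threshold $\tau$ (this case also covers $e \in S_i$, for which $f(e \mid G_i) = 0$). Consequently, each such demoted element contributes at least $1$ to $P_i - \bP_{i+1}$, while by Lemma \ref{lm:potential_decrease_always} every other element contributes nonnegatively.

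With this reduction in hand, I would first invoke Lemma \ref{lm:uniform_lazy} to argue that, conditioned on $\cbr{\bT \ge i, \bH_i^{-m} = H_i^{-m}, \bM_i = m_i}$, the sample $S_i$ is uniformly distributed over size-$m_i$ subsets of $R_i^{(b(i))}$ --- precisely the distribution of $S' \cup \cbr{s'}$ appearing in Definition \ref{def:suitable_sample_count}. When $m_i \in M_i^*$, inequality \eqref{eq:sample_count_filter} therefore gives
\[
\Ex{\abs{\FilterF{}(R_i^{(b(i))}, G_i, \tau^{(i)})} \,\bigg|\, \bT \ge i, \bH_i^{-m} = H_i^{-m}, \bM_i = m_i} \le \rbr{1 - \tfrac{\epsSamp}{4}} \abs{R_i^{(b(i))}},
\]
so at least an $\epsSamp/4$ fraction of $R_i^{(b(i))}$ is expected to be demoted, yielding
\[
\Ex{P_i - \bP_{i+1} \,\bigg|\, \bT \ge i, \bH_i^{-m} = H_i^{-m}, \bM_i \in M_i^*} \ge \tfrac{\epsSamp}{4}\, \abs{R_i^{(b(i))}}.
\]

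Finally, I would apply Lemma \ref{lm:bound_countF_always} to bound $\Pr{\bM_i \notin M_i^* \mid \bT \ge i, \bH_i^{-m} = H_i^{-m}} \le \epsSamp/(100 k^{10}) \le 1/2$, and bound the contribution of the bad event $\bM_i \notin M_i^*$ from below by $0$ via Lemma \ref{lm:potential_decrease_always}. The law of total expectation then gives $\Ex{P_i - \bP_{i+1} \mid \bT \ge i, \bH_i^{-m} = H_i^{-m}} \ge \tfrac{1}{2} \cdot \tfrac{\epsSamp}{4} \abs{R_i^{(b(i))}} = \tfrac{\epsSamp}{8} \abs{R_i^{(b(i))}}$, as claimed.

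The main obstacle I expect is justifying the ``potential drops by at least one'' claim cleanly, since $P(e, i+1)$ is defined relative to the base threshold $\tau$ while the suitable-count definition operates at the bucket threshold $\tau^{(i)}$. This mismatch requires a careful bucket-index comparison using monotonicity and submodularity of $f$, but it is just an unwinding of definitions. A secondary subtlety is matching the randomness of $S_i$ under the conditioning $\cbr{\bT \ge i, \bH_i^{-m} = H_i^{-m}, \bM_i = m_i}$ with that of $(S', s')$ in Definition \ref{def:suitable_sample_count}; this follows directly from Lemma \ref{lm:uniform_lazy} once one observes that $R_i^{(b(i))}, G_{i-1}$, and $\tau^{(i)}$ are all determined by $H_i^{-m}$.
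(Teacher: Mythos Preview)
Your proposal is correct and follows essentially the same approach as the paper: reduce $P_i - P_{i+1}$ to counting elements of $R_i^{(b(i))}$ that fail the $\tau^{(i)}$-filter after adding $S_i$, apply \eqref{eq:sample_count_filter} when $m_i \in M_i^*$, and handle the complementary event via nonnegativity (Lemma~\ref{lm:potential_decrease_always}). The only cosmetic difference is that the paper invokes Lemma~\ref{lm:bound_countF} directly (since we are immediately after a call to $\ReconstructF{}$, so uniformity of $S_i$ and the $M_i^*$-probability bound hold by construction) rather than the more general Lemmas~\ref{lm:uniform_lazy} and~\ref{lm:bound_countF_always}; both routes are valid here.
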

\begin{proof}
  We first observe that since we are considering
  these values right after we invoke $\ReconstructF{}(j)$, we have $\Rc_i = R_i$. 
  We first give a sketch of the proof.
  For each $i\ge j$,
  by Lemma \ref{lm:bound_countF},
  $\bM_i \in M_i^*$
  with probability at least $1-\epsSamp/k^{10}$.
  By definition of $M_i^*$, $\bM_i \in M_i^*$ means that
  in expectation,
  at least
  $\epsSamp/4$ fraction of the elements in $R_i^{b(i)}$ satisfy $P_{i+1}(b, e) \le P_i(b, e) - 1$.
  If $\bM_i \notin M_i^*$, 
  we still have
  $P_{i+1} \le P_{i}$. This implies the bound
  as $\bM_i \in M_i^*$ holds with high probability.

  Formally,
  using the shorthand $\sigma_i^{-m}$ instead of $\bT \ge i, \bH_i^{-m}=H_i^{-m}$,
  \begin{align*}
    &\Ex{P_{i} - \bP_{i+1} \big| \sigma_i^{-m}}
    \\&=
    \Pr{\bM_i \in M_i^* | \sigma_i^{-m}} \cdot \Ex{P_{i} - \bP_{i+1} \big| \sigma_i^{-m}, \bM_i \in M_i^*}
    +
    \Pr{\bM_i \notin M_i^* | \sigma_i^{-m}} \cdot \Ex{P_{i} - \bP_{i+1} \big| \sigma_i^{-m}, \bM_i \notin M_i^*}
    \\&\ge
    \Pr{\bM_i \in M_i^* | \sigma_i^{-m}} \cdot \Ex{P_{i} - \bP_{i+1} \big| \sigma_i^{-m}, \bM_i \in M_i^*}
  \end{align*}
  where the inequality follows from the fact that $P_i - P_{i+1}$ is always non-negative given Lemma~\ref{lm:potential_decrease_always}.
  By Lemma \ref{lm:bound_countF},
  we can further bound this as
  \begin{align}
    \notag
    \Ex{P_{i} - \bP_{i+1} \big| \sigma_i^{-m}}
    &\ge
    \notag
    \Pr{\bM_i \in M_i^* | \sigma_i^{-m}} \cdot \Ex{P_{i} - \bP_{i+1} \big| \sigma_i^{-m}, \bM_i \in M_i^*}
    \notag
    \\&\ge
    \notag
    (1-\frac{\epsSamp}{100k^{10}}) \cdot \Ex{P_{i} - \bP_{i+1} \big| \sigma_i^{-m}, \bM_i \in M_i^*}
    \\&\overset{(a)}{\ge}
    \notag
    \frac{1}{2} \cdot \Ex{P_{i} - \bP_{i+1} \big| \sigma_i^{-m}, \bM_i \in M_i^*}
    \\&=
    \frac{1}{2} \cdot \Exu{m_i \sim \bM_i | \sigma_i^{-m}, \bM_i \in M_i^*}{\Ex{P_{i} - \bP_{i+1} \big| \sigma_i^{-m}, \bM_i = m_i}}
    \label{yq8wer}
  \end{align}
  where for $(a)$, we have used the fact that $\epsSamp < \frac{1}{4}$.
  Finally, we observe that
  \begin{align*}
    P_i - P_{i+1} 
    =
    \sum_{e\in R_i} P(e, i) - \sum_{e\in R_{i+1}} P(e, i+1)
    &=
    \sum_{e\in R_i} P(e, i) - \sum_{e\in R_{i}} P(e, i+1)
    \\&=
    \sum_{e \in R_i^{(b(i))}}
     \rbr{P(e, i) - P(e, i+1)}
     + \sum_{e \in R_i \backslash R_i^{(b(i))}}
     \rbr{P(e, i) - P(e, i+1)}
    \\&\overset{(a)}{\ge}
    \sum_{e \in R_i^{(b(i))}}
     \rbr{P(e, i) - P(e, i+1)}
     \\&\ge
    \sum_{e \in R_i^{(b(i))}}
    \ind{P(e, i) - P(e, i+1) \ge 1}
     \\&\overset{(b)}{{=}}
    \sum_{e \in R_i^{(b(i))}}
    \ind{f(e | G_{i}) < \tau^{(i)} \text{ or } |G_i|\ge k}
   \\&\overset{(c)}{=}
     \abs{R_i^{(b(i))}} - 
    \abs{\FilterF{}(R_i^{(b(i))}, G_i, \tau^{(i)}) }
    \\&=
     \abs{R_i^{(b(i))}} - 
    \abs{\FilterF{}(R_i^{(b(i))}, G_{i-1} \cup S_i, \tau^{(i)}) }.
  \end{align*}
  In the above derivation,  $(a)$ follows from Lemma~\ref{lm:potential_decrease_always},
   $(b)$ follows from the fact that if $|G_i| \ge k$, then $P(e, i + 1)<P(e, i)$ will hold for $e \in R_i$ because 
  $P(e, i) \ge 1$ and $P(e, i+1)=0$, and if $|G_i| \ne k$, then $P(e, i) < P(e, i+1)$ is equivalent to $f(e | G_i) < \tau^{(i)}$ for $e \in R_i^{(b(i))}$. 
  Finally, $(c)$ follows from the definition of $\FilterF{}$. 
  
  Therefore, 
  \begin{align*}
    \Ex{P_{i} - \bP_{i+1} \big| \sigma_i^{-m}}
    &\ge
    \frac{1}{2} \cdot \Exu{m_i \sim \bM_i | \sigma_i^{-m}, \bM_i \in M_i^*}{\Ex{P_{i} - \bP_{i+1} \big| \sigma_i^{-m}, \bM_i = m_i}}
    \\&\ge
    \frac{1}{2} \cdot \frac{\epsSamp}{4} |R_i^{b(i)}|
  \end{align*}
  where the first inequality follows from \eqref{yq8wer} and
  the final inequality follows from \eqref{eq:sample_count_filter}.
\end{proof}

We now prove the following auxiliary lemma
\begin{lemma}\label{lm:auxil}
    Let $\bX_0, \bX_1, \dots, \bX_n$ be a sequence of integer positive variables such that $\bX_{i} \le \bX_{i-1}$ and 
    \begin{align*}
        \Ex{\bX_i \mid \bX_1 = X_1, \dots, \bX_{i-1} = X_{i-1}}
        \le 
        (1-\eps') {X_{i-1}}. 
    \end{align*}
    Let $T$ denote the first index $i$ such that $X_T = 0$ and assume that $\bX_0=N$ for some fixed integer $N$. Then
    $\Ex{\bT} \le \frac{\log(N) + 1}{\poly(\eps')}$.
\end{lemma}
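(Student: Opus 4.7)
The approach is to convert the one-step multiplicative expectation bound into a pointwise multiplicative contraction via Markov, then count how many such ``good'' contractions are needed to drive $\bX_i$ below $1$, at which point the integrality assumption forces $\bX_i=0$.

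First, I would apply Markov's inequality conditionally on the past. For any realization $\bX_0=X_0,\dots,\bX_{i-1}=X_{i-1}$ with $X_{i-1}>0$,
\[
\Pr{\bX_i > (1-\eps'/2)X_{i-1} \mid \bX_0,\dots,\bX_{i-1}} \le \frac{\Ex{\bX_i \mid \bX_0,\dots,\bX_{i-1}}}{(1-\eps'/2)X_{i-1}} \le \frac{1-\eps'}{1-\eps'/2} \le 1-\eps'/2.
\]
Call step $i$ \emph{good} if $\bX_i \le (1-\eps'/2)\bX_{i-1}$. Then the conditional probability that step $i$ is good is at least $\eps'/2$, and when $X_{i-1}=0$ this holds trivially because the assumption $\bX_i\le\bX_{i-1}$ forces $\bX_i=0$.

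Second, combining the good-step contraction with the deterministic monotonicity $\bX_i \le \bX_{i-1}$, one obtains $\bX_t \le N\,(1-\eps'/2)^{G_t}$ for every $t$, where $G_t$ counts the number of good steps among the first $t$. Using $\log(1/(1-\eps'/2)) \ge \eps'/2$ for $\eps'<1$, the choice $L:=\lceil 2(\log N + 1)/\eps'\rceil$ gives $N(1-\eps'/2)^L < 1$. Since $\bX_t$ is a nonnegative integer, $G_t\ge L$ implies $\bX_t=0$. Hence $\bT \le \tau_L$, where $\tau_L$ denotes the first time the process accumulates $L$ good steps.

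Third, I would bound $\Ex{\tau_L}$ by a stochastic-dominance argument. Writing $\tau_L = \sum_{k=1}^L (\tau_k - \tau_{k-1})$ with $\tau_0:=0$, the conditional lower bound $\eps'/2$ on the probability of a good step, valid at every step regardless of the past, lets one couple each waiting time $\tau_k-\tau_{k-1}$ with an independent $\mathrm{Geom}(\eps'/2)$ random variable that dominates it. Therefore $\Ex{\tau_k-\tau_{k-1}} \le 2/\eps'$ and
\[
\Ex{\bT} \le \Ex{\tau_L} \le \tfrac{2L}{\eps'} \le \tfrac{4(\log N + 1)}{(\eps')^2} = \tfrac{\log N + 1}{\poly(\eps')}.
\]

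The only real subtlety, rather than a genuine obstacle, lies in the last step: after $\bT$ the process may have terminated, but because $0 \le (1-\eps'/2)\cdot 0$ every subsequent step is vacuously good, so one can extend the process past $\bT$ without inflating $\Ex{\tau_L}$. With this extension the indicators $Y_i:=\ind{\text{step $i$ is good}}$ satisfy $\Pr{Y_i=1 \mid Y_1,\dots,Y_{i-1}} \ge \eps'/2$ at every step, which is exactly the hypothesis required for the coupling with iid $\mathrm{Bernoulli}(\eps'/2)$ trials used above.
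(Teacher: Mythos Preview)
Your proof is correct and follows essentially the same approach as the paper: apply Markov's inequality conditionally to obtain a probability at least $\eps'/2$ of a multiplicative contraction by $1-\eps'/2$ at each step, then bound the number of such contractions needed to force the integer sequence to zero via a geometric/coupling argument. Your presentation is in fact slightly more streamlined---you count good steps directly via $G_t$ and the inequality $\bX_t \le N(1-\eps'/2)^{G_t}$, whereas the paper introduces two layers of hitting times (first for thresholds $(1-\eps'/2)^i N$, then for thresholds $N/2^i$)---but the underlying argument is the same.
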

\begin{proof}
   Let $T_i$ denote the first index such that $X_{T_i} \le (1-\eps/2){^i}N$. 
   We claim that
  \begin{align}
      \Ex{\bT_i - \bT_{i-1}} \le \poly(\frac{1}{\eps'}). 
      \label{jashduqjha1822}
  \end{align}
  By Markov's inequality,
  \begin{align*}
        \Pr{{\bX_i} > (1-\eps'/2) X_{i-1} \mid \bX_1 = X_1, \dots, \bX_{i-1} = X_{i-1}}
        \le 
        \frac{1-\eps'}{1-\eps'/2} \le 1-\eps'/2. 
   \end{align*}
   Therefore, because the mean of a geometric random variable with parameter $p$ is $\frac{1}{p}$, it follows that in expectation, we need to increase the index $i$ at most
   $O(\frac{1}{\eps'})$ times before we decrease $X_i$ by a factor of $1-\eps'/2$. Since $X_i \le X_{i-1}$, this implies \eqref{jashduqjha1822}. Therefore, letting $T'_i$ denote the first index such that $X_{T'_i}\le \frac{1}{2^i} N$, it follows that 
   \begin{align*}
     \Ex{\bT'_i - \bT'_{i-1}} \le 
     \log_{\frac{1}{1-\eps'/2}}(2)
     O(\frac{1}{\eps'})
     = \frac{1}{\poly(\eps')}
   \end{align*}
   Note however that $T'_{\log(N) + 1}= T$ because
   $X_{i}$ are integers. Therefore, 
   \begin{align*}
       \Ex{\bT} \le \Ex{\bT'_{\log(N) + 1}} =
       \sum_{i=1}^{\log(N) + 1}\Ex{\bT'_{i} - \bT'_{i-1}}
       \le \frac{\log(N) + 1}{\poly(\eps')}.
   \end{align*}
\end{proof}

\begin{lemma}\label{lm:num_level_reconstruct}
  Assume we are given sets $R_{j}, G_{j-1}$
  satisfying
  $\FilterF{}(R_j, G_{j-1}, \tau) = R_j$,
  and we invoke $\ReconstructF{}(j)$
  with $(\Rp_j, D)$ set to $(R_j, \emptyset)$, obtaining the (random) values
  $\bT, \bS_{j}, \dots \bS_{T}, \bR_{j + 1}, \dots \bR_{T+1}$.
  The expected value of $\bT - j$ is bounded by
  $\poly(\log(|R_j|), \log(k), \frac{1}{\eps})$.
\end{lemma}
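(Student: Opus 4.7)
The plan is to combine the per-level potential-drop bound from Lemma \ref{lm:potential_decrease} with the auxiliary Lemma \ref{lm:auxil} on sequences of decreasing integer random variables. The key observation is that $|R_i^{(b(i))}|$ (the largest bucket) is an $\Omega(1/\log(k))$ fraction of $|R_i|$, because there are only $O(\log_{1+\epsBuck}(4k))$ buckets. Combined with the fact that $P_i \le |R_i| \cdot \log_{1+\epsBuck}(4k)$ from Lemma \ref{lm:potential_properties}, this means
\[
|R_i^{(b(i))}| \;\ge\; \frac{|R_i|}{\log_{1+\epsBuck}(4k)} \;\ge\; \frac{P_i}{\log^2_{1+\epsBuck}(4k)}.
\]
Substituting into Lemma \ref{lm:potential_decrease} yields
\[
\Ex{\bP_{i+1} \mid \sigma_i^{-m}} \;\le\; \Big(1 - \eps'\Big) P_i, \qquad \eps' := \frac{\epsSamp}{8\,\log^2_{1+\epsBuck}(4k)},
\]
for every history $H_i^{-m}$ with positive probability, whenever $\bT \ge i$.

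Next, I want to reduce to the auxiliary Lemma \ref{lm:auxil}. Set $\bX_\ell := \bP_{j+\ell}$, with $X_0 = P_j$. The quantities $P_i$ are non-negative integers by construction, and Lemma \ref{lm:potential_decrease_always} guarantees the deterministic monotonicity $\bX_\ell \le \bX_{\ell-1}$. The hypothesis of Lemma \ref{lm:auxil} requires a bound on $\Ex{\bX_\ell \mid \bX_1, \dots, \bX_{\ell-1}}$; since conditioning on $\bH_{j+\ell-1}^{-m}$ is strictly more information than conditioning on $\bX_1,\dots,\bX_{\ell-1}$, the tower property upgrades the display above to exactly the required form. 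Also, by Lemma \ref{lm:potential_properties}, $\bX_\ell = 0$ iff $j+\ell = \bT + 1$, so the stopping time in Lemma \ref{lm:auxil} equals $\bT - j + 1$.

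Finally, Lemma \ref{lm:potential_properties} gives $P_j \le |R_j| \cdot \log_{1+\epsBuck}(4k)$, and Lemma \ref{lm:auxil} yields
\[
\Ex{\bT - j} \;\le\; \frac{\log(P_j) + 1}{\poly(\eps')} \;\le\; \poly\!\big(\log(|R_j|), \log(k), \tfrac{1}{\eps}\big),
\]
which is the desired bound.

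The most delicate step is checking that the conditional decrease inequality feeds correctly into Lemma \ref{lm:auxil}. Lemma \ref{lm:potential_decrease} is stated conditionally on the full pre-count history $\bH_i^{-m}$, whereas the auxiliary lemma only asks for conditioning on $(\bX_1,\dots,\bX_{i-1})$; I would write one line justifying this via iterated expectation (and in particular noting that $\bX_\ell = 0$ for $\ell > \bT - j$, so the event $\bT \ge i$ is automatically handled by taking $\bX_\ell = 0$ once the process has terminated, which trivially satisfies $(1-\eps')X_{\ell-1} \ge 0$). The rest is algebraic substitution.
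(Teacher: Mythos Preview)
Your proposal is correct and follows essentially the same route as the paper: convert the per-level potential drop of Lemma~\ref{lm:potential_decrease} into a multiplicative contraction $\Ex{\bP_{i+1}\mid \sigma_i^{-m}}\le (1-\eps')P_i$ via the bucket bound $|R_i^{(b(i))}|\ge |R_i|/\log_{1+\epsBuck}(4k)$ and $P_i\le |R_i|\log_{1+\epsBuck}(4k)$, then upgrade the conditioning to $(\bP_1,\dots,\bP_{i-1})$ by iterated expectation (with the $P_i=0$ case handled trivially), and invoke Lemma~\ref{lm:auxil}. The only cosmetic difference is that the paper expands $\log_{1+\epsBuck}(4k)$ as $\log(4k)/\log(1+\epsBuck)$ and uses $\log(1+x)\ge x/4$ to write $\eps'$ explicitly in terms of $\epsBuck$ and $\log(4k)$, but your form is equivalent.
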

\begin{proof}

  As before, we note that
  $\Rc_i = R_i$ for any $i \ge j$ because we are considering the values right after invoking
  $\ReconstructF{}(j)$.
  
  Recall that by Lemma~\ref{lm:potential_properties}, we have
  \begin{align*}
    |R_i| \le P_i \le |R_i| \cdot \log_{1+\epsBuck}(4k)
  \end{align*}
  Given Lemma \ref{lm:potential_decrease},
  for $i\ge j$,
  \begin{align*}
    \Ex{P_{i} - \bP_{i+1} | \sigma_i^{-m}}
    \ge \frac{\epsSamp}{8} |R_i^{b(i)}|
    \overset{(a)}{\ge} \frac{\epsSamp}{8} \frac{|R_i|}{\ceil{\log_{1+\epsBuck}(2k)}} 
    &\ge \frac{\epsSamp}{8} \frac{P_i}{\log_{1+\epsBuck}^2(4k)} 
    \\&= \frac{\epsSamp}{8} \frac{P_i}{\log^2(4k)} \cdot \log^2(1+\epsBuck)
    \\&\overset{(b)}{\ge} \frac{\epsSamp}{8} \frac{P_i}{\log^2(4k)} \cdot \frac{\epsBuck^2}{16}
    \\&=
    \frac{\epsSamp \cdot \epsBuck^2}{160\log^2(4k)} \cdot P_i
  \end{align*}
  where for $(a)$, we have used the fact that $R_i^{b(i)}$ was
  the largest bucket,
  and for $(b)$, we have used the inequality
  $\log(1+x) \ge \frac{x}{4}$ for $x < 1$

  Setting $\eps'=\frac{\epsSamp \cdot \epsBuck^2}{160\log^2(4k)}$,
  it follows that
  \begin{align*}
      \Ex{\bP_{i+1} | \sigma_{i}^{-m} 
      }
      \le 
      (1-\eps')\cdot P_{i},
  \end{align*}
  Since the value of $\bP_1, \dots, \bP_i$ is deterministic conditioned on $\sigma_{i}^{-m}$, which further implies
  \begin{align}
      \Ex{\bP_{i+1} | 
      \bP_{i}=P_i,
      \dots, \bP_1 = P_1
      }
      \le 
      (1-\eps')\cdot P_{i}.
      \label{qjhwq78}
  \end{align}
  Formally,
  if $P_{i} \ne 0$, 
  we have $\bT \ge i$ given Lemma~\ref{lm:potential_properties}. Therefore,
  by iterated expectation,
  \begin{align*}
      \Ex{\bP_{i+1} | 
      \bP_{i}=P_i,
      \dots, \bP_1 = P_1
      }
      &=
      \Ex{\bP_{i+1} | 
      \bP_{i}=P_i,
      \dots, \bP_1 = P_1, 
      \bT \ge i
      }
      \\&=
      \Ex{
      \Ex{\bP_{i+1} | 
      \bP_{i}=P_i,
      \dots, \bP_1 = P_1, 
      \bT \ge i,
      \bH_i = H_i
      }
      }
      \\&=
      \Ex{
      \Ex{\bP_{i+1} | 
      \bT \ge i,
      \bH_i = H_i
      }
      }
      \\&\le (1-\eps')P_i
  \end{align*}
  as claimed. If $P_i = 0$, then
  \eqref{qjhwq78} holds trivially because
  $P_{i+1}\le P_{i} = 0$. 
  
  Observe that
  \begin{align*}
      \frac{1}{\eps'} = \poly(\log(k), \frac{1}{\eps})
  \end{align*}
  and that
  \begin{align*}
    P_j \le \log_{1+\epsBuck}(4k) |R_j|.
  \end{align*}
  The claim now follows from Lemma~\ref{lm:auxil}.
\end{proof}

\begin{lemma}\label{lm:bound_num_query_recounstruct}
  The expected number of queries made by calling \ReconstructF{}(i) is
  $|\Rc_i| \cdot \poly(\log(|\Rc_i|), \log(k), \frac{1}{\eps}) $, where $|\Rc_i|$ refers to the size of $\Rc_i$ after the update.
\end{lemma}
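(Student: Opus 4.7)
The plan is to bound, separately, the cost of one iteration of the while loop inside \ReconstructF{} and the expected number of such iterations, and then combine them. Concretely, inside the loop at level $j \ge i$ the algorithm
(a) evaluates $f(e \mid G_{j-1})$ for every $e \in R_j$ to place it in a bucket (at most $|R_j|$ queries),
(b) invokes $\CalcSampleCountF{}(R_j^{(b(j))}, G_{j-1}, \tau^{(j)})$,
(c) samples $S_j$ and sets $G_j \gets G_{j-1} \cup S_j$, and
(d) calls $\FilterF{}(R_j, G_j, \tau)$, which again costs at most $|R_j|$ queries. By Lemma~\ref{lm:bound_countF}, step (b) costs $\mO(|R_j^{(b(j))}| \cdot \log(k)/\epsSamp^3) \le \mO(|R_j| \cdot \log(k)/\epsSamp^3)$ queries, so in total a single iteration at level $j$ uses at most $|R_j| \cdot \poly(\log(k), 1/\epsilon)$ queries.

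The next ingredient is that the sizes $|R_j|$ are nonincreasing for $j \ge i$. Indeed, \ReconstructF{}(i) produces $R_{j+1} = \FilterF{}(R_j, G_j, \tau) \subseteq R_j$ by Line~\ref{line:level_filter}, and the starting value satisfies $R_i = \Rp_i \setminus D = \Rc_i$ by Line~\ref{line:level_first_line}. Consequently every iteration costs at most $|\Rc_i| \cdot \poly(\log(k), 1/\epsilon)$ queries deterministically, regardless of the random choices made in earlier levels.

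It remains to bound the number of iterations. This is exactly $\bT - i + 1$, where $\bT$ is the (random) final level produced by the call. Applying Lemma~\ref{lm:num_level_reconstruct} with $j = i$ to the values obtained immediately after Line~\ref{line:level_first_line} (note that the hypothesis $\FilterF{}(R_i, G_{i-1}, \tau) = R_i$ holds because of Lemma~\ref{lm:invariant_filter} together with the fact that $\Rc_i$ has all deleted elements removed) gives
\begin{equation*}
\Ex{\bT - i} \le \poly\!\left(\log(|\Rc_i|), \log(k), \tfrac{1}{\epsilon}\right).
\end{equation*}
Combining the deterministic per-iteration bound with this expected bound on the number of iterations, and using linearity of expectation, yields
\begin{equation*}
\Ex{\text{\#queries in \ReconstructF{}(i)}} \le \bigl(\Ex{\bT - i} + 1\bigr) \cdot |\Rc_i| \cdot \poly\!\left(\log(k), \tfrac{1}{\epsilon}\right) = |\Rc_i| \cdot \poly\!\left(\log(|\Rc_i|), \log(k), \tfrac{1}{\epsilon}\right),
\end{equation*}
as desired.

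The only subtle point is the independence between the per-iteration cost and the random number of iterations: since the per-iteration cost is bounded deterministically (and not just in expectation), no conditioning issue arises and the product bound is valid without any further argument. I expect this to be the main point worth being explicit about; the rest of the proof is assembling bounds already proved earlier in the section.
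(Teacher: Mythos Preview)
Your proposal is correct and follows essentially the same approach as the paper: bound the per-iteration cost by $|\Rc_i|\cdot\poly(\log k,1/\epsilon)$ via Lemma~\ref{lm:bound_countF} together with the monotonicity $R_{j+1}\subseteq R_j$, bound the expected number of iterations via Lemma~\ref{lm:num_level_reconstruct}, and multiply. Your treatment is in fact a bit more careful than the paper's, since you explicitly note that the per-iteration bound is deterministic (so no conditioning issue arises when combining it with the random iteration count) and you spell out why the hypothesis $\FilterF(R_i,G_{i-1},\tau)=R_i$ of Lemma~\ref{lm:num_level_reconstruct} is met after Line~\ref{line:level_first_line}.
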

\begin{proof}
  Before stating the proofs, we note that the set $\Rc_i$ itself does not change during $\ReconstructF{}(i)$. Therefore, the value $|\Rc_i|$ is deterministic conditioned on the value of the data structure before the update.
  
  We now note the algorithm for reconstruction first sets $R_i$ to the new value of $\Rc_i$, and then starts building the levels. 
  Therefore,
  by Lemma \ref{lm:bound_countF},
  when
  each iteration of the while-loop in algorithm \ref{alg:offline} is executed,
  the call to \CalcSampleCountF{}{} makes at most 
  $\mO((|\Rc_i|) \cdot \poly(\log(k), \frac{1}{\epsilon}))$ queries.
  By Lemma~\ref{lm:num_level_reconstruct}, in expectation, the while-loop is executed at most 
  $\poly(\log(|\Rc_i|), \log(k), \frac{1}{\eps})$ times. 
  Since $\FilterF{}$ and bucketing make at most $\mO(|\Rc_i|)$ queries, the claim follows from Lemma \ref{lm:num_level_reconstruct}.
\end{proof}
\subsubsection{Amortization}
\label{sec:query_amor}
In this section, we bound the expected amorortized query complexity of the insert and delete operations.
We begin by bounding the number of levels $\bT$ at an arbitrary point in the update stream.
We state this result in Lemma \ref{lm:num_level_total}.
Next, in Lemma \ref{lm:bound_num_query_amor}, we show how to
bound the expected amorotoized query comlpexity of the algorithm by charging
back each call to \ReconstructF{} by the operations that triggered it.

\begin{lemma}\label{lm:num_level_total}
  At any point in the stream,
  the expected number of levels $\Ex{\bT}$ is 
  at most
  $\poly(\log(n), \log(k), \frac{1}{\eps})$.
\end{lemma}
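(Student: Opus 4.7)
The plan is to bound $\Ex{\bT}$ via a potential-decrease argument applied to the current values $\bP_1, \bP_2, \ldots$, analogous to the proof of Lemma \ref{lm:num_level_reconstruct} but valid at an arbitrary point in the stream. By Lemma \ref{lm:potential_properties}, the potentials form a sequence of nonnegative integers, nonincreasing by Lemma \ref{lm:potential_decrease_always}, with $\bP_1 \le n \log_{1+\epsBuck}(4k)$ deterministically and with $\bP_{\bT+1}=0$ while $\bP_i>0$ for $i\le \bT$. Hence if I can show a multiplicative decrease of the form $\Ex{\bP_{i+1} \mid \bP_1,\ldots,\bP_i} \le (1-\eps')\bP_i$ for some $\eps'=\poly(\eps,1/\log k)$, an application of Lemma \ref{lm:auxil} with $\bX_i := \bP_i$ immediately yields $\Ex{\bT} \le (\log \bP_1 + 1)/\poly(\eps')$, which is $\poly(\log n,\log k,1/\epsilon)$ as required.

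The main technical step is therefore to extend Lemma \ref{lm:potential_decrease} so that the bound $\Ex{\bP_i - \bP_{i+1} \mid \bT \ge i, \bH_i^{-m}=H_i^{-m}} \ge \tfrac{\epsilon}{20}\cdot |R_i^{(b(i))}|$ holds at an arbitrary point in the stream, not only immediately after a fresh $\ReconstructF{}$ call. The extension follows the same chain of inequalities as in the proof of Lemma \ref{lm:potential_decrease}, because the three ingredients it uses persist at all times: the filter invariant $\Rc_{i+1} = \FilterF{}(\Rc_i, G_i)$ from Lemma \ref{lm:invariant_filter}, the uniform-sampling identity of Lemma \ref{lm:uniform_lazy}, and the suitable-sample-count guarantee of Lemma \ref{lm:bound_countF_always}. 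The one new issue is that elements of the fixed bucket $R_i^{(b(i))}$ may have been deleted since the last reconstruction of level $i$, so that their contribution to $\bP_i$ is already zero and cannot be further lost. However, Lemma \ref{lm:reconstruction_condition} ensures $|R_i^{(b(i))} \cap D| \le \epsDel\,|R_i^{(b(i))}|$, and with the choice $\epsDel = \epsilon/20$ from Section \ref{sec:choice_param} this loss is strictly dominated by the $\tfrac{\epsSamp}{4}$-fraction shrinkage guaranteed by the suitable-sample-count property, yielding the claimed absolute decrease.

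Translating this absolute decrease to a multiplicative one uses the potential bounds $|\Rc_i|\le \bP_i\le |\Rc_i|\log_{1+\epsBuck}(4k)$ and the maximality of $R_i^{(b(i))}$ among the buckets, exactly as in the proof of Lemma \ref{lm:num_level_reconstruct}; this supplies a $(1-\eps')$ factor with $\eps' = \poly(\eps,1/\log k)$. Combined with iterated expectation over the pre-count history (to go from conditioning on $\bH_i^{-m}$ to conditioning on $\bP_1,\ldots,\bP_i$), this establishes the hypothesis of Lemma \ref{lm:auxil} and finishes the proof. The main obstacle is precisely the extension step: in the fresh-reconstruction setting of Lemma \ref{lm:potential_decrease} one had the clean relation $R_{i+1} = \FilterF{}(R_i, G_i, \tau)$ between the bookkeeping sets $R_i$ and $R_{i+1}$, whereas at an arbitrary time these two sets may have been frozen at different past reconstructions and need not be directly comparable. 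The remedy, as outlined above, is to work throughout with the dynamic sets $\Rc_i$ and $\Rc_{i+1}$, which are coupled by the stream-wide filter invariant of Lemma \ref{lm:invariant_filter}, and to absorb the bucket-wise deletion discrepancy using the reconstruction trigger condition.
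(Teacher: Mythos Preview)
Your proposal is correct and follows essentially the same route as the paper's proof: extend Lemma \ref{lm:potential_decrease} to an arbitrary point in the stream by replacing Lemma \ref{lm:bound_countF} with Lemma \ref{lm:bound_countF_always}, work with the dynamic sets $\Rc_i$ via Lemma \ref{lm:invariant_filter}, and absorb the at-most-$\epsDel$ fraction of deleted bucket elements using Lemma \ref{lm:reconstruction_condition} before converting the absolute drop to a multiplicative one and invoking Lemma \ref{lm:auxil}. One small point you gloss over: the conversion step is not quite ``exactly as in Lemma \ref{lm:num_level_reconstruct}'' because there $\Rc_i=R_i$, whereas here you additionally need $|\Rc_i|\le|\Rp_i|\le\tfrac32|R_i|$ from Lemma \ref{lm:reconstruction_condition} to relate the largest bucket of $R_i$ to $P_i$; this is straightforward and already implicit in the ingredients you cite.
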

\begin{proof}
  We note that this lemma
  is different from Lemma \ref{lm:num_level_reconstruct} because we are claiming that the number of levels is always bounded in expectation (not just after a call to \ReconstructF{}).
  In particular, this also means that we can no longer assume $\Rc_i=R_i$. 
  The proof follows using a similar technique as Lemma \ref{lm:num_level_reconstruct}.
  We first claim that
  a variant of Lemma \ref{lm:potential_decrease}
  still holds.
  More formally, 
  we claim that
  \begin{equation}
    \Ex{P_{i} - \bP_{i+1} \big| \bT \ge i, \bH_i^{-m}=H_i^{-m}} \ge \frac{\epsSamp}{16} \cdot |R_i^{(b(i))}|
    \label{eq:jul12_1744}
  \end{equation}
  for any $i \ge 1$.
  The proof follows with the exact same logic as the proof of Lemma \ref{lm:potential_decrease}.
  Using the shorthand $\sigma_i^{-m}$ to denote
  $\bT \ge i, \bH_i^{-m}=H_i^{-m}$ (Definition \ref{def:pre_history}),
  for all $i\ge 1$,
  \begin{align}
    \notag
    \Ex{P_{i} - \bP_{i+1} \big| \sigma_i^{-m}}
    &\ge
    \notag
    \Pr{\bM_i \in M_i^* | \sigma_i^{-m}} \cdot \Ex{P_{i} - \bP_{i+1} \big| \sigma_i^{-m}, \bM_i \in M_i^*}
    \\&\overset{(a)}{\ge}
    \notag
    (1-\frac{\epsSamp}{100k^{10}})\cdot \Ex{P_{i} - \bP_{i+1} \big| \sigma_i^{-m}, \bM_i \in M_i^*}
    \\&\ge
    \notag
    \frac{1}{2}\cdot \Ex{P_{i} - \bP_{i+1} \big| \sigma_i^{-m}, \bM_i \in M_i^*}
    \\&=
    \frac{1}{2} \cdot \Exu{m_i \sim \bM_i | \sigma_i^{-m}, \bM_i \in M_i^*}{\Ex{P_{i} - \bP_{i+1} \big| \sigma_i^{-m}, \bM_i = m_i}}
    \label{lakjsdf}
  \end{align}
  where for (a) we have now used Lemma \ref{lm:bound_countF_always} instead of \ref{lm:bound_countF}.
  
  Next, we observe that
  \begin{align*}
    P_{i} - P_{i+1}
    &=
    \sum_{e \in \Rc_i} (P(e, i) - P(e, i+1))
    \\&\ge
    \sum_{e \in R_i^{(b(i))}\cap \Rc_i}
    \ind{P(e, i) - P(e, i+1) \ge 1}
     \\&\ge
    \sum_{e \in R_i^{(b(i))}\cap \Rc_i}
    \ind{f(e | G_{i}) < \tau^{(i)} \text{ or } |G_i|\ge k}
    \\&=
    \sum_{e \in R_i^{(b(i))}}
    \ind{f(e | G_{i}) < \tau^{(i)} \text{ or } |G_i|\ge k}
    - 
    \sum_{e \in R_i^{(b(i))} \backslash \Rc_i}
    \ind{f(e | G_{i}) < \tau^{(i)} \text{ or } |G_i|\ge k}
    \\&\overset{(a)}{\ge}
    \sum_{e \in R_i^{(b(i))}}
    \ind{f(e | G_{i}) < \tau^{(i)} \text{ or } |G_i|\ge k}
    - 
    \sum_{e \in R_i^{(b(i))} \cap D}
    \ind{f(e | G_{i}) < \tau^{(i)} \text{ or } |G_i|\ge k}
    \\&\ge 
    \sum_{e \in R_i^{(b(i))}}
    \ind{f(e | G_{i}) < \tau^{(i)} \text{ or } |G_i|\ge k}
    - 
    |R_i^{(b(i))} \cap D|
   \\&=
     \abs{R_i^{(b(i))}} - 
    \abs{\FilterF{}(R_i^{(b(i))}, G_i , \tau^{(i)}) }
    - |R_i^{(b(i))} \cap D|
    \\&\overset{(b)}{\ge} 
    \abs{R_i^{(b(i))}} - 
    \abs{\FilterF{}(R_i^{(b(i))}, G_i, \tau^{(i)}) }
    - \epsDel |R_i^{(b(i))}|
  \end{align*}
  where for $(a)$, we have used the fact that
  $R_i \backslash \Rc_i \subseteq D$, and for $(b)$, we have used Lemma \ref{lm:reconstruction_condition}.
  Therefore, 
  \begin{align*}
    \Ex{P_i - \bP_{i+1} \big| \sigma_i^{-m}}
     + \frac{\epsDel}{2} |R_i^{(b(i))}|
    &\ge
    \frac{1}{2}\Exu{m_i \sim \bM_i | \sigma_i^{-m}, \bM_i \in M_i^*}{\Ex{P_i - \bP_{i+1} \big| 
    \sigma_i^{-m}, \bM_i = m_i}}
    + \frac{\epsDel}{2} |R_i^{(b(i))}|
    \\&\ge
    \frac{1}{2}\Exu{m_i \sim \bM_i | \sigma_i^{-m}, \bM_i \in M_i^*}{
      \abs{R_i^{(b(i))}} - 
    \abs{\FilterF{}(R_i^{(b(i))}, G_i, \tau^{(i)}) 
    }
    \;
    \Bigg|
    \;
    \sigma^{-m}_i, \bM_i = m_i
    }
    \\&\ge
    \frac{1}{2} \cdot \frac{\epsSamp}{4} |R_i^{b(i)}|
  \end{align*}
  where the first inequality follows from
  \eqref{lakjsdf},
  and
  the final inequality follows from Lemma \ref{lm:uniform_lazy}, together with Equation \eqref{eq:sample_count_filter}.

  Since
  $\epsDel \le \epsSamp/16$, it follows that
  \begin{align*}
   \Ex{P_i - \bP_{i+1} \big| \sigma_i^{-m}}
   \ge 
   \frac{\epsSamp}{16}
   |R_i^{b(i)}|.
  \end{align*}

  We can conclude from Lemma~\ref{lm:reconstruction_condition} that
  $|\Rc_i| \le |\Rp_i| \le \frac{3}{2}\abs{R_i} \le 2\abs{R_i}$. Therefore,
  \begin{align*}
    \Ex{P_i - \bP_{i+1} | \sigma_i^{-m}}
    \ge \frac{\epsSamp}{16} |R_i^{b(i)}|
    \ge \frac{\epsSamp}{16} \frac{|R_i|}{\log_{1+\epsBuck}(4k)} 
    &\overset{(a)}{\ge} \frac{\epsSamp}{32} \frac{|\Rc_i|}{\log_{1+\epsBuck}(4k)} 
    \\&\overset{(b)}{\ge} \frac{\epsSamp}{32} \frac{P_i}{\log_{1+\epsBuck}^2(4k)} 
    \\&= \frac{\epsSamp}{32} \frac{P_i}{\log^2(4k)} \cdot \log^2(1+\epsBuck)
    \\&\overset{(c)}{\ge} \frac{\epsSamp}{32} \frac{P_i}{\log^2(4k)} \cdot \frac{\epsBuck^2}{20}
  \end{align*}
  where for $(a)$, we have used $|\Rc_i| \le 2\abs{R_i}$,
  for $(b)$ we have used Lemma~\ref{lm:potential_properties},
  and for $(c)$ we have used the inequality 
  $\log(1+x) \ge \frac{x}{4}$ for $x < 1$
  and the assumption $\epsBuck \le 1/10$

  Defining
  $\eps' := \frac{\epsSamp}{32\log^2(4k)} \cdot 
  \frac{\epsBuck^2}{20}
  $, it follows that
  \begin{align*}
      \Ex{\bP_{i+1} | \sigma_{i}^{-m} 
      }
      \le 
      (1-\eps')\cdot P_{i},
  \end{align*}
  As before, this further implies
  \begin{align*}
      \Ex{\bP_{i+1} | 
      \bP_{i}=P_i,
      \dots, \bP_1 = P_1
      }
      \le 
      (1-\eps')\cdot P_{i}.
  \end{align*}
  Formally,
  if $P_{i} \ne 0$, 
  we have $\bT \ge i$ given Lemma~\ref{lm:potential_properties}. Therefore,
  by iterated expectation,
  \begin{align*}
      \Ex{\bP_{i+1} | 
      \bP_{i}=P_i,
      \dots, \bP_1 = P_1
      }
      &=
      \Ex{\bP_{i+1} | 
      \bP_{i}=P_i,
      \dots, \bP_1 = P_1, 
      \bT \ge i
      }
      \\&=
      \Ex{
      \Ex{\bP_{i+1} | 
      \bP_{i}=P_i,
      \dots, \bP_1 = P_1, 
      \bT \ge i,
      \bH_i = H_i
      }
      }
      \\&=
      \Ex{
      \Ex{\bP_{i+1} | 
      \bT \ge i,
      \bH_i = H_i
      }
      }
      \\&\le (1-\eps')P_i
  \end{align*}
  as claimed. If $P_i = 0$, then
  \eqref{qjhwq78} holds trivially because
  $P_{i+1}\le P_{i} = 0$. 
  Note however that
  \begin{align*}
      \frac{1}{\eps'} = \poly(\log(k), \frac{1}{\epsilon})
  \end{align*}
  and
  \begin{align*}
    P_1 \le \log_{1+\epsBuck}(4k) |\Rc_1|.
  \end{align*}
  The claim now follows from Lemma~\ref{lm:auxil}.

\end{proof}
\begin{lemma}\label{lm:bound_num_query_amor}
  The amortized query complexity of our algorithm is at most
  \begin{math}
    \poly(\log(n), \log(k), \frac{1}{\epsilon})
  \end{math}
\end{lemma}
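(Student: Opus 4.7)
The plan is to separate the total query complexity into a per-update traversal cost inside \InsertF{} and \DeleteF{} and the dominant cost incurred by calls to \ReconstructF{}, then to amortize each part separately. The traversal cost per update is $O(T)$ oracle queries since the loops in Algorithms~\ref{alg:insert} and~\ref{alg:delete} scan at most $T+1$ levels and make a constant number of oracle queries per level. By Lemma~\ref{lm:num_level_total} one has $\Ex{T}=\poly(\log n,\log k,1/\epsilon)$ at any moment of the stream, so by linearity of expectation this piece contributes $\poly(\log n,\log k,1/\epsilon)$ amortized queries per update.

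For the reconstruction cost, Lemma~\ref{lm:bound_num_query_recounstruct} says that each call to $\ReconstructF{}(i)$ uses at most $|\Rc_i|\cdot\poly(\log n,\log k,1/\epsilon)$ queries in expectation, and I will distribute this cost by a charging argument. For each reconstruction I identify the set of updates that ``caused'' it: for an insertion-triggered call, the insertions that added to $\Rp_i$ since the last reconstruction at a level $\le i$; for a deletion-triggered call, the deletions that added to $D\cap R_i^{(b(i))}$ since then. In the insertion case, the triggering condition $|\Rp_i|\ge\tfrac32 |R_i|$ forces at least $|R_i|/2$ contributors while $|\Rc_i|\le|\Rp_i|\le\tfrac32 |R_i|$ by Lemma~\ref{lm:reconstruction_condition}, so each contributor absorbs $\poly(\log n,\log k,1/\epsilon)$ queries. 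In the deletion case, the condition guarantees at least $\epsDel|R_i^{(b(i))}|\ge\epsDel|R_i|/\lceil\log_{1+\epsBuck}(2k)\rceil$ contributors, because $R_i^{(b(i))}$ is the largest among $O(\log_{1+\epsBuck}k)$ buckets, which again gives a $\poly(\log n,\log k,1/\epsilon)$ charge per contributor.

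What remains is to bound the number of times a single update can be charged. The key invariant is that each update contributes to the triggering condition at each level at most once: once $\ReconstructF{}(i)$ is invoked, $\Rp_i$ is reset to the new $R_i$ so that the previously pending insertions no longer inflate $|\Rp_i|-|R_i|$, and the new $R_i^{(b(i))}$ contains no deleted elements (since $R_i\cap D=\emptyset$ after reconstruction), so the previously counted deletions drop out of $|D\cap R_i^{(b(i))}|$. Moreover, an update is only ever added to $\Rp_i$ during its own invocation of \InsertF{} (Line~\ref{line:insert_add}) or lies in $R_i^{(b(i))}$ at the moment of its own deletion, so the set of levels at which it can ever be a contributor is contained in $\{1,\dots,T_u+1\}$ where $T_u$ is the number of levels at the time of the update. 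Combining both facts, each update is charged at most $(T_u+1)\cdot\poly(\log n,\log k,1/\epsilon)$ queries; taking expectations, using $\Ex{T_u+1}=\poly(\log n,\log k,1/\epsilon)$ from Lemma~\ref{lm:num_level_total}, and summing over updates via linearity of expectation delivers the claimed polylogarithmic amortized bound.

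The step I expect to be the most delicate is formally justifying the ``at most once per level'' invariant, particularly for deletions: even though a deleted element remains in $D$ for the rest of the stream, one needs to observe that the reference set $R_i^{(b(i))}$ is also refreshed at each reconstruction so that the pair $(D,R_i^{(b(i))})$ relevant to the triggering quantity starts fresh, preventing any single deletion from being charged twice at the same level.
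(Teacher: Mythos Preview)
Your overall plan is exactly the paper's: split into the per-update traversal cost and the reconstruction cost, bound the former directly by $\Ex{T}$ via Lemma~\ref{lm:num_level_total}, and amortize the latter by charging each direct call to $\ReconstructF{}(i)$ back to the insertions in $\Rp_i\setminus R_i$ or the deletions in $D\cap R_i^{(b(i))}$, together with the observation that each update is a contributor at each level at most once and that the number of relevant levels is at most $T_u+1$.

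There is, however, a genuine gap in your final step. You write that ``each update is charged at most $(T_u+1)\cdot\poly(\log n,\log k,1/\epsilon)$ queries'' and then take expectations. But the $\poly(\cdot)$ charge per (update, level) pair is only an \emph{expected} bound: the number of queries made by a given call to $\ReconstructF{}(i)$ is itself random (through the number of levels it builds and the random trials inside \CalcSampleCountF{}). So you are bounding $\Ex{\sum_i X_i}$ where both the index set (of random size $\le T_u+1$) and the summands $X_i$ are random and potentially correlated; replacing this by $\Ex{T_u+1}\cdot\poly(\cdot)$ requires showing that the expected per-level charge remains $\poly(\cdot)$ \emph{after conditioning} on the event that $u$ is charged at that level and on the value of $T_u$. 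The paper isolates this point explicitly: writing $Q_{u,i,t}$ for the charge to $u$ by level $i$ at time $t$, it decomposes $\sum_{i,t}\Ex{Q_{u,i,t}}=\sum_{i,t}\Ex{Q_{u,i,t}\mid Q_{u,i,t}>0}\,\Pr{Q_{u,i,t}>0}$ and proves $\Ex{Q_{u,i,t}\mid Q_{u,i,t}>0,\,\bT^u=T^u}\le\poly(\cdot)$ by observing that whether $u$ is charged at time $t$ (and the value of $\bT^u$) is determined by the random bits $r_1,\dots,r_{t-1}$, whereas the cost of the reconstruction at time $t$ depends only on the fresh bits $r_t$. The paper itself flags this as ``crucially using the fact that the randomness in the number of queries is independent of the randomness in $\bT$''; your sketch needs this independence argument to close.
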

\begin{proof}

  For queries that were not caused by \ReconstructF{}, each insertion or deletion
  can cause at most $\mO(\bT)$ queries where $\bT$ denotes the number of levels at the time of update,
  which is bounded by $\poly(\log(n), \log(k), \frac{1}{\epsilon})$ given Lemma~\ref{lm:num_level_total}
  and the fact that $|\Rc_1| \le n$.

  We will therefore consider the number of queries made by \ReconstructF{}. 
  The main idea is as follows.
  We charge the cost of each invocation of $\ReconstructF{}(i)$ to the updates that caused it, i.e., $\Rp_{i} \backslash R_i$ if $\ReconstructF{}(i)$ was triggered by an insertion and
  $R_i^{(b(i))} \cap D$ if it was triggered by a deletion. 
  Each time $\ReconstructF{}(i)$ is invoked for some $i$, the expected number of queries made is
  $|\Rc_i| \poly(\log(|\Rc_i|), \log(k), \frac{1}{\epsilon})$. 
  However, this cost is spread across at least $\frac{|\Rc_i|}{\poly(\log(k), \frac{1}{\eps})}$ updates because of the reconstruction condition. 
  Therefore, each update is charged at most $\poly(\log(k), \frac{1}{\eps})$ for each of the levels it affects. 
  Since the expected number of levels at the time of the update is at most $\poly(\log(k), \frac{1}{\eps})$ by Lemma~\ref{lm:num_level_total},
  the claim follows.
  Note that here we are crucially using the fact that the randomness in the number of queries is independent of the randomness in $\bT$.
  This is because
  $\bT$ is determined at the time of the update $u$, while the randomness in the number of queries
  is determined independent of the past at a later time,
  i.e., when \ReconstructF{} is actually triggered,
  and it is independent of the past since it comes from a random oracle in our algorithm.

  Formally, for an update $u$, a level $i$,
  and a time $t$,\footnote{Here ``time'' refers to an update in the stream. We use the word time to avoid confusion with $u$, and because the actual nature of the update at time $t$ will not play an important role.} let $Q_{u, i, t}$ denote the number of queries charged to $u$ by level $i$ for an invocation of $\ReconstructF{}(i)$ at time $t$. If $\ReconstructF{}(i)$ was not invoked, or $\ReconstructF{}(i)$ does not charge $u$, then set $Q_{u, i, t} = 0$. We need to show that
  for each $u$, 
  \begin{align}
      \Ex{\sum_{i, t} \bQ_{u, i, t}} \le \poly(\log(n), \log(k), \frac{1}{\epsilon}). 
      \label{eq:jashduqjha1625}
  \end{align}
  Let $T^u$ denote the number of levels when $u$ is inserted,
  we will show that
  \begin{align}
      \Ex{\sum_{i, t} \bQ_{u, i, t} \mid \bT^u = T^u} \le (T^{u} + 1)\poly(\log(n), \log(k), \frac{1}{\epsilon}). 
      \label{eq:jashduqjha1655}
  \end{align}
  This would imply \eqref{eq:jashduqjha1625} given Lemma~\ref{lm:num_level_total}. 
  To prove this, we first observe that
  \begin{align*}
      \sum_{i, t}\ind{Q_{u, i, t} > 0} \le T^{u} + 1.
  \end{align*}
  This is because for each update $u$
  and each level $i\le T^u + 1$,
  the level will only charge the update if $\ReconstructF{}(i)$ is invoked after the update (and before the invoking of $\ReconstructF{}(j)$ for some $j < i$), and the level will not charge the update twice. Therefore, the identity holds. 

  Note however that
  \begin{align*}
      \Ex{\bQ_{u, i, t} \mid \bT^u = T^u}
      = \Ex{\bQ_{u, i, t} \mid \bQ_{u, i, t} > 0,  \bT^u = T^u}\Pr{\bQ_{u, i, t} > 0 \mid \bT^{u} = T^u}.
  \end{align*}
  We claim that
  \begin{align}
      \Ex{\bQ_{u, i, t} \mid \bQ_{u, i, t} > 0, \bT^u = T^u} \le \poly(\log(n), \log(k), \frac{1}{\epsilon}).
      \label{eq:jashduqjha1705}
  \end{align}
  If this is shown, then it would imply 
  \eqref{eq:jashduqjha1655} because
  \begin{align*}
      &\sum_{i, t}\Ex{\bQ_{u, i, t} \mid \bT^u = T^u}
      \\&=
      \sum_{i, t} 
      \Ex{\bQ_{u, i, t} \mid \bQ_{u, i, t} > 0,  \bT^u = T^u}\Pr{\bQ_{u, i, t} > 0 \mid \bT^{u} = T^u}
      \\&{\le} 
      \rbr{
       \sum_{i, t} \Pr{\bQ_{u, i, t} > 0 \mid \bT^u = T^u}
      }
      \poly(\log(n), \log(k), \frac{1}{\epsilon})
      \\&{=}
      \rbr{
       \sum_{i, t} \Ex{\ind{\bQ_{u, i, t} > 0} \mid \bT^u = T^u}
      }
      \poly(\log(n), \log(k), \frac{1}{\epsilon})
      \\&=
      \rbr{
      \Ex{
        \sum_{i, t} \ind{\bQ_{u, i, t} > 0} \Bigg|~ \bT^u = T^u
       }
      }
      \poly(\log(n), \log(k), \frac{1}{\epsilon})
      \\&\le
      \rbr{
      \Ex{
        \bT^{u} + 1 \mid \bT^u = T^u
       }
      }
      \poly(\log(n), \log(k), \frac{1}{\epsilon})
      \\&=
      (T^u + 1)\poly(\log(n), \log(k), \frac{1}{\epsilon})
  \end{align*}

  We divide the proof of \eqref{eq:jashduqjha1705} into two parts.
  Let $\Rc_{i, t}$ denote the value of $\Rc_i$ right after time $t$.
  \begin{claim}
      \begin{align*}
          \Ex{\sum_{u'} \bQ_{u', i, t} ~\bigg|~ \bQ_{u, i, t} > 0, \bT^u = T,
          \bRc_{i, t} = \Rc_{i, t}
          } \le 
          |\Rc_{i, t}| \cdot
          \poly(\log(n), \log(k), \frac{1}{\epsilon})
      \end{align*}
      for all $i, t$ such that
      $\Pr{ \bQ_{u, i, t} > 0, \bT^u = T} >0$. 
  \end{claim}
  \begin{proof}
      \newcommand{\br}{\mathbf{r}}
      We assume that $t \ge u$ as otherwise $\bQ_{u, i, t} = 0$. 
      Let $r_t$ denote the random bits the algorithm uses in  time $t$.
      By law of iterated expectation, it suffices to show that
      \begin{align*}
          \Ex{\sum_{u'} \bQ_{u', i, t} ~\bigg|~ \bQ_{u, i, t} > 0, \bT^u = T,\bRc_{i, t} = \Rc_{i, t}, \br_1 = r_1, \dots, \br_{t-1}=r_{t-1}}
          \le 
          |\Rc_{i, t}| \cdot
          \poly(\log(n), \log(k), \frac{1}{\epsilon})
          .
      \end{align*}
      for all values of $r_1, \dots, r_{t-1}$ such that 
      \begin{align*}
          \Pr{\bQ_{u, i, t} > 0, \bT^u = T,\bRc_{i, t} = \Rc_{i, t}, \br_1 = r_1, \dots, \br_{t-1}=r_{t-1}}
          > 0
          .
      \end{align*}
      We note however that $\ind{\bQ_{u, i, t} > 0}, \bT^{u}$ are both a function of $\br_{1}, \dots, \br_{u-1}$.  
      Specifically, $\bT^{u}$ is a function of $\br_{1}, \dots, \br_{t-1}$ and $\ind{\bQ_{u, i, t} > 0}$ is a function of $\br_1, \dots, \br_{t-1}$ because $\bQ_{u, i, t} > 0$ if we invoke $\ReconstructF{}(i)$ at time $t$, which is determined by $\br_1, \dots, \br_{t-1}$, and if there have been no reconstructions of level $j \le i$ between times $u$ and $t$, which is again determined by $\br_1, \dots, \br_t$. 
      Therefore, we can drop 
      $\bQ_{u, i, t} > 0, \bT^u = T$ from the conditioning. We note that $\bRc_{i, t}$ is also deterministic conditioned on $\br_1, \dots, \br_{t-1}$, and it can also be dropped. Therefore, it suffices to show that
      \begin{align*}
          \Ex{\sum_{u'} \bQ_{u', i, t} ~\Bigg|~ \br_1 = r_1, \dots, \br_{t-1}=r_{t-1}}
          \le 
          |\Rc_{i, t}| \cdot
          \poly(\log(n), \log(k), \frac{1}{\epsilon})
          .
      \end{align*}
      Note however that $\sum_{u'} Q_{u, i, t}$ is the number of queries made at time $t$ for executing $\ReconstructF{}(i)$ (and is zero if $\ReconstructF{}(i)$ is not invoked). This is independent of $\br_1, \dots, \br_{t-1}$ and depends only on $\br_t$.
      Therefore, the identity follows from Lemma~\ref{lm:bound_num_query_recounstruct}.
  \end{proof}
  \begin{claim}
      If $|\Rc_{i, t}| > 0$,
      \begin{align*}
          Q_{u, i, t}  \le 
          \frac{\sum_{u'} Q_{u', i, t}}{|\Rc_{i, t}|
          }\poly(\log(n), \log(k), \frac{1}{\epsilon}).
      \end{align*}
  \end{claim}
  \begin{proof}
    If $\ReconstructF{}(i)$ is not called at time $t$, then the claim holds trivially. Otherwise, we need to show that the cost of this reconstruction is spread across at least
    $\frac{|\Rc_{i, t}|}{\poly(\log(k), \frac{1}{\eps})}$ updates.
    We will drop the subscript $t$ throughout the proof for convenience. 
    Let $R_i^{-}$ denote the value of $R_i$ before the invocation, and let $(R_i^{(b(i))})^{-}$ denote the largest bucket in $R_i^{-}$.
    Note that $R_i^{-}$ this is the same as the value of $R_i$ the previous time that level $i$ was constructed (i.e., $\ReconstructF{}(j)$ was invoked for some $j \le i$).
    Given the condition for invoking $\ReconstructF{}(i)$, there have been either $\ge \frac{3}{2} | R_i^{-}|$ insertions or
    $\ge |(R_i^{(b(i))})^{-}| \ge \frac{|R_i^{-}|}{\poly(\log(k), \frac{1}{\eps})}$ deletions since the last reconstruction. 
    Therefore, the cost is spread across at least $\frac{|R_i^{-}|}{\poly(\log(k), \frac{1}{\eps})}$ updates.
    Therefore,
    \begin{align*}
        Q_{u, i, t}  \le 
          \frac{\sum_{u'} Q_{u, i, t}}{|R_i^{-}|
          }\poly(\log(n), \log(k), \frac{1}{\epsilon}).
    \end{align*}
    Note however that $|\Rc_i| \le 2|R_i^{-}| + 1$ because right before the update that triggers reconstruction, the value of $|\Rc_i|$ is at most $2|R_i^{-}|$ given Lemma~\ref{lm:reconstruction_condition}. Therefore, as long as
    $|R_i^{-}| > 0$, we have $|R_i^{-}| \ge \frac{|\Rc_i|}{3}$, and the claim follows.
  \end{proof}

  Given the above claims, if $|\Rc_{i, t}| > 0$,
  \begin{align*}
         &\Ex{\bQ_{u, i, t} \mid \bQ_{u, i, t} > 0, \bT^u = T^u,
          \bRc_{i, t} = \Rc_{i, t}
          } 
          \\&\le 
          \frac{\Ex{\sum_{u'}\bQ_{u', i, t} \mid \bQ_{u, i, t} > 0, \bT^u = T^u,
          \bRc_{i, t} = \Rc_{i, t}
          } }{|\Rc_{i, t}|
          }\poly(\log(n), \log(k), \frac{1}{\epsilon}).
          \\&\le 
          \poly(\log(n), \log(k), \frac{1}{\epsilon}),
  \end{align*}
  This also holds if $|\Rc_{i, t}| = 0$ because in this case, no queries are charged to $u$. 
  Iterated expectation now
  implies \eqref{eq:jashduqjha1705}, finishing the proof.
  \end{proof}
The above lemma implies Theorem \ref{thm:query} as claimed.

\subsection{Proof of Theorem~\ref{thm:main}}
We use the parallel run algorithm outlined in Section~\ref{sec:alg_parallel}. 
We will show that this algorithm
has expected amortized query complexity of $\poly(\log(n), \log(k), \frac{1}{\eps})$ per update, and the expected submodular value of the output is at least $(\frac{1}{2} - O(\epsilon))f(\optSet)$. 
Each time an element is inserted or deleted, the operation affects at most $O(\poly(\log(k), \frac{1}{\eps})$ parallel runs. Given Theorem~\ref{thm:query}, the expected amortized query complexity of each of these algorithms per update is at most $\poly(\log(n), \log(k), \frac{1}{\eps})$. Therefore, the expected amortized query complexity of the algorithm per update is $\poly(\log(n), \log(k), \frac{1}{\eps})$ as claimed.

It remains to analyse the approximation guarantee.
Consider one of the runs with the parameter $OPT_p$ satisfying
$f(\optSet) \le OPT_p \le (1+\epsilon)f(\optSet)$. 
Consider a modified algorithm where instead of inserting elements
with $f(e) \in [\epsilon OPT_p/2k, OPT_p]$ into run $p$, we had inserted all elements $e$ such that $f(e) \le OPT_p$. We claim that this would not have changed the final output of the run $p$. This is because all elements with $f(e) \le OPT_p/2k$ are filtered out from $R_1$ and $\Rp_1$. Therefore, the values of $R_1$ and $\Rp_1$ (and therefore the values of $T, R_2, \dots, R_t, \Rp_2, \dots, \Rp_T, S_1, \dots S_T$) would not change, and so the output would stay the same. 
For the modified algorithm however, Theorem~\ref{thm:approx} implies that $f(G_T \backslash D) \ge (\frac{1}{2} - O(\epsilon))f(\optSet)$, finishing the proof.

\section{Acknowledgements}
The work is partially support by DARPA QuICC, NSF AF:Small  \#2218678, and  NSF AF:Small  \#2114269 
\bibliography{references}
\bibliographystyle{abbrv}

\end{document}